\newtheorem{theorem}{Theorem}
\newtheorem{lemma}{Lemma}
\newtheorem*{theorem-non}{Theorem}
\newtheorem*{lemma-non}{Lemma}
\newtheorem*{corollary-non}{Corollary}
\newtheorem*{proposition-non}{Proposition}
\newcommand{\bx}{\bm{x}}
\newcommand{\by}{\bm{y}}
\newcommand{\bw}{\bm{\mathrm{w}}}
\newcommand{\yi}{y^{(i)}}
\newcommand{\bxi}{\bm{x}^{(i)}}
\newcommand{\hatsigma}{\hat{\sigma}_n}
\newcommand{\hatkappa}{\kappa_{\Lambda}}
\newcommand{\tilderho}{\tilde{\rho}}
\newcommand{\bomega}{\bm{\omega}}
\newcommand{\llangle}{\langle\langle}
\newcommand{\rrangle}{\rangle\rangle}
\newcommand{\PX}{p_\text{X}}
\newcommand{\PY}{p_\text{Y}}
\newcommand{\PZ}{p_\text{Z}}
\DeclareMathOperator{\CRZ}{CRZ}
\DeclareMathOperator{\RX}{RX}
\DeclareMathOperator{\RZZ}{RZZ}
\DeclareMathOperator{\RZ}{RZ}
\DeclareMathOperator{\RY}{RY}
\DeclareMathOperator{\RP}{RP}
\DeclareMathOperator{\CNOT}{CNOT}
\DeclareMathOperator{\bUnitary}{\mathfrak{U}}
\DeclareMathOperator{\Tr}{Tr}
\DeclareMathOperator{\Var}{Var}
\DeclareMathOperator{\CI}{\text{CI}}
\DeclareMathOperator{\Unif}{\textnormal{Unif}}
\begin{document}

\title{Demonstration of Efficient Predictive Surrogates for Large-scale Quantum Processors} 

\author{Wei-You Liao$^{1}$}
\thanks{Equal contribution authors}

\author{Yuxuan Du$^{2,\textcolor{blue}{*}}$}
\thanks{Correspondence to:  duyuxuan123@gmail.com,\\dacheng.tao@ntu.edu.sg,\\quanhhl@ustc.edu.cn}

\author{Xinbiao Wang$^{2}$}
\thanks{Equal contribution authors}

\author{Tian-Ci Tian$^{1}$}
 
\author{Yong Luo$^{3}$}

\author{Bo Du$^{3}$}

\author{Dacheng Tao$^{2}$}
\thanks{Correspondence to:  duyuxuan123@gmail.com,\\dacheng.tao@ntu.edu.sg,\\quanhhl@ustc.edu.cn}

\author{He-Liang Huang$^{1}$}
\thanks{Correspondence to:  duyuxuan123@gmail.com,\\dacheng.tao@ntu.edu.sg,\\quanhhl@ustc.edu.cn}

\affiliation{$^1$Henan Key Laboratory of Quantum Information and Cryptography, Zhengzhou, Henan 450000, China}

\affiliation{$^2$College of Computing and Data Science, Nanyang Technological University, Singapore 639798, Singapore}

\affiliation{$^3$Institute of Artificial Intelligence, School of Computer Science, Wuhan University, Wuhan 430072, China}

\begin{abstract}
The ongoing development of quantum processors is driving breakthroughs in scientific discovery. Despite this progress, the formidable cost of fabricating large-scale quantum processors means they will remain rare for the foreseeable future, limiting their widespread application. To address this bottleneck, we introduce the concept of predictive surrogates, which are classical learning models designed to emulate the mean-value behavior of a given quantum processor with provably computational efficiency. In particular, we propose two predictive surrogates that can substantially reduce the need for quantum processor access in diverse practical scenarios. To demonstrate their potential in advancing digital quantum simulation, we use these surrogates to emulate a quantum processor with up to $20$ programmable superconducting qubits, enabling efficient pre-training of variational quantum eigensolvers for families of transverse-field Ising models and identification of non-equilibrium Floquet symmetry-protected topological phases. Experimental results reveal that the predictive surrogates not only reduce measurement overhead by orders of magnitude, but can also surpass the performance of conventional, quantum-resource-intensive approaches. Collectively, these findings establish predictive surrogates as a practical pathway to broadening the impact of advanced quantum processors.

\end{abstract}

\maketitle
\begin{figure*}[t]
	\centering\includegraphics[width=0.95\textwidth]{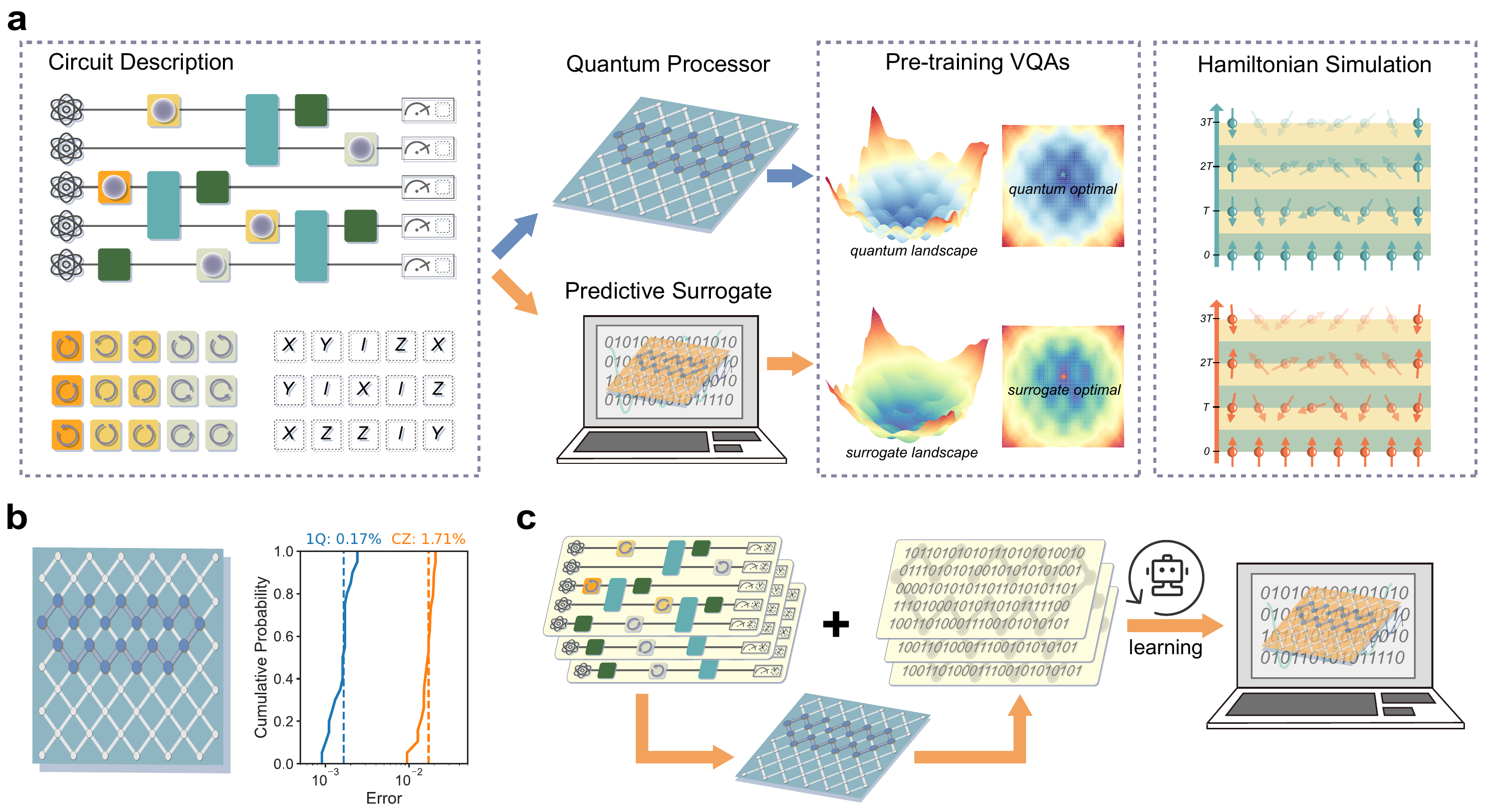}
	\caption{\small{\textbf{Framework of constructing predictive surrogates for quantum processors.} \textbf{a.}   Given the classical description of a quantum circuit $U(\bx)$ with a fixed gate layout and tunable gate parameters $\bx\in \mathcal{X}$, a predictive surrogate trained on data obtained from a quantum processor can emulate its mean-value behavior, i.e., $\Tr(\tilderho(\bx)O)$, for parameters $\bx$ spanning the entire space $\mathcal{X}$. Here, $\tilderho(\bx)$ represents the evolved state when executing $U(\bx)$ on a noisy quantum processor with a fixed initial state $\rho_0$, and $O$ refers to an observable. The optimized predictive surrogate can greatly facilitate tasks that typically require extensive access to the quantum processor, e.g., the pre-training of variational quantum algorithms and the characterization of phase diagrams in quantum many-body systems. \textbf{b.} Overview of the employed quantum processor. The left panel illustrates the 2D architecture of the superconducting quantum processor used in this study. The blue circles highlight the employed subset of superconducting qubits in experiments, while the white circles represent the remaining idle qubits on the chip. The right panel indicates the cumulative probability distributions of Pauli noise errors for parallel single-qubit gates (blue) and parallel CZ gates (orange) measured on the employed quantum processor. Besides, the dashed vertical lines denote the mean values. \textbf{c.} The construction of the predictive surrogates consists of two steps. First, the classical learner collects the training dataset, i.e., labeled data obtained from quantum processors. Then, the classical learner leverages the trigonometric monomial expansion on the collected training dataset to implement the effective predictive surrogates.}}
	\label{fig:scheme}
\end{figure*}

\section{Introduction} 
Modern quantum processors with hundreds of physical qubits~\cite{bluvstein2024logical,Google2024Quantum,gao2024establishing} have evidenced their scientific value across various domains such as tackling synthetic sampling problems~\cite{arute2019quantum,zhong2020quantum,wu2021strong,morvan2024phase}. At the same time, advancements in quantum error correction are steadily pushing the boundaries of fault-tolerant quantum computing~\cite{abobeih2022fault,krinner2022realizing,sivak2023real,google2023suppressing,bausch2024learning}. Despite the progress, the high cost of fabrication and maintenance hints that access to these advanced quantum computers will remain scarce for the foreseeable future. As a result,  quantum-resource-efficient alternatives capable of emulating the behavior of large-scale quantum processors towards practical problems are highly demanded. A critical direction in this pursuit is the development of effective and scalable methods to emulate the mean-value behavior of quantum processors, given its broad applications, including quantum physics~\cite{mi2021information,mi2022time,semeghini2021probing,andersen2025thermalization}, quantum chemistry~\cite{cao2023generation,o2023purification,guo2024experimental}, quantum system certification~\cite{eisert2020quantum}, and quantum machine learning~\cite{biamonte2017quantum,abbas2024challenges}.

Existing strategies to emulate the mean-value behavior of large-scale quantum processors can be broadly categorized into two classes: classical simulators and learning-based surrogates. Concrete examples of classical simulation encompass tensor-network~\cite{markov2008simulating,villalonga2019flexible,cirac2021matrix,pan2022simulation}, near-Clifford~\cite{bravyi2016improved,bravyi2019simulation,beguvsic2023fast}, and Pauli-path simulators~\cite{rudolph2023classical,bermejo2024quantum,angrisani2024classically,angrisani2025simulating}. However, these classical simulators share a fundamental limitation, i.e., they are incapable of accurately and effectively capturing the intrinsic noise model specific to the quantum processor being explored. As a result, a non-negligible discrepancy may arise between the simulated results and the actual performance of the quantum processor. For learning-based surrogates, most of them adopt deep learning techniques to train neural networks that emulate the behavior of quantum processors~\cite{wu2023quantum,qin2024experimental,qian2024multimodal,zhao2025rethink}. While these approaches have shown exceptional empirical results, they remain largely heuristic, lacking theoretical guarantees on their efficiency. This raises a critical question: 
\begin{center}
	\textit{Can we design provably efficient surrogates that emulate the mean-value behavior of quantum processors?}
\end{center}

Here we provide an affirmative answer to this question for estimating mean values of a quantum circuit $U(\bx)$, which consists of a gate fixed layout and $d$-dimensional input parameters $\bx$. Our focus is designing efficient predictive surrogates to emulate the mean value $\Tr(\tilde{\rho}(\bx)O)$, where $O$ is an observable and the noisy state $\tilde{\rho}(\bx)$ is generated by $U(\bx)$ on an assigned quantum processor. As shown in FIG.~\ref{fig:scheme}, given only limited access to the quantum processor, the optimized predictive surrogates can accurately estimate the mean values $\Tr(\tilde{\rho}(\bx')O)$ for any new input $\bx'$ via purely classical inference. Note that many applications rely on mean-value estimation of this form. To provide concrete solutions, we devise \textit{two provably efficient predictive surrogates} tailored to distinct scenarios, and experimentally exhibit their efficacy in digital quantum simulation~\cite{fauseweh2024quantum} using a quantum processor with up to $20$ programmable superconducting qubits.

The first predictive surrogate, denoted by $h_{\textsf{cs}}$, extends the model from Ref.~\cite{du2024efficient} to the noisy case, where it serves as a classical shadow predictor~\cite{elben2022randomized}. Specifically, $h_{\textsf{cs}}$ is able to estimate mean values of $\tilde{\rho}(\bx)$ for many local observables under Pauli noise channels~\cite{terhal2015quantum}. Our theoretical analysis reveals that when $\bx$ are independently and uniformly sampled from $[-\pi, \pi]^d$, $h_{\textsf{cs}}$ is computationally efficient to achieve an $\epsilon$-level prediction error, provided that either Pauli noise rates are relatively large or the gradient norm $\|\nabla_{\bx}\Tr(\tilde{\rho}(\bx)O)\|$ is upper bounded by a small constant. We experimentally assess the performance of $h_{\textsf{cs}}$ in pre-training variational quantum eigensolvers (VQEs) \cite{tilly2022variational} when applied to find the ground state energies of a class of 1D transverse-field Ising models (TFIMs). A notable feature of pre-training VQEs with the optimized $h_{\textsf{cs}}$ is that it enables fully classical estimation of ground state energies for all TFIMs, substantially reducing measurement overhead relative to conventional VQEs. Experimental results indicate that the ground state energies estimated by $h_{\textsf{cs}}$ are more accurate than those obtained from VQEs trained directly on the quantum processor, while using just $0.023\%$ of the measurements.

The second predictive surrogate, denoted by $h_{\mathsf{qs}}$, is designed to predict mean values $\Tr(\tilde{\rho}(\bx)O)$ with $\bx$ being correlated and a fixed observable $O$. This setting is commonly encountered in the study of quantum simulation. Different from $h_{\textsf{cs}}$, $h_{\mathsf{qs}}$ supports the estimation of mean values for input parameters $\bx$ sampled from \textit{arbitrary distributions}. We prove that when the quantum processor is modeled by Pauli noise channels and $\bx$ is sampled from a small range, $h_{\mathsf{qs}}$ is computationally efficient to reach an $\epsilon$-level prediction error. We experimentally explore the potential of $h_{\mathsf{qs}}$ by applying it to the identification of Floquet symmetry-protected phases, a distinct type of non-equilibrium states of matter~\cite{khemani2016phase}. In contrast to the original approach~\cite{zhang_digital_2022}, which demands extensive quantum processor access to estimate mean values across different $\bx$ for phase identification, our method enables orders-of-magnitude reduction in measurement overhead by training a set of $\{h_{\mathsf{qs}}\}$ to emulate mean-value behavior. Experimental results demonstrate that the optimized $\{h_{\mathsf{qs}}\}$ can reliably probe the transition region of Floquet symmetry-protected phases of time-periodic Hamiltonians. These results open a new avenue for leveraging predictive surrogates to advance the study of digital quantum simulation.

\section{Main results}\label{sec:results}

\subsection{Problem setup} 
Let us begin by establishing the mathematical framework that characterizes the mean-value behavior of noisy quantum processors. Leveraging the universality of Clifford gates with $\RZ$ gates \cite{ross2014optimal}, throughout this study, we focus on quantum circuits of the form $U(\bx) = \prod_{l=1}^{d}(\RZ(\bx_l)V_l)$, where each Clifford operation $V_l$ is composed of some Clifford gates (abbreviated as $\CI$ gates) and identity gates $\mathbb{I}_2$ with $\CI=\{H, S, \CNOT\}$. 

To account for noise, the quantum circuit $U(\bx)$ and its constituent gates are often described in terms of their quantum channel representations. Specifically, we denote the channel corresponding to $U(\bx)$ as $\mathcal{U}(\bx)$, and the channels corresponding to the individual gates $\RZ(\bx_l)$ and $V_l$ as $\mathcal{R}_{\mathcal{Z}}(\bx_l)$ and $\mathcal{V}_l$, respectively. Supported by the results of twirling operation~\cite{bennett1996mixed}, we model $\mathcal{U}({\bx})$ as being affected by Pauli noise channels~\cite{chen2024tight}, i.e., 
\begin{equation}\label{eqn:noisy-unitary-channel}
	\tilde{\mathcal{U}}({\bx}) =  \bigcirc_{l=1}^d \tilde{\mathcal{R}}_{\mathcal{Z}}(\bx_l) \circ \tilde{\mathcal{V}}_{l},
\end{equation}
where $\tilde{\mathcal{R}}_{\mathcal{Z}}(\bx_l)= \mathcal{N}_P\circ \mathcal{R_Z}(\bx_l)$, $\tilde{\mathcal{V}}_{l}=\mathcal{M}\circ \mathcal{V}_l$, and $\mathcal{N}_{P}$ and $\mathcal{M}$ denote the single-qubit and multi-qubit Pauli channels, respectively. Here $\mathcal{N}_{P}(\rho) = (1-\bar{p})\rho + \PX X\rho X+ \PY Y\rho Y + \PZ Z\rho Z$ with $\PX$, $\PY$, and $\PZ$ being the noise parameters and $\bar{p}=\PX+\PY+\PZ$. With these definitions, the mean-value behavior of a noisy quantum processor is characterized by
\begin{equation}\label{eqn:noisy-mean}
	f(\tilderho(\bx),O) = \Tr(\tilderho(\bx) O),
\end{equation}
where $O$ is a given observable and $\tilderho(\bx)$ denotes the quantum state $\rho_0$ evolved under the noisy circuit $\tilde{\mathcal{U}}({\bx})$.

The objective of the predictive surrogate is illustrated in FIG.~\ref{fig:scheme}\text{a}. That is, it is designed to learn a classical model $h(\bx, O)$ whose output approximates the true mean-value $f(\tilderho(\bx), O)$. According to statistical learning theory~\cite{vapnik1999nature}, the quality of the predictive surrogate can be quantified by measuring how closely its predictions match the true mean-value behavior. A standard metric for this purpose is the average prediction error over the data distribution $\mathbb{D}$, i.e.,
\begin{equation}\label{eq:prediction_error}
	\mathsf{R}(h) = \mathbb{E}_{\bx\sim \mathbb{D}} \left| h(\bx, O) - f(\tilderho(\bx),O)  \right|^2.
\end{equation}	
The predictive surrogate $h$ is considered computationally efficient if it can be trained in polynomial time with respect to the qubit count $N$ and the input dimension $d$, and with high probability, satisfies  $\mathsf{R}(h) \leq \epsilon$. 

Prior studies have shown that constructing efficient predictive surrogates for arbitrary quantum circuits $U(\bx)$ is unlikely, as ruled out by complexity-theoretic constraints~\cite{du2024efficient}. Motivated by this, we pursue efficient predictive surrogate design under well-defined and practically relevant scenarios, and accordingly devise two provably efficient methods, denoted by $h_{\mathsf{cs}}$ and $h_{\mathsf{qs}}$.

\subsection{Predictive surrogate $h_{\mathsf{cs}}$}
The first predictive surrogate $h_{\mathsf{cs}}$ is designed for circuits $U(\bx)$ with independently tunable $\RZ$ gates and supports arbitrary, yet $K$-local and bounded, observables $\{O\}$. The observable $O$ is said to be $K$-local if it can be written as a sum of tensor products of Pauli operators, where each term acts nontrivially on at most $K$ qubits and trivially on the rest. Additionally, the norm of the observable is bounded with $\|O\|_{\infty}\le B$. In this sense, $h_{\mathsf{cs}}$ functions as a \textit{classical shadow predictor}, capable of estimating mean values of many local observables for previously unseen quantum states $\tilderho(\bx’)$ with $\bx'\sim \mathbb{D}$, all without requiring further access to quantum processors. 

The implementation of $h_{\mathsf{cs}}$ contains two steps: (i) data collection and (ii) model construction. In Step~(i), the learner inputs various $\bxi \in [-\pi, \pi]^d$ to $\tilde{\mathcal{U}}({\bxi})$ and collects classical information of the evolved state $\tilderho(\bxi)$ under Pauli-based classical shadow~\cite{huang2020predicting} with $T$ snapshots, denoted by $\tilderho_T(\bxi)$. Following this routine, the training dataset $\mathcal{T}_{\mathsf{cs}}=\{(\bxi, \tilderho_T(\bxi))\}_{i=1}^n$ with $n$ examples is constructed. Then, in Step~(ii), the learner utilizes $\mathcal{T}_{\mathsf{cs}}$ to build the predictive surrogate. Given a new input $\bx'$, the predicted classical shadow of the state $\rho(\bx')$ yields 
\begin{equation}\label{eqn:generic-learner-state}
	\hatsigma(\bx') = \frac{1}{n}\sum_{i=1}^n\kappa_{\Lambda}\left(\bx', \bxi \right)\tilderho_T(\bxi),
\end{equation}  
where $\kappa_{\Lambda}(\bx, \bxi)=\sum_{\bomega, \|\bomega\|_0 \leq \Lambda} 2^{\|\bomega\|_0}\Phi_{\bomega}(\bx)\Phi_{\bomega}(\bxi)$ is the \textit{truncated trigonometric monomial kernel} with the basis $\Phi_{\bomega}(\bx) = \prod_{i=1}^d  [\mathbbm{1}_{\bomega_i=0} + \mathbbm{1}_{\bomega_i=1}\cdot \cos(\bx)+\mathbbm{1}_{\bomega_i=-1}\cdot \sin(\bx)]$. Refer to Supplementary Material (SM)~A for more details.

Supported by $\hatsigma(\bx')$, the predictive surrogate over the specified observable $O$ is
\begin{equation}
    \label{eqn:generic-learner}
  h_{\mathsf{cs}}(\bx', O) = \frac{1}{n}\sum_{i=1}^n\kappa_{\Lambda}\left(\bx', \bxi \right)g(\bxi,O), 
\end{equation}    
where $g(\bxi,O)=\Tr(\tilderho_T(\bxi)O)$ refers to the shadow estimation of $\Tr(\tilderho(\bxi) O)$. The following theorem indicates that even in the noisy scenario, the proposed predictive surrogate can accurately approximate the mean value $f(\tilderho(\bx),O)$ in Eq.~(\ref{eqn:noisy-mean}), where the formal statement and the proof are deferred to SM~B.

\begin{theorem}[Informal]\label{thm:learning-non-trunc}
 Following notations in Eqs.~(\ref{eqn:noisy-unitary-channel})-(\ref{eqn:generic-learner}), denote $\mathfrak{C}(\Lambda) =\{\bomega|\bomega \in \{0, \pm 1\}^d, ~s.t.~\|\bomega\|_0\leq \Lambda\}$ as the truncated frequency set with $\Lambda$ being the threshold value. Suppose that the input $\bx\sim \Unif[-\pi,\pi]^d$ follows the uniform distribution and $\mathbb{E}_{\bx}\|\nabla_{\bx} \Tr(\tilderho(\bx)O)\|_2^2\leq C$. Then, under the Pauli noise channel $\mathcal{N}_P(p_X,p_Y,p_Z)$ with $p=\min\{p_X,p_Y\}$, when  $ \Lambda=4C/\epsilon$ and the number of training examples is \[n = \tilde{\Omega}\Big(\big|\mathfrak{C}\big(\min \big\{\frac{4C}{\epsilon}, \frac{1}{2(p+p_Z)} \log\big(\frac{2B}{\sqrt{\epsilon}}\big)\big\}\big)\big| \frac{2  B^2 9^K}{\epsilon}\Big),\]  with probability at least $1-\delta$, we have $\mathsf{R}(h_{\mathsf{cs}})\leq \epsilon$ even for $T=1$ snapshot per training example.	
\end{theorem}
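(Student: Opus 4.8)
The plan is to decompose the prediction error $\mathsf{R}(h_{\mathsf{cs}})$ into three sources: (a) the bias incurred by truncating the trigonometric Fourier expansion of $\bx \mapsto \Tr(\tilderho(\bx)O)$ at frequency level $\Lambda$; (b) the statistical error from estimating the surviving Fourier coefficients with only $n$ samples and $T=1$ shadow snapshot each; and (c) the interplay between these, controlled via the kernel $\kappa_\Lambda$. The key structural fact is that, because $U(\bx)=\prod_l \RZ(\bx_l)V_l$, the function $\Tr(\tilderho(\bx)O)$ is a trigonometric polynomial whose Fourier support lies in $\{0,\pm 1\}^d$; under the uniform distribution the monomials $\Phi_{\bomega}$ are orthogonal, so $\mathsf{R}(h_{\mathsf{cs}})$ splits cleanly as $\sum_{\bomega \notin \mathfrak{C}(\Lambda)} \hat{a}_{\bomega}^2 + (\text{variance term})$, where $\hat{a}_{\bomega}$ are the true coefficients. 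I would first establish this orthogonal decomposition (it is essentially the analysis of Ref.~\cite{du2024efficient} carried over to channels), then bound each piece.

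For the truncation bias (a), I would use two independent estimates and take the better one, which is exactly why the $\min$ appears in the statement. The first estimate is the \emph{smoothness} bound: by Parseval, $\mathbb{E}_{\bx}\|\nabla_{\bx}\Tr(\tilderho(\bx)O)\|_2^2 = \sum_{\bomega} \|\bomega\|_0\, \hat{a}_{\bomega}^2 \le C$, so the tail mass beyond level $\Lambda$ is at most $C/\Lambda$; choosing $\Lambda = 4C/\epsilon$ makes this $\le \epsilon/4$. The second estimate is the \emph{noise-decay} bound: each factor $\tilde{\mathcal{R}}_{\mathcal{Z}}(\bx_l) = \mathcal{N}_P \circ \mathcal{R}_{\mathcal{Z}}(\bx_l)$ contracts the nontrivial Pauli components, so any Fourier coefficient indexed by $\bomega$ picks up a multiplicative factor that decays exponentially in $\|\bomega\|_0$ with rate governed by $p_X,p_Y,p_Z$ — concretely a factor like $(1-2(p+p_Z))^{\|\bomega\|_0}$ or similar. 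Working out the precise contraction constant for a single $\RZ$-gate-plus-Pauli-channel block in the Pauli transfer matrix picture, and checking that it compounds multiplicatively across the $d$ layers, gives a tail bound of the form $B^2 e^{-2(p+p_Z)\Lambda}$, and setting this $\le \epsilon$ yields the $\tfrac{1}{2(p+p_Z)}\log(2B/\sqrt{\epsilon})$ threshold. Taking whichever $\Lambda$ is smaller controls the bias while keeping $|\mathfrak{C}(\Lambda)|$ — hence $n$ — as small as possible.

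For the variance term (b), I would invoke the standard classical-shadow moment bounds: a single Pauli-shadow snapshot gives an unbiased estimator of $\Tr(\tilderho(\bxi)O)$ whose variance, for a $K$-local bounded observable, is $O(4^K \|O\|_\infty^2)$, i.e. $O(B^2 3^K)$ up to constants after the usual locality accounting — this is where the $9^K$ (from $3^K$ squared, or from summing over $K$-local terms) and the $B^2$ enter. Plugging the $n^{-1}\sum_i \kappa_\Lambda(\bx',\bxi) g(\bxi,O)$ estimator into a bias–variance decomposition, the variance scales like $|\mathfrak{C}(\Lambda)| \cdot B^2 9^K / n$ (each surviving coefficient is estimated with the $2^{\|\bomega\|_0}$-weighted empirical average, and the weights are what produce the $|\mathfrak{C}(\Lambda)|$ factor rather than $2^\Lambda$). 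Demanding this be $\le \epsilon/2$ gives the stated $n = \tilde\Omega(|\mathfrak{C}(\Lambda)| B^2 9^K / \epsilon)$, with the $\tilde\Omega$ absorbing a $\log(1/\delta)$ from a concentration step (median-of-means or a matrix/Bernstein bound over the $|\mathfrak{C}(\Lambda)|$ coefficients, then a union bound). Combining (a) and (b) gives $\mathsf{R}(h_{\mathsf{cs}}) \le \epsilon/4 + \epsilon/2 \le \epsilon$ with the desired probability.

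The main obstacle I anticipate is step (a)'s noise-decay bound: one must show rigorously that inserting a Pauli channel after \emph{every} $\RZ$ gate damps the high-frequency Fourier content by a clean exponential factor, and that this survives the Clifford conjugations $\mathcal{V}_l$ (which permute and sign-flip Pauli labels but preserve weights) and the multi-qubit channels $\mathcal{M}$. The subtlety is that the $\RZ$ gate only acts on one qubit while the Fourier index $\bomega$ tracks $d$ gates, so one needs to argue that each unit of $\|\bomega\|_0$ forces the corresponding Pauli-transfer-matrix trajectory to pass through an $X$- or $Y$-type component at the relevant qubit at least once, thereby incurring the $(1-\text{something}\cdot(p+p_Z))$ factor — and that these factors do not cancel against anything. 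Getting the constant right (and matching it to the $p=\min\{p_X,p_Y\}$ appearing in the theorem) is the delicate part; everything else is a routine, if careful, bias–variance calculation on top of known classical-shadow and trigonometric-expansion machinery.
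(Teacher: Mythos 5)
Your proposal follows essentially the same route as the paper's proof: the same triangle-inequality (or equivalently, orthogonality-under-uniform-measure) split into a truncation term and an estimation term, the same two alternative bounds on the truncation term — the gradient/Parseval argument giving $C/\Lambda$, and the Pauli-channel contraction argument (the paper imports it from Fontana et al.'s Theorem 2, proved via the PTM factors $\widetilde{\mathsf{D}}_{\pm 1}$ each incurring a $q_X$ or $q_Y$ multiplier) giving the exponential decay in $\Lambda$ — and the same classical-shadow concentration argument with magnitude bound $3^K B$ yielding the $|\mathfrak{C}(\Lambda)| B^2 9^K /\epsilon$ sample complexity up to a $\log(|\mathfrak{C}(\Lambda)|/\delta)$ factor. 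The constant in your noise-decay exponent differs by a factor of two from the cited lemma and the Parseval identity should carry a $2^{-\|\bomega\|_0}$ normalization weight, but neither affects the structure of the argument or the stated theorem.
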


The results suggest that the sample complexity $n$ does not explicitly depend on the qubit count $N$ and polynomially scales with $d$ for small $C$. Ref.~\cite{du2024efficient} has empirically shown that the quantity $C$ typically decreases with the increased $N$ and the circuit depth. In the extreme case of barren plateaus, $C$ becomes exponentially small~\cite{larocca2025barren}. This favorable scaling behavior ensures the efficacy of $h_{\mathsf{cs}}$ when applied to emulate scalable quantum processors. 

In addition, the collected classical shadows $\{\tilderho_T(\bxi)\}$ inherently encode the noise characteristics of the explored quantum processor. This allows $h_{\mathsf{cs}}$ to accurately emulate the mean values of noisy quantum states without explicitly characterizing the underlying Pauli noise. This distinguishes our approach from classical simulators~\cite{beguvsic2023fast,tindall2024efficient,fontana2023classical}, which require expensive computational overhead to achieve the same goal~\cite{chen2024tight}.  

Moreover, different from the ideal case, the scaling of $n$ is dominated by the noise parameters, as reflected in the term $\min \{4C/\epsilon, 1/(2(p+p_Z))\}$. As a result, higher levels of Pauli noise reduce both $1/(2(p+p_Z))$ and $C$~\cite{wang2021noise}, which in turn decreases the required $n$. This observation highlights the potential of  $h_{\mathsf{cs}}$ for advancing the practical utility of near-term quantum devices.

\noindent\textbf{Remark}. The achieved results in Theorem~\ref{thm:learning-non-trunc} can be effectively extended to a broader class of quantum circuits with arbitrary Pauli rotation gates and Clifford gates, as explained in SM~A. In SM~E, we prove the computational efficiency of $h_{\mathsf{cs}}$, scaling with $\mathcal{O}(n^2NT)$.

\subsection{Predictive surrogate $h_{\mathsf{qs}}$}
The second predictive surrogate $h_{\mathsf{qs}}$ is tailored for $U(\bx)$ with correlated parameters among $\RZ$ gates. This setting is commonly adopted in quantum simulations relevant to many-body physics~\cite{fauseweh2024quantum}  and quantum chemistry~\cite{mcardle2020quantum}, as well as in quantum machine learning~\cite{zhang2022gaussian}.
Specifically, the quantum circuit in Eq.~(\ref{eqn:noisy-unitary-channel}) yields a layer-wise layout, i.e., $U(\bx)=\prod_{l=1}^{L}(\prod_{j=1}^{d/L}\RP(\bx_j)V_j)$, where $\RP$ refers to an arbitrary Pauli rotation gate, and the entries of $\bx\in \mathbb{R}^d$ vary within each layer but remain the same across different layers. Without loss of generality, it is sufficient to set $\RP=\RZ$. The surrogate $h_{\mathsf{qs}}$ aims to predict the mean value $f(\tilderho(\bx'),O)$ in Eq.~(\ref{eqn:noisy-mean}) given a predefined  observable $O$ and any new input $\bx'$.

The implementation of $h_{\mathsf{qs}}$ contains the same two steps as with $h_{\mathsf{cs}}$, i.e., data collection and model construction, but differs in their execution. First, the learner builds the training dataset $\mathcal{T}_{\mathsf{mb}}=\{(\bxi,\yi)\}_{i=1}^n$. The inputs $\bxi$ are sampled from an arbitrary distribution within the bounded interval $[-R,R]^d$, and $\yi$ is the estimated mean value of the observable $O$ obtained by $T$ shots. Here we suppose that the maximal estimation error satisfies $\max_{i\in [n]} |\yi-\Tr(\tilderho(\bxi)O)| \le \epsilon_l$. Then, the predictive surrogate is defined by 
\begin{equation}\label{eqn:surrogate_correlated_para}
	h_{\mathsf{qs}}(\bx,\widehat{\bm{\mathrm{w}}})=\braket{\bm{\Phi}_{\mathfrak{C}(\Lambda)}(\bx), \widehat{\bm{\mathrm{w}}}},
\end{equation}
 where $\widehat{\bm{\mathrm{w}}}$ is obtained by minimizing the  loss function with $\min_{\bm{\mathrm{w}}} \frac{1}{n}\sum_{i=1}^n \left(\yi- h_{\mathsf{qs}}(\bxi,\bm{\mathrm{w}})\right)^2 + \lambda \|\bm{\mathrm{w}}\|_2$. The notation $\bm{\Phi}_{\mathfrak{C}(\Lambda)}=[\Phi_{\bomega}(\bx)]_{\bomega \in \mathfrak{C}(\Lambda)}$ represents a feature vector with $|\mathfrak{C}(\Lambda)|$ dimensions formed by the truncated trigonometric monomial $\Phi_{\bomega}(\bx)$, and $\lambda$ refers to the regularization parameter.  For convenience,  we denote the optimized surrogate $h_{\mathsf{qs}}(\bx, \widehat{\bm{\mathrm{w}}})$ as $h_{\mathsf{qs}}(\bx)$ or $h_{\mathsf{qs}}$, whenever no ambiguity arises. The subsequent theorem provides a provable guarantee for the efficiency of  $h_{\mathsf{qs}}$, where the formal description and proof are provided in SM~C.

\begin{theorem}[Informal]\label{thm:ridge_predict_error}
 Following notations in Eqs.~(\ref{eqn:noisy-unitary-channel})-(\ref{eqn:surrogate_correlated_para}), let $q=1-2(p+\PZ)$ and $\epsilon=16B^2(deq(1+R)/\Lambda)^{2\Lambda}$. Suppose $q(1+R)<1/e$ and $\epsilon_l\le\sqrt{\epsilon}/4$. Then, when the frequency is truncated to $ \Lambda> deq(1+R)$ and the number of training examples is \[n=\left(\frac{1}{q(1+R)}\right)^{4deq(1+R)}\cdot \frac{\log(1/\delta)}{9},\]with probability at least $1-\delta$, we have $\mathsf{R}(h_{\mathsf{qs}})\leq \epsilon$. 
\end{theorem}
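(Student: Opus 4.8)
The plan is to bound the prediction error $\mathsf{R}(h_{\mathsf{qs}})$ by decomposing it into (i) a \emph{truncation} error, controlling how well the exact mean value $\Tr(\tilderho(\bx)O)$ is approximated by a trigonometric polynomial supported on the frequency set $\mathfrak{C}(\Lambda)$, and (ii) a \emph{statistical/estimation} error, controlling how well ridge regression recovers the relevant coefficients from $n$ noisy samples. The key structural fact I would establish first is that $\Tr(\tilderho(\bx)O)$, as a function of the correlated parameters $\bx\in[-R,R]^d$, admits a Fourier (trigonometric monomial) expansion $\sum_{\bomega}\hat c_{\bomega}\Phi_{\bomega}(\bx)$ in which the coefficients decay \emph{geometrically in $\|\bomega\|_0$} because of the Pauli noise. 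Concretely, passing each $\RZ(\bx_l)=\cos(\bx_l/2)\,\mathbb{I} - i\sin(\bx_l/2)\,Z$ through a Pauli channel and through the Heisenberg evolution of $O$, every occurrence of a nontrivial frequency on coordinate $j$ forces a Pauli operator to be "hit" by the noise channel at least once per layer; since the layer-wise layout repeats each $\bx_j$ exactly $L$ times, a frequency-$\bomega$ term picks up a factor bounded by $q^{\,\text{(something proportional to }\|\bomega\|_0)}$ where $q=1-2(p+\PZ)$ is the worst-case Pauli-channel contraction on the $X$/$Y$ components. Combining this geometric decay with the crude count $|\{\bomega:\|\bomega\|_0=k\}|\le (2d)^k$ and $\|O\|_\infty\le B$ gives a tail bound of the form $\sum_{\|\bomega\|_0>\Lambda}|\hat c_{\bomega}|^2 \lesssim B^2\big(d e q(1+R)/\Lambda\big)^{2\Lambda}$, which is exactly the quantity called $\epsilon/16$ in the statement — the $(1+R)$ factor arising from bounding $|\Phi_{\bomega}(\bx)|$ and the argument range, and the $e$ and $1/\Lambda$ from Stirling applied to the multinomial/binomial counts. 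This establishes that with $\Lambda>deq(1+R)$ the truncation bias is at most $O(\sqrt{\epsilon})$ in $L^2(\DD)$.

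Next I would handle the estimation error for the ridge estimator $\widehat{\bm{\mathrm{w}}}=\arg\min_{\bm{\mathrm{w}}} \frac1n\sum_i(\yi-\langle\bm{\Phi}_{\mathfrak{C}(\Lambda)}(\bxi),\bm{\mathrm{w}}\rangle)^2+\lambda\|\bm{\mathrm{w}}\|_2$. The target function restricted to $\mathfrak{C}(\Lambda)$ has a bounded norm (again by the geometric-decay bound on the coefficients, now summed over the truncated set), so a standard generalization argument for (Tikhonov/$\ell_2$-penalized) least squares over the fixed $|\mathfrak{C}(\Lambda)|$-dimensional feature map applies: with an appropriate $\lambda$, the excess risk is controlled by a term scaling like (effective dimension or Rademacher complexity of the feature class) $\times (B^2+\epsilon_l^2)/n$ together with the label noise $\epsilon_l\le\sqrt{\epsilon}/4$ contributing additively. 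The dimension $|\mathfrak{C}(\Lambda)|$ is at most $\sum_{k\le\Lambda}\binom{d}{k}2^k \le (2d)^\Lambda$-ish, and one then checks that choosing $n = (1/(q(1+R)))^{4deq(1+R)}\cdot\log(1/\delta)/9$ makes this whole estimation term $\le O(\sqrt{\epsilon})$ as well; the peculiar exponent is just $|\mathfrak{C}(\Lambda)|$ re-expressed via the near-optimal $\Lambda\approx deq(1+R)$ from part (i). A concentration inequality (Hoeffding/Bernstein over the $n$ i.i.d.\ samples, plus a union bound over a net of coefficient vectors, or a direct matrix-concentration argument on the empirical feature covariance) delivers the "with probability at least $1-\delta$" clause. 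Squaring and adding the two $O(\sqrt\epsilon)$ contributions and absorbing constants into the $16B^2$ prefactor yields $\mathsf{R}(h_{\mathsf{qs}})\le\epsilon$.

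The main obstacle I anticipate is part (i): rigorously proving the \emph{geometric decay of the Fourier coefficients in $\|\bomega\|_0$} for the correlated, layer-wise circuit. One must carefully track how a fixed parameter $\bx_j$ appearing in $L$ different layers contributes to a given Fourier frequency — the frequency on coordinate $j$ is a sum of $\pm$ contributions across layers, so a low-$\|\bomega\|_0$ term can still involve many gates, and the noise-suppression factor must be attributed to the \emph{right} gates. The cleanest route is probably a Pauli-path / Heisenberg-picture expansion of $\Tr(\tilderho(\bx)O)$: expand $O$ backwards through the channel, write each path's amplitude as a product over layers of $\cos/\sin$ factors times Pauli-channel eigenvalues, group paths by the resulting frequency vector $\bomega$, and show that any path contributing a frequency with $\|\bomega\|_0=k$ must have at least $\sim k$ layers in which a nontrivial $X$ or $Y$ component survives the channel, giving the $q^{\Theta(k)}$ factor; summing the geometric series over all such paths and all $\bomega$ with $\|\bomega\|_0>\Lambda$ then produces the stated bound. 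A secondary technical nuisance is ensuring the condition $q(1+R)<1/e$ is exactly what is needed to make both the tail sum and the sample-count exponent finite and to let Stirling convert factorials into the clean $(\cdot/\Lambda)^{2\Lambda}$ form; I would verify this bookkeeping last, after the two main bounds are in place.
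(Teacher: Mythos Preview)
Your high-level decomposition (truncation error from noise-induced coefficient decay, plus a ridge-regression estimation error) matches the paper's route, but two points in your plan would not go through as written and are worth correcting.

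First, the truncation step. You propose to control $\sum_{\|\bomega\|_0>\Lambda}|\hat c_{\bomega}|^2$ and conclude an $L^2(\DD)$ bound. That Parseval-type step fails here because $\DD$ is \emph{arbitrary} on $[-R,R]^d$: the trigonometric monomials $\Phi_{\bomega}$ are not orthonormal under $\DD$, so the coefficient-sum is not the $L^2(\DD)$ error. The paper instead bounds the \emph{worst-case} truncation error $\max_{\bx}|f_{\mathfrak{C}(\Lambda)}(\bx)-\Tr(\tilderho(\bx)O)|\le \sum_{\|\bomega\|_0>\Lambda}|\tilde{\bm\alpha}_{\bomega}|\cdot|\Phi_{\bomega}(\bx)|$, using $|\tilde{\bm\alpha}_{\bomega}|\le q^{\|\bomega\|_0}B$ and $|\Phi_{\bomega}(\bx)|\le R^{\|\bomega_{\#-1}\|_0}$, then the double binomial sum $\sum_{\|\bomega\|_0=i}\sum_{\|\bomega_{\#-1}\|_0=j}\binom{d}{i}\binom{i}{j}q^i R^j=\binom{d}{i}(q(1+R))^i$ and Stirling give the $(deq(1+R)/\Lambda)^{\Lambda}$ form. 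This sup-norm bound is also exactly what feeds the estimation step: the paper invokes a standard kernel-ridge generalization inequality (Mohri et al., Theorem~11.11), and the training error of the minimizer $h_{\mathsf{qs}}$ is upper-bounded by the training error of $f_{\mathfrak{C}(\Lambda)}$ (which lies in the hypothesis class once $\lambda\ge\|\tilde{\bm\alpha}_{\mathfrak{C}(\Lambda)}\|_2$), which in turn is at most $(\text{sup truncation error}+\epsilon_l)^2$. No net argument or matrix concentration is needed.

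Second, the obstacle you flag --- attributing noise factors to gates when a single parameter $\bx_j$ appears in $L$ layers --- is not actually an obstacle. The frequency vector $\bomega$ lives in $\{0,\pm1\}^d$ with one coordinate \emph{per gate}, not per distinct parameter; the parameter correlation only collapses the \emph{feature} $\Phi_{\bomega}(\bx)$ into a product of powers $\prod_j\cos^{N_j^+}\!\sin^{N_j^-}$, while the coefficient bound $|\tilde{\bm\alpha}_{\bomega}|\le q^{\|\bomega\|_0}|\bm\alpha_{\bomega}|$ (from the PTM decomposition of each noisy $\RZ$) counts nonzero entries gate-by-gate and holds verbatim. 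So no Pauli-path regrouping or cancellation argument is required beyond the single-gate calculation.
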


The achieved results indicate the capabilities of $h_{\mathsf{qs}}$ when applied to approximate  $f(\tilderho(\bx),O)$ with correlated $\bx$. Concretely, the sample complexity and computational complexity of implementing $h_{\mathsf{qs}}$ scale polynomially with the input dimension $d$ when the noise parameters satisfy $q \le \mathcal{O}(1/d)$. This additional requirement compared to $h_{\mathsf{cs}}$ stems from the fact that the former considers more general settings. Different from $h_{\mathsf{cs}}$ designed for the uncorrelated $\bx$ with a uniform input distribution, $h_{\mathsf{qs}}$ is tailored to accommodate the correlated $\bx$ and arbitrary input distributions, making it applicable to a broader range of scenarios. This increased flexibility comes at the expense of involving additional conditions.

\noindent \textbf{Remark.}  In SM~D, we provide an alternative solution of $h_{\mathsf{qs}}$, which can efficiently emulate the mean-value behavior of $f(\tilderho(\bx),O)$ with low noise levels (i.e., $q>\mathcal{O}(1/d)$), provided that the range $R$ is small.

\subsection{Applications to digital quantum simulation} 
A crucial feature of predictive surrogates is that once trained, they can predict mean values during inference without any further access to the quantum processor. As a result, when the task is to estimate mean values across diverse $\{\rho(\bx)|\bx\}$, the number of required quantum measurements, and thereby overall quantum processor usage, can be greatly reduced. This advantage is most evident in measurement-intensive applications. To illustrate this, we next present how $h_{\mathsf{cs}}$ and $h_{\mathsf{qs}}$ can be employed to advance two crucial tasks in digital quantum simulation, i.e., variational quantum eigensolvers (VQEs)~\cite{tilly2022variational} and quantum phase identification, where more details are given in SM~F.
 
VQEs leverage the variational principle to compute the ground state energy of quantum systems. Given a Hamiltonian $\mathsf{H}$, VQE adopts the parameter shift rule~\cite{schuld2019evaluating} to iteratively update $\bx$ to minimize $f(\tilderho(\bx), \mathsf{H})$ in Eq.~(\ref{eqn:noisy-mean}), often incurring tremendous access to quantum processors. As indicated in Ref.~\cite{du2024efficient}, a well-optimized $h_{\mathsf{cs}}$ can alleviate the expensive measurement overhead via \textit{pre-training}. Instead of directly optimizing $f(\tilderho(\bx), \mathsf{H})$, the targeted parameters can be obtained by minimizing $h_{\mathsf{cs}}(\bx, \mathsf{H})$, i.e.,
\begin{equation}\label{eqn:pre-train-VQE}
	  \min_{\bx} h_{\mathsf{cs}}(\bx, \mathsf{H}).
\end{equation}
This process is entirely classical and does not involve any access to quantum processors. Given access to the optimized $\hat{\bx}$, the unitary $U(\hat{\bx})$ can, if desired, be deployed on quantum processors for additional fine-tuning.   

\begin{figure*}[t]
	\centering\includegraphics[width=1\textwidth]{./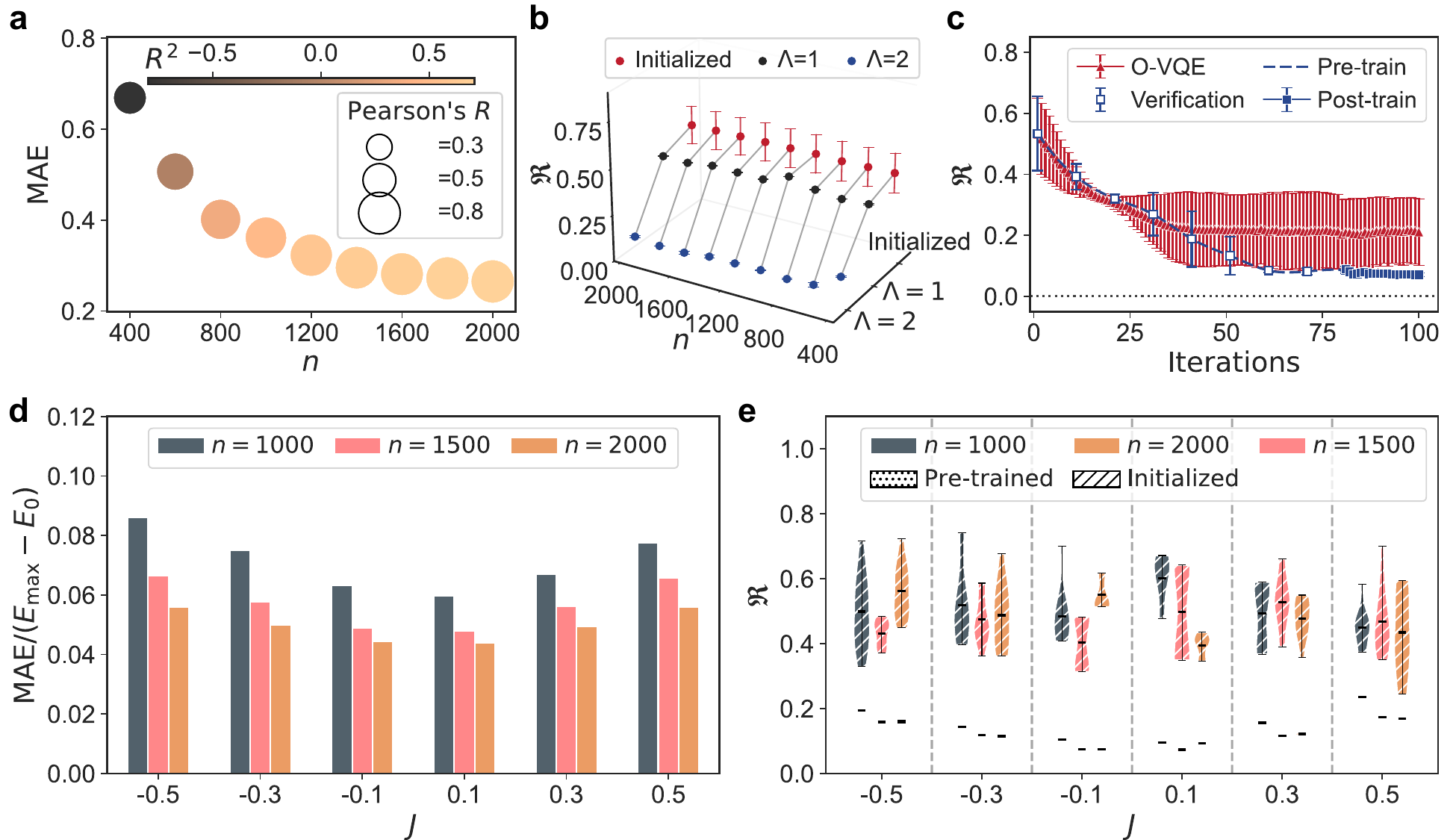}
	\caption{\small{\textbf{Experimental results of using the predictive surrogate $h_{\mathsf{cs}}$ to pre-train VQE for a family of 1D Transverse Field Ising Models.}} \textbf{a.} Quantifying the difference between the outputs of the employed quantum processor and the predictive surrogate in the measures of mean absolute error (MAE), coefficient of determination ($R^2$), and Pearson correlation coefficient ($R$). \textbf{b.} Normalized deviation $\mathfrak{R}(\bx)$ of the estimated energy $f(\tilderho(\bx), \mathsf{H}_{\mathsf{TFIM}})$ from the ground state energy $E_0(\mathsf{H}_{\mathsf{TFIM}})$, evaluated on both randomly chosen parameters (highlighted by red dots with the label `Initialized') and parameters optimized by the predictive surrogate $h_{\mathsf{cs}}$ with varying the size of the training dataset $n$ and the truncation threshold $\Lambda$ (i.e., $\Lambda=1,2$ are highlighted by the black and blue dots, respectively). \textbf{c.} Optimization dynamics of the normalized deviation $\mathfrak{R}(\bx)$ over 100 iterations for the predictive surrogate $h_{\mathsf{cs}}$ compared to the original VQE, corresponding to the labels of `Pre-train' and `O-VQE'. The blue solid box, labeled  `Post-train', refers to the optimization dynamic of the additional fine-tuning on the quantum processor. The blue hollow box, labeled `Verification', indicates the mean values measured on the quantum processor at the parameters optimized by the predictive surrogate.   \textbf{d.} Normalized MAE between quantum processor outputs and predictions from three optimized $h_{\mathsf{cs}}$, each trained on datasets of varied sizes (i.e., $n=1000, 1500, 2000$), evaluated for a set of  $\mathsf{H}_{\mathsf{TFIM}}$ with varying $J\in \{\pm 0.1,\pm 0.3, \pm 0.5\}$ and fixed $h=-0.5$. \textbf{e.} Normalized deviation $\mathfrak{R}(\bx)$ for randomly chosen parameters (highlighted by the hatched box with the label `Initialized') and parameters optimized by the predictive surrogate (highlighted by the dotted box with the label `Pre-trained') when $J$ in $\mathsf{H}_{\mathsf{TFIM}}$ ranges from $-0.5$ to $0.5$ and fixed $h=-0.5$.  All the results presented in the subplot are statistical data collected from $20$ independent experiments, with the standard deviations highlighted by shaded regions and error bars.}
	\label{fig:fig2}
\end{figure*}
 
The predictive surrogate can also facilitate the identification of phase transitions in quantum many-body systems. A representative example is identifying Floquet symmetry-protected topological (FSPT) phases~\cite{zhang_digital_2022}, which are exotic non-equilibrium states of matter arising in periodically driven systems~\cite{ponte2015periodically}. Define a 1D spin-$1/2$ chain governed by the time-periodic Hamiltonian as
\begin{equation}
	\mathsf{H}_{\mathsf{tp}}(\mathrm{t}) = \begin{cases}
	\mathsf{H}_{\mathsf{tp},1} = (\frac{\pi}{2}-\delta)\sum_{i}X_i, & 0 \leq \mathrm{t} < \mathrm{T}_1, \\
	\mathsf{H}_{\mathsf{tp},2} = -\sum_i J_iZ_{i-1}X_i Z_{i+1}, & \mathrm{T}_1 \leq t < 2\mathrm{T}_1.
			\end{cases} \nonumber
\end{equation}
Here $P_i\in \{X, Y, Z\}$ refers to the Pauli operator on the $i$-th qubit. Through varying the drive perturbation $\delta$, two dynamical phases emerge:  FSPT phase and thermal phase. These two phases can be characterized by the local magnetization $\braket{Z_i(\bx)}=\bra{0^N}U(\bx)^{\dagger}Z_iU(\bx) \ket{0^N}$, where the gate layout of $U(\bx)$ depends on $\mathsf{H}_{\mathsf{tp}}(\mathrm{t})$, and $\bx$ is determined by the specified disordered coupling parameters $\{J_i\}$, $t$, and $\delta$. Refer to SM~E for more details about the circuit implementation. In this regard, identifying FSPT phases typically requires a large number of measurements to estimate $\braket{Z_i(\bx)}$ with varied $\bx$. The proposed $h_{\mathsf{qs}}$ can effectively address this bottleneck.

\section{Experimental results} 
We experimentally validate the efficacy of the proposed predictive surrogates in advancing digital quantum simulation. All experiments are carried out on a contiguous $20$-qubit chain within a 66-qubit programmable superconducting processor, which comprises a $6\times 11$ array of frequency-tunable transmon qubits with tunable couplers. See SM~G for a full device characterization and the omitted experimental results.

The first task is to explore the potential of $h_{\mathsf{cs}}$ to enhance VQE optimization in finding the ground state energy of 1D transverse field Ising models (TFIMs). The formal expression is $\mathsf{H}_{\mathsf{TFIM}}=-J\sum_iZ_iZ_{i+1}-h\sum_iX_i$. The initial state is $\rho_0=(\ket{+}\bra{+})^{\otimes N}$ and the  adopted ansatz in VQE follows the Trotter decomposition, i.e., \[U(\bx)=\prod_{l=1}^L \Big(\prod_{i=1}^N  \RX_i(\bx_{l,i}) \prod_{i=1}^{N-1}\RZZ_{i,i+1}(\bx_{l,N+i})\Big),\] 
where the input dimension is $d=L(2N-1)$. 

Our first subtask is to quantify the performance of $h_{\mathsf{cs}}$ in emulating the mean-value behavior of the VQE under study. The TFIM considered here is set to $N=6$, $J=-0.1$, and $h=-0.5$. Moreover, the dataset $\mathcal{T}_{\mathsf{cs}}$ is constructed by uniformly sampling  $\bx$, where the number of snapshots for each state $\rho(\bx)$ is fixed to be $T=10$. The threshold of truncation is set to $\Lambda=2$. The performance of $h_{\mathsf{cs}}(\bx)$ over $200$ test examples is shown in FIG.~\ref{fig:fig2}\text{a}. In particular, under the measures of the mean absolute error (MAE), Pearson correlation coefficient, and $R^2$ score, the optimized $h_{\mathsf{cs}}(\bx)$ exhibits a strong agreement with  $f(\tilde{\rho}(\bx), \mathsf{H}_{\mathsf{TFIM}})$ given $n \geq  1600$. These results not only validate the effectiveness of $h_{\mathsf{cs}}(\bx)$, but also serve as a crucial prerequisite for successfully pre-training VQE.

We now proceed to examine the ability of the optimized $h_{\mathsf{cs}}$ in estimating the ground state energy of $\mathsf{H}_{\mathsf{TFIM}}$. To this end, we use the normalized deviation as a metric to quantify its performance, i.e., 
\[\mathfrak{R}(\bx) = \frac{|f(\tilderho(\bx),\mathsf{H}_{\mathsf{TFIM}})-E_0(\mathsf{H}_{\mathsf{TFIM}})|}{(E_{\max}(\mathsf{H}_{\mathsf{TFIM}})-E_0(\mathsf{H}_{\mathsf{TFIM}}))},\] 
where $\bx$ refers either to the initialized inputs or the optimized parameters returned by $h_{\mathsf{cs}}$. Here  $E_0(\mathsf{H}_{\mathsf{TFIM}})$ and the dominator refer to the ground state energy and the spectrum width of $\mathsf{H}_{\mathsf{TFIM}}$, respectively. We repeat the evaluation $20$ times, each time using the same set of initialized parameters, followed by the parameter optimization described in Eq.~(\ref{eqn:pre-train-VQE}). As depicted in FIG.~\ref{fig:fig2}\text{b}, in every case, the average $\mathfrak{R}(\bx)$ before pre-training remains near $0.482$, regardless of $n$ used to implement $h_{\mathsf{cs}}$. In addition, when $\Lambda=1$, the estimated  $f(\tilde{\rho}(\bx), \mathsf{H}_{\mathsf{TFIM}})$ with $\hat{\bx}$ being return by pre-training $h_{\mathsf{cs}}$ still suffer from a high $\mathfrak{R}(\hat{\bx})$ with all above $0.42$. In contrast, when $\Lambda=2$ and $n>600$,  $\mathfrak{R}(\hat{\bx})$ is less than $0.09$ after pre-training. These results hint that when  $\Lambda$ surpasses a critical value, its performance improves with a large sampling size $n$, aligning with theoretical results in Theorem~\ref{thm:learning-non-trunc}. 

To further comprehend the potential of pre-training VQE, we compare the optimization dynamics of the predictive surrogate with those of the original VQE. Specifically, the explored $\mathsf{H}_{\mathsf{TFIM}}$ in this analysis is identical to that employed in the previous subtask, and the hyperparameter settings of $h_{\mathsf{cs}}$ are $n=2000$, $T=10$, and $\Lambda=2$. The learning rate is set as 0.1, and the total number of iterations employed in pre-training VQE and the original VQE is fixed to be $100$. We repeat each setting 5 times to acquire the statistical results. As shown in FIG.~\ref{fig:fig2}\text{c}, after the pre-training phase, the VQE already outperforms the original VQE, yielding $\mathfrak{R} = 0.09$ compared to $0.21$. Remarkably, this improvement is attained while using only 0.023\% of the quantum measurement resources required by the original VQE (Refer to SM G for details). More interestingly, the obtained $\mathfrak{R}$ during the pre-training stage is near optimal, while only marginal gains are observed when further fine-tuning $\hat{\bx}$ on the quantum processor, improving $\mathfrak{R}$ from $0.09$ to $0.07$.
 
\begin{figure}[t]
\centering\includegraphics[width=0.4\textwidth]{./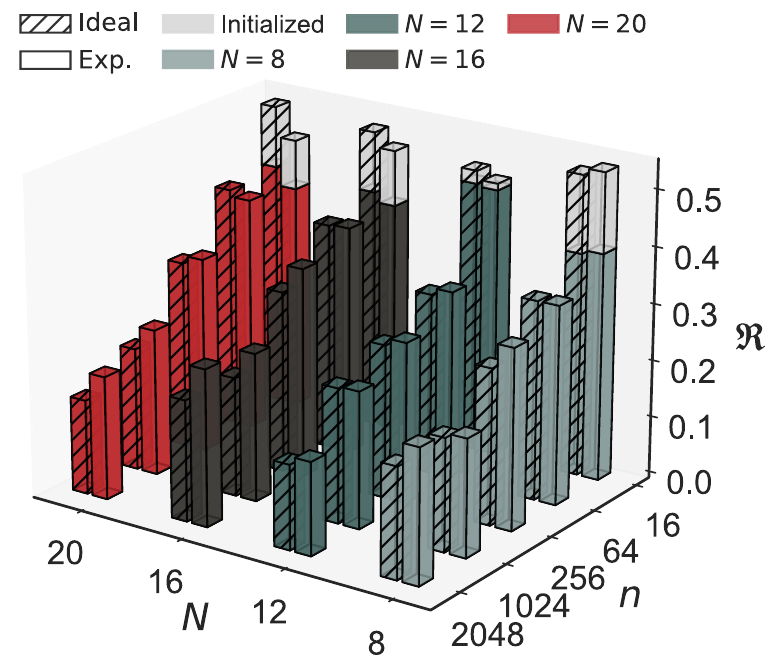}
\caption{\small{\textbf{Scaling behavior of $h_{\mathsf{cs}}$ when applied to pre-train VQE.}} Normalized deviation $\mathfrak{R}(\bx)$ evaluated on the initial parameters before pre-training (highlighted by the gray transparent bars with the label `Initialized') and after pre-training (highlighted by the colored bars) across varying qubit counts $N\in\{8,12,16,20\}$ and training dataset size $n\in\{2^4,2^6,2^8,2^{10},2^{11}\}$, with results aggregated over $20$ random instances per experimental setting. The unhatched bars with label `Exp.' and hatched bars with label `Ideal' denote quantum processor outputs and ideal expectation value obtained from the noiseless numerical simulation, respectively.}
\label{fig:fig3}
\end{figure}

To assess the performance of the optimized $h_{\mathsf{cs}}$ in estimating the ground state energies of many $\mathsf{H}_{\mathsf{TFIM}}$ without any access to the quantum processor,  we undertake a new subtask. The TFIMs considered here span six different settings with $J\in\{-0.5, -0.3, -0.1, 0.1, 0.3, 0.5\}$, with $h$ being fixed at $-0.1$. The predictive surrogates with $\Lambda=2$ are separately optimized using three different training set sizes, i.e., $n\in\{1000, 1500, 2000\}$. All other hyperparameter settings remain identical to those used in previous subtasks.  For each optimized predictive surrogate, the pre-training stage is terminated upon convergence. Experimental results are visualized in FIG.~\ref{fig:fig2}\text{e}. For all TFIMs, the optimized surrogates achieve a lower $\mathfrak{R}$ after pre-training compared to their initial values.   Furthermore, the surrogates trained with $n=1500$ and $n=2000$ exhibit comparable performance, suggesting diminishing returns beyond a certain training set size. These findings confirm the capability of $h_{\mathsf{cs}}$ in substantially reducing the measurement overhead for VQE optimization, especially as the number of Hamiltonians grows. 

We last explore the scalability of the predictive surrogates for pre-training VQE. The TFIMs considered have varying numbers of qubits ($N=8, 12, 16, 20$), while the coefficients $J$ and $h$ are fixed at $-0.1$ and $-0.5$ for each Hamiltonian. Besides, the number of training examples for each Hamiltonian has five distinct settings, which are $n=2^4, 2^6, 2^8, 2^{10}, 2^{11}$.  All other hyperparameters employed in the pre-training stage are almost identical to the prior subtasks, except for the settings of the snapshot $T$. That is, for $N=8$, $12$, $16$, and $20$, the number of snapshots $T$ per example is set to $20$, $20$, $50$, and $100$, respectively. As depicted in FIG.~\ref{fig:fig3}, the results show that the predictive surrogates achieve comparable performance across different Hamiltonians.  Specifically, the performance of $h_{\mathsf{cs}}$ depends more strongly on the factors $n$ and  $\Lambda$ than on the number of qubits $n$. Besides, for each setting, we compare results from the noisy scenario and the ideal scenario, where for the latter $\mathfrak{R}(\bx)$ is computed by $f(\rho(\bx),\mathsf{H}_{\mathsf{TFIM}})$ instead of $f(\tilderho(\bx),\mathsf{H}_{\mathsf{TFIM}})$. The minimal difference observed between the noisy and ideal cases (up to $N=16$, with $\text{MSE}$ peaking at 0.0013) not only reflects the high fidelity of our quantum processor, but also underscores the practical utility of predictive surrogates in pre-training VQEs.
 
\smallskip

\begin{figure*}[t]
	\centering\includegraphics[width=1\textwidth]{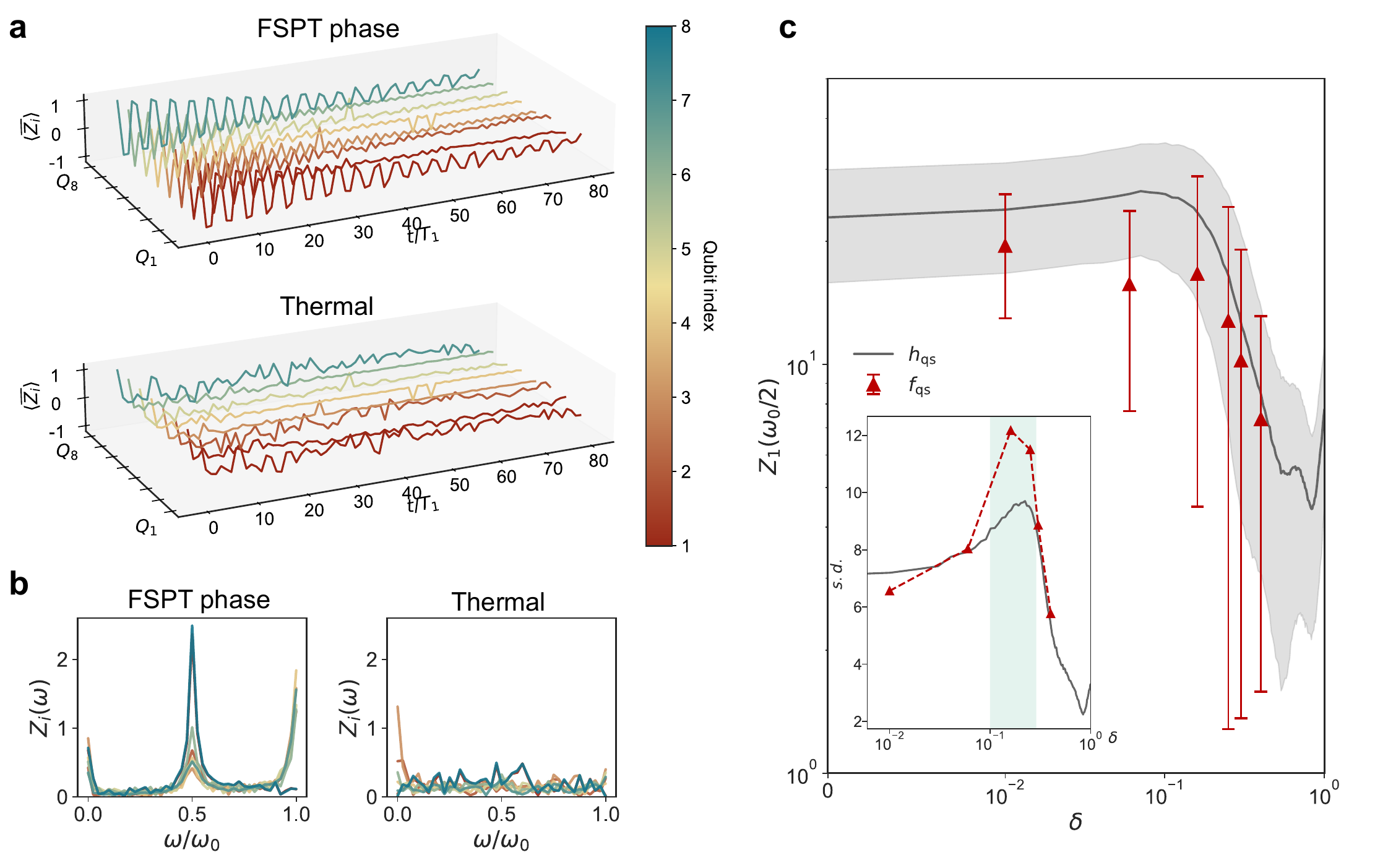}
	\caption{\small{\textbf{Experimental results of using the predictive surrogate $h_{\mathsf{qs}}$ to characterize the phase diagram of a 1D time-periodic Hamiltonian.} \textbf{a.} Time evolution of disorder-averaged local
    magnetizations $\{\braket{\bar{Z}_i}\}_{i=1}^8$ predicted by  $h_{\mathsf{qs}}$ in the FSPT phase ($\delta=0.01$) and the thermal phase ($\delta=0.8$). A stable subharmonic oscillation of local magnetizations that persists over 40 driving periods could only be observed at the bound qubits (i.e., $Q_1$ and $Q_8$) in the FSPT phase, while magnetizations at bulk qubits (i.e., from $Q_2$ to $Q_7$) in the FSPT phase and all magnetizations in the thermal phase decay quickly to zero. \textbf{b.} Fourier transform of local magnetization $\braket{Z_i(\bx)}$ predicted by  $h_{\mathsf{qs}}$ in the FSPT and thermal phases. The label of the $x$-axis,  $\omega/\omega_0$, refers to the normalized frequency with $\omega_0=\pi/\mathrm{T}_1$ being the driven frequency. The boundary qubits (spins) lock to the subharmonic frequency, which is in sharp contrast to the bulk spins. \textbf{c.} The subharmonic peak height, i.e., $Z_1(\omega_0/2)$ with respect to the varied $\delta$ from $0$ to $1$. The results are obtained from the employed quantum processor (labeled by `$f_{\mathsf{qs}}$') and the predictive surrogate (labeled by `$h_{\mathsf{qs}}$'). The data points shown for each $\delta$ are averaged over $5000$ random instances of coupling parameters $\{J_i\}$ for the predictive surrogate and over $50$ random instances for the adopted quantum processor. The shaded regions and error bars represent the standard error over these random instances, respectively. Inset: the standard deviation of the central peak height as a function of $\delta$. The green shaded region refers to the identified region where the phase transition occurs. 
    }}  
	\label{fig:fig4}
\end{figure*}

We next turn to the second task, which is employing the predictive surrogate $h_{\mathsf{qs}}$ to identify FSPT phases and thermalized phases in a class of $8$-qubit time-periodic Hamiltonians. Mathematically, this class of Hamiltonians is defined with 40 driving periods that increase monotonically, i.e., $\{\mathsf{H}_{\mathsf{tp}}(t)\}$  with $t\in [k\mathrm{T}_1]$, $k\in[79]$, and $2\mathrm{T_1}$ being the driving period. This amounts to evaluating the mean value $\braket{Z_i(\bx)}$ with diverse $\bx$.  By convention, the values of $\bx$ are generated by extensive sampling of $\{J_i\}$ and $\delta$ within the ranges $[0, 1]$ and $[0, 2]$, respectively.

To reduce the measurement overhead, we implement $8\times 79$ predictive surrogates $\{h_{\mathsf{qs}}\}$, each corresponding to a circuit configuration $U(\bx)$ with a given driving time $k\mathrm{T}_1$ and a qubit index $i$ to be measured. For each $h_{\mathsf{qs}}$, the size of  $\mathcal{T}_{\mathsf{mb}}$ is set to $n=250$  and the number of shots to obtain $\yi$ is $T=40000$. Refer to SM~G for the satisfactory performance of the optimized predictive surrogates in the measure of $\mathsf{R}(h_{\mathsf{qs}})$ in Eq.~(\ref{eq:prediction_error}).  

Once all predictive surrogates are optimized, we use them to identify the phase transition point entirely on the classical side. In particular, for each surrogate $h_{\mathsf{qs}}$, we consider two values of $\delta$, which are $0.01$ (FSPT phase) and $0.8$ (thermal phase). For each $\delta$,  the remaining entries in $\bx$, i.e., disordered coupling parameters $\{J_i\}$, are randomly sampled from a predefined distribution. This procedure is repeated $20$ times, and $8\times 79$ local magnetizations predicted by each surrogate are averaged over these trials, denoted by $\{\braket{\overline{Z_i(\bx)}}\}$. As shown in FIG.~\ref{fig:fig4}\text{a}, the optimized surrogates successfully capture the signatures of distinct quantum phases. For a small perturbation $\delta=0.01$, the values of $\{\braket{\overline{Z_i(\bx)}}\}$ exhibit stable subharmonic oscillations at boundary spins (i.e., $Q_1$ and $Q_8$) over $40$ driving periods, while decaying quickly to zero in the bulk of the chain (i.e., the qubit indices are from $Q_2$ to $Q_7$). In contrast, for a large perturbation $\delta=0.8$, the magnetizations at both the edge and bulk spins decay quickly to zero within a few periods, signaling the thermalized phase.

We take a step further by quantifying the differences between these two phases through their spectral signatures. To this end, we apply the Fourier transform to the average local magnetization predicted by $8\times 79$ predictive surrogates. As visualized in FIG.~\ref{fig:fig4}\text{b}, for FSPT phase with $\delta = 0.01$, the Fourier spectra of $\{\braket{\overline{Z_i(\bx)}}\}$ reveal that the boundary spins lock to the subharmonic driven frequency $\omega/\omega_0 = 0.5$ with $\omega_0=\pi/\mathrm{T}_1$, whereas the bulk spins do not, showing a distinct difference between edge and bulk behavior. In contrast, in the thermal phase with $\delta = 0.8$, neither the edge spins nor the bulk spins exhibit a pronounced subharmonic frequency peak.

We last exploit the ability of the optimized $\{h_{\mathsf{qs}}\}$ to identify the critical value $\delta^*$, at which the phase transition occurs. According to the setting in Ref.~\cite{zhang_digital_2022}, the critical point amounts to the largest variance of the subharmonic spectral peak height. This peak height is defined as the amplitude in the Fourier spectrum of the local magnetization $\braket{Z_1(\bx)}$ measured at $\omega = \omega_0/2$ for the boundary spin. Figure.~\ref{fig:fig4}\text{c} presents predicted subharmonic peak height as a function of $\delta$, obtained from $h_{\mathsf{qs}}$ over $100\times 5000$ random parameters $\{(\delta, J_i)\}$. The inset of FIG.~\ref{fig:fig4}\text{c} shows a variance peak at $\delta=0.202$. Using the $90\%$ peak-height width of the surrogate’s predictions as a criterion, this indicates the phase transition occurs near this value. Experimental validation using quantum processors on a smaller dataset of $6\times 50$ random parameters $\{(\delta, J_i)\}$  confirms the probable transition region. All of these results are consistent with the expected signature of FSPT and thermal phases as experimentally demonstrated in Ref.~\cite{zhang_digital_2022}, confirming the efficacy of $h_{\mathsf{qs}}$ to characterize non-equilibrium phase transitions.

\section{DISCUSSION and Outlook}
In this study, we propose two predictive surrogates designed to emulate the mean-value behavior of large-scale quantum processors. Theoretically, we prove the computational efficiency of these two predictive surrogates in practical scenarios. Moreover, we experimentally demonstrate the capabilities of these predictive surrogates in advancing digital quantum simulation. Experimental results in pre-training VQEs and the identification of the FSPT phase up to $20$ superconducting qubits indicate that our approaches not only substantially reduce measurement overhead compared to conventional quantum algorithms, but may also surpass their performance. 

Our study opens a new research direction by harnessing advanced AI techniques to accelerate the pursuit of practical quantum utility. Several important research avenues remain to be explored. A promising avenue is to investigate how predictive surrogates can advance additional variational quantum algorithms~\cite{cerezo2021variational,du2025quantum} and classical shadow estimation tasks~\cite{elben2022randomized}. Next, it is important to generalize the achieved results to the circuits with intrinsic symmetries. This motivation is twofold: many quantum algorithms leverage the underlying symmetries of the problem Hamiltonian to reduce gate counts and enhance performance~\cite{romero2018strategies,wang2022symmetric,weaving2025contextual}; and it is intriguing to investigate whether incorporating symmetry into quantum circuits can further improve the computational efficiency of predictive surrogates. Another important research direction is to extend the emulation of mean-value behavior to scenarios involving the learning of nonlinear functions, e.g., entanglement entropy, stabilizer entropy, and purity. Any progress in this direction would benefit a range of fields, such as quantum many-body physics~\cite{elben2018renyi,lukin2019probing,vermersch2024many} and quantum system certification~\cite{zhu2022cross,joshi2023exploring,shaw2024benchmarking,wu2025state}. In addition, it is important to investigate the development of predictive surrogates that can emulate output distributions rather than just mean values, as many quantum algorithms with provable advantages rely on sampling from the evolved state instead of estimating expectation values~\cite{shor1999polynomial,grover1996fast, hangleiter2023computational}. Advancements in this direction would further broaden the potential applications of predictive surrogates. Parallel to research on provably efficient predictive surrogates, an important complementary direction is to systematically investigate whether deep learning-based surrogates~\cite{zhu2022flexible,melko2024language,wu2024variational} can further enhance the efficiency in emulating quantum processors' behavior at scale.

\newpage
 
\clearpage

\onecolumngrid

\appendix 
\tableofcontents
\renewcommand{\appendixname}{SM}
 \renewcommand\thefigure{\thesection.\arabic{figure}}   
 
\bigskip

\section{Preliminary}
In this section, we first present the essential mathematical framework for the quantum circuits discussed in this work, as outlined in SM~\ref{append:subsec:ptm_qc}. This includes the Pauli transfer matrix representation and the trigonometric expansion of quantum circuits in both noiseless and noisy scenarios. Moreover, to contextualize our work within the existing literature, we conduct a comprehensive review of
relevant studies in SM~\ref{append:subsec:review}. For convenience, we summarize the frequently used notations throughout this study in the following table.
\begin{longtable}{  p{.16\textwidth}   p{.8\textwidth} }
\caption{The summary of notations frequently used in this work.}\\
\endfirsthead

\hline
 \textbf{Notation} & \textbf{Definition} \\ 
        \hline
        $N$ & Number of qubits \\
        $n$ & Number of training examples \\
        $T$ & Number of snapshots for constructing classical shadow \\
        $\bx_l$ & $l$-th component of the parameter vector $\bx$
        \\
        $d$ & Dimension of parameter vector $\bx$ i.e., $\bx\in \mathbb{R}^d$ \\
        $[d]$ & The set $\{1,2,\cdots, d\}$ \\
        $U(\bx)$, $\tilde{\mathcal{U}}({\bx})$ & Parametrized quantum circuit and its noisy counterpart \\
        $\rho(\bx)$, $\tilderho(\bx)$ & Density matrix of the noiseless and noisy quantum states   \\
        $\tilderho_T(\bx)$ & classical shadow representation of $\tilderho(\bx)$ with $T$ measurement shots \\
        $\mathsf{H}$ & An $N$-qubit Hamiltonian \\
        $f(\tilderho(\bx),O)$ & The expectation value $\Tr[O \tilderho(\bx)]$ for a given observable $O$  \\
        $h_{\mathsf{cs}}, h_{\mathsf{qs}}$ & The predictive surrogates for classical shadow and quantum simulations  \\
        $\mathcal{M}, \mathcal{N}, \mathcal{V}$ & Quantum channel \\
        $\bUnitary(\bx), \mathsf{M}, \mathsf{N}, \mathsf{V}$ & Pauli transfer matrix representation of quantum operators $U(\bx),\mathcal{M}, \mathcal{N}, \mathcal{V}$ \\
        $|\rho \rrangle, |O \rrangle$ & Pauli transfer matrix representation of quantum state $\rho$ and observable $O$ \\
        $I, X, Y, Z$ & Identity, Pauli-X, Pauli-Y, and Pauli-Z matrix \\
        $\PX$, $\PY$, $\PZ$ & Error probability of Pauli noisy channel $\mathcal{N}_P$ \\
        $\Phi_{\bomega}(\bx)$ & Trigonometric monomial basis related to the frequency vector $\bomega \in \{0,1,-1\}^d$ \\
        $\mathfrak{C}(\Lambda)$ & Truncated frequency set $\{\bomega|\bomega\in \{\pm 1, 0\}^d,\|\bomega\|_0 \le \Lambda\}$ with $\Lambda$ being the threshold value \\
        $\mathfrak{S}(\Lambda)$ & Truncated frequency set $\{\bomega|\bomega\in \{\pm 1, 0\}^d,\|\bomega_{\#-1}\|_0 \le \Lambda\}$ with $\|\bomega_{\#-1}\|_0$ being the number of $\bomega_j=-1$. \\
\hline
\end{longtable}

\subsection{Pauli transfer matrix and the trigonometric expansion of quantum circuits}\label{append:subsec:ptm_qc}

\noindent \textbf{Pauli Transfer Matrix.} Here we review how to use Pauli-Liouville representation to formulate the quantum state and the observable. Denote $P_l$ as  the $l$-th normalized Pauli operator with $P_l\in \frac{1}{\sqrt{2^N}}\{\mathbb{I}, X, Y, Z\}^N$ and $\llangle P_l|P_k\rrangle=\Tr(P_lP_k)=\delta_{lk}$. An $N$-qubit state $\rho$ can be represented as a $4^N$-dimensional vector under these normalized Pauli bases, i.e., 
\begin{equation}
    |\rho \rrangle = \left[\Tr(\rho P_1), \cdots, \Tr(\rho P_{4^N})\right]~\mbox{with}~P_l\in \frac{1}{\sqrt{2^N}}\{\mathbb{I}, X, Y, Z\}^{\otimes N}.
\end{equation}
Similarly, the normalized Pauli operator $O$ under the normalized Pauli basis yields $|O \rrangle = [\Tr(O P_1), ...,  \Tr(O P_{4^N})]^{\top}$.

The unitary operator can also be expanded by Pauli bases. Given the circuit $U(\bx)$, its Pauli Transfer Matrix (PTM)~\cite{hantzko2024pauli}, denoted by $\bUnitary(\bx)$, yields
\begin{equation}\label{eqn:append:PTM}
	[\bUnitary(\bx)]_{lk} = \llangle P_l|\bUnitary(\bx)| P_k \rrangle = \Tr(P_lU(\bx)P_k U(\bx)^{\dagger}). 
\end{equation}
For instance, the PTM representation of $\RZ(\bx_l)$ gates takes the form as
\begin{equation}
    \mathsf{R_Z} = \left(\begin{matrix}
		1 & 0 & 0 & 0\\
		0 & \cos(\bx_l) & -\sin(\bx_l) & 0\\
		0 & \sin(\bx_l) & \cos(\bx_l) & 0\\
		0 & 0 & 0 & 1
	\end{matrix}\right) = 
\mathsf{D}_0 + \cos(\bx_l)\mathsf{D}_1 + \sin(\bx_l)\mathsf{D}_{-1},
\end{equation}
where
\begin{equation}\label{append:eq:ptm_basis_noiseless}
	\mathsf{D}_0 = \left(\begin{matrix}
		1 & 0 & 0 & 0\\
		0 & 0 & 0 & 0\\
		0 & 0 & 0 & 0\\
		0 & 0 & 0 & 1
	\end{matrix}\right), \quad 	
        \mathsf{D}_1 = \left(\begin{matrix}
		0 & 0 & 0 & 0\\
		0 & 1 & 0 & 0\\
		0 & 0 & 1 & 0\\
		0 & 0 & 0 & 0
	\end{matrix}\right), 
	\quad 	\mathsf{D}_{-1} = \left(\begin{matrix}
		0 & 0 & 0 & 0\\
		0 & 0 & -1 & 0\\
		0 & 1 & 0 & 0\\
		0 & 0 & 0 & 0
	\end{matrix}\right).
\end{equation}

\medskip

\noindent \textbf{Trigonometric expansion of noiseless quantum circuits.}   Consider an $N$-qubit quantum circuit discussed in the main text, which is
\begin{equation}\label{eqn:append:circuit-ideal}
	U(\bx) = \prod_{l=1}^{d}(\RZ(\bx_l)V_l).
\end{equation}
When applied to an arbitrary $N$-qubit input state $\rho_0$, the generated state can be reformulated by the trigonometric expansion~\cite{fontana2023classical}.  The mathematical expression of the generated state yields
\begin{equation}\label{append:eq:ptm_state_noiseless}
	\rho(\bx) = U(\bx) \rho_0 U(\bx) ^{\dagger} =  \sum_{\bomega}\Phi_{\bomega}(\bx) \llangle \rho_0| \bUnitary_{\bomega}^{\dagger}.
\end{equation}
Here the notation $\Phi_{\bomega}(\bx)$ with $\bomega\in \{0, 1, -1\}^d$ refers to the trigonometric monomial basis with the values 
\begin{equation}
	\Phi_{\bomega}(\bx) = \prod_{l=1}^d \begin{cases}
		 1 ~ & \textnormal{if}~ \bomega_l = 0 \\
		 \cos(\bx_l) & \textnormal{if}~\bomega_l = 1 \\
		 \sin(\bx_l) &  \textnormal{if}~ \bomega_l = -1
	\end{cases}.
\end{equation}
Moreover, $\bUnitary_{\bomega}^{\dagger}$ is the purely-Clifford circuit and each RZ-gate at position $l$ is replaced by one of the operators $\mathsf{D}_0$, $\mathsf{D}_1$, $\mathsf{D}_{-1}$ in Eq.~\eqref{append:eq:ptm_basis_noiseless} depending on the value of $\bomega_l$. 

When the generated state  $\rho(\bx)$ interacts with an observable $O$, the expectation value can be equivalently expressed using a trigonometric expansion, i.e.,
\begin{equation}\label{eqn:append:expectation-idea}
	f(\rho(\bx), O) \equiv \Tr(\rho(\bx)O)= \sum_{\bomega}\Phi_{\bomega}(\bx) \llangle 0| \bUnitary_{\bomega}^{\dagger} | O \rrangle = \sum_{\bomega}\Phi_{\bomega}(\bx) \bm{\alpha}_{\bomega},
\end{equation} 
where  $\bm{\alpha}_{\bomega}=\llangle 0| \bUnitary_{\bomega}^{\dagger} | O \rrangle$.

\medskip

\noindent \textbf{Trigonometric expansion in the noisy scenario.}  We now generalize the above PTM representations to the noisy scenario, a ubiquitous feature of quantum computers without fault tolerance. Supported by the results of twirling operation~\cite{bennett1996mixed}, we model the noisy environment via a Pauli noise channel~\cite{chen2024tight}. For a single qubit, the Pauli channel yields
\begin{equation}\label{append:eq:pauli_channel}
	\mathcal{N}_P(\PX , \PY, \PZ )[\rho] = (1-\PX -\PY-\PZ )\rho + \PX X\rho X + \PY Y\rho Y+\PZ Z\rho Z. 
\end{equation}
Following the same approach, we denote a general Pauli channel applied to multiple qubits as $\mathcal{M}$. With these notations, the explored circuit $U(\bx)$ under Pauli noise channel yields the formalism in Eq.~(\ref{eqn:noisy-unitary-channel}), which is 
\begin{equation}\label{eqn:append:noisy-u(x)}
	\widetilde{\mathcal{U}}({\bx}) = \bigcirc_{l=1}^d \tilde{\mathcal{R}}_{\mathcal{Z}}(\bx_l) \circ \tilde{\mathcal{V}}_{l}.
\end{equation}
Here $\tilde{\mathcal{R}}_{\mathcal{Z}}(\bx_l)= \mathcal{N}_P\circ \mathcal{R_Z}(\bx_l)$ and $\tilde{\mathcal{V}}_{l} = \mathcal{M}_l \circ \mathcal{V}_l$ with $\mathcal{M}_l$ being a multi-qubit Pauli channel applied to the $l$-th Clifford operation $\mathcal{V}_l$.

We next introduce PTM representation of the noisy circuit $\widetilde{\mathcal{U}}({\bx})$. Recall the definition of PTM in Eq.~(\ref{eqn:append:PTM}). A quantum channel $\mathcal{E}$ can also be reformulated as a PTM with the size $4^N\times 4^N$, i.e.,
\begin{equation}
	[\mathsf{E}]_{l,k} = \llangle P_l|\mathsf{E}|P_k\rrangle = \Tr({P}_l \mathcal{E}({P}_k)),
\end{equation}
where $P_l$ denotes the normalized Pauli operator. As such, the PTM of the single-qubit Pauli channel in Eq.~(\ref{append:eq:pauli_channel}) is 
\begin{equation}
    \mathsf{N}=\mathrm{diag}(1,q_{\mathrm{X}} ,q_{\mathrm{Y}},q_{\mathrm{Z}}),
\end{equation}
where $q_{\mathrm{X}}=1-2(\PZ +\PY)$, $q_{\mathrm{Y}}=1-2(\PZ +\PX )$, $q_{\mathrm{Z}}=1-2(\PX +\PY)$ refer to the eigenvalues of the Pauli channel related to the error probabilities $\PX $, $\PY$, $\PZ $, respectively.  Moreover, the PTM related to $\mathcal{N}_P$ applied to the $\RZ(\bx_l)$ gate is given by 
\begin{equation}
	\mathsf{N}\cdot \mathsf{R_Z} = \left(\begin{matrix}
		1 & 0 & 0 & 0\\
		0 & q_{\mathrm{X}} & 0 & 0\\
		0 & 0 & q_{\mathrm{Y}} & 0\\
		0 & 0 & 0 & q_{\mathrm{Z}}
	\end{matrix}\right)\cdot \left(\begin{matrix}
		1 & 0 & 0 & 0\\
		0 & \cos(\bx_l) & -\sin(\bx_l) & 0\\
		0 & \sin(\bx_l) & \cos(\bx_l) & 0\\
		0 & 0 & 0 & 1
	\end{matrix}\right)=\left(\begin{matrix}
		1 & 0 & 0 & 0\\
		0 & q_{\mathrm{X}}\cos(\bx_l) & -q_{\mathrm{X}}\sin(\bx_l) & 0\\
		0 & q_{\mathrm{Y}}\sin(\bx_l) & q_{\mathrm{Y}}\cos(\bx_l) & 0\\
		0 & 0 & 0 & q_{\mathrm{Z}}
	\end{matrix}\right). 
\end{equation} 
After simplification, we have
\begin{equation}\label{append:eq:noisy_rz_ptm}
	\mathsf{N}\cdot \mathsf{R_Z} = \widetilde{\mathsf{D}}_0 + \cos(\bx_l)\widetilde{\mathsf{D}}_1 + \sin(\bx_l)\widetilde{\mathsf{D}}_{-1},
\end{equation}
where
\begin{equation}\label{append:eq:ptm_basis_noise}
	\widetilde{\mathsf{D}}_0 = \left(\begin{matrix}
		1 & 0 & 0 & 0\\
		0 & 0 & 0 & 0\\
		0 & 0 & 0 & 0\\
		0 & 0 & 0 & q_{\mathrm{Z}}
	\end{matrix}\right), \quad 	\widetilde{\mathsf{D}}_1 = \left(\begin{matrix}
		0 & 0 & 0 & 0\\
		0 & q_{\mathrm{X}} & 0 & 0\\
		0 & 0 & q_{\mathrm{Y}} & 0\\
		0 & 0 & 0 & 0
	\end{matrix}\right), 
	\quad 	\widetilde{\mathsf{D}}_{-1} 
    = \left(\begin{matrix}
		0 & 0 & 0 & 0\\
		0 & 0 & -q_{\mathrm{X}} & 0\\
		0 &  q_{\mathrm{Y}} & 0 & 0\\
		0 & 0 & 0 & 0
	\end{matrix}\right). 
\end{equation}
This formalism reduces to the noiseless counterpart in Eq.~\eqref{append:eq:ptm_basis_noiseless} for $\PX =\PY=\PZ =0$ or $q_{\mathrm{X}}=q_{\mathrm{Y}}=q_{\mathrm{Z}}=1$.

 Following the same routine, denote PTM representation of the $l$-th Clifford operation $\mathcal{V}_l$ and the general Pauli channel $\mathcal{M}_l$ acting on this operation as $\mathsf{V}_l$ and $\mathsf{M}_l$, respectively. Then the PTM representation of $\widetilde{\mathcal{U}}({\bx})$ is 
 \begin{equation}\label{eqn:append:PTM_U}
 	 \prod_{l=1}^d  \left(\mathsf{N}\cdot \mathsf{R_Z}(\bx_l) \cdot  \mathsf{M}_l \cdot  \mathsf{V}_l \right).
 \end{equation}

Denote $\tilderho(\bx)$ as the quantum state $\rho_0$ evolved under the noisy circuit $\tilde{\mathcal{U}}({\bx})$. The decomposition in Eqs.~\eqref{append:eq:noisy_rz_ptm} and \eqref{eqn:append:PTM_U} allows us to express the mean value $\Tr(\tilderho(\bx)O)$, as the noisy version of $f(\rho(\bx), O)$ in Eq.~(\ref{eqn:append:expectation-idea}), under the trigonometric expansion. Following the notation in Eq.~\eqref{append:eq:ptm_state_noiseless}, we have
\begin{equation}
    \label{eq:fourier_expansion}
	f(\tilderho(\bx), O) \equiv \Tr(\tilderho(\bx)O)= \sum_{\bomega}\Phi_{\bomega}(\bx) \tilde{\bm{\alpha}}_{\bomega} := \sum_{\bomega}\Phi_{\bomega}(\bx)  \llangle O|\tilde{\bUnitary}_{\bomega}|\rho_0 \rrangle,
\end{equation} 
where $\tilde{\bm{\alpha}}_{\bomega}:=\llangle O|\tilde{\bUnitary}_{\bomega}|\rho_0 \rrangle$ and $\tilde{\bUnitary}_{\bomega}:=\prod_{l}^d \widetilde{\mathsf{D}}_{\bomega_l} \cdot \mathsf{M}_l \cdot  \mathsf{V}_l$  refer to the noisy counterpart of the purely-Clifford circuit $\bUnitary_{\bomega}$.

\smallskip 
The introduced Pauli channel results in the contraction phenomenon of $\tilde{\bm{\alpha}}_{\bomega}$. More specifically, the absolution value of $\{\tilde{\bm{\alpha}}_{\bomega}\}$ decays exponentially with the Hamming weight of the frequency vector $\bomega$. The following lemma establishes the relation between $\tilde{\bm{\alpha}}_{\bomega}$ and $ \bm{\alpha}_{\bomega}$, which will be employed in the proof of Theorem~\ref{thm:learning-non-trunc} and Theorem~\ref{thm:ridge_predict_error}.  

\begin{lemma}[Adapted from Theorem 4, Ref.~\cite{fontana2023classical}]\label{append:lem:noisy_coefficient}
    Following the notation in Eqs.~(\ref{eqn:noisy-unitary-channel}), (\ref{eqn:append:expectation-idea}), and (\ref{eq:fourier_expansion}), define $p=\min\{\PX ,\PY\}$ and suppose $\max\{p,\PZ \}>0$. Then, for any observable $O$ and initial state $\rho_0$, the coefficients of trigonometric expansion over $f(\rho(\bx), O)$ and $f(\tilderho(\bx), O)$ yields
    \begin{equation}
        \left|\widetilde{\bm{\alpha}}_{\bomega} \right| \le \Big(1-2(p+\PZ )\Big)^{\|\bomega\|_0}\left|\bm{\alpha}_{\bomega}\right|,\quad \forall \bomega\in \{-1,0,1\}^d.
    \end{equation}
\end{lemma}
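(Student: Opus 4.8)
The plan is to track, for a fixed frequency vector $\bomega\in\{-1,0,1\}^d$, how the noisy Clifford factor $\tilde{\bUnitary}_{\bomega}=\prod_{l=1}^d\widetilde{\mathsf{D}}_{\bomega_l}\cdot\mathsf{M}_l\cdot\mathsf{V}_l$ differs from its noiseless counterpart $\bUnitary_{\bomega}=\prod_{l=1}^d\mathsf{D}_{\bomega_l}\cdot\mathsf{V}_l$, and to show that each site $l$ with $\bomega_l\neq 0$ contributes a multiplicative contraction of at least $q:=1-2(p+\PZ)$, where $p=\min\{\PX,\PY\}$. The starting point is Eq.~\eqref{append:eq:ptm_basis_noise}: comparing $\widetilde{\mathsf{D}}_0,\widetilde{\mathsf{D}}_1,\widetilde{\mathsf{D}}_{-1}$ with the noiseless blocks in Eq.~\eqref{append:eq:ptm_basis_noiseless}, one sees $\widetilde{\mathsf{D}}_0$ is $\mathsf{D}_0$ with the lower-right $1$ replaced by $q_{\mathrm{Z}}=1-2(\PX+\PY)$, while $\widetilde{\mathsf{D}}_{\pm1}$ equals $\mathsf{D}_{\pm 1}$ with the two nonzero entries scaled by $q_{\mathrm{X}}=1-2(\PZ+\PY)$ and $q_{\mathrm{Y}}=1-2(\PZ+\PX)$ respectively. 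The key structural observation is that $\widetilde{\mathsf{D}}_{\pm1}$ maps into the two-dimensional $\{X,Y\}$ subspace (per qubit) and on that subspace acts as (a $\pm$-rotation composed with) multiplication by the diagonal $\mathrm{diag}(q_{\mathrm{X}},q_{\mathrm{Y}})$, whose operator norm is $\max\{q_{\mathrm{X}},q_{\mathrm{Y}}\}=1-2(p+\PZ)=q$; while $\widetilde{\mathsf{D}}_0$ differs from $\mathsf{D}_0$ only on the $Z$-component and there the extra factor $q_{\mathrm{Z}}\le 1$ only helps (never increases norms).

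The main step is then a careful factorization argument. Since this is cited as "adapted from Theorem~4 of Ref.~\cite{fontana2023classical}", I would follow their strategy: write $\widetilde{\mathsf{D}}_{\bomega_l}=\mathsf{D}_{\bomega_l}\circ\mathcal{C}_l$ (or a suitable right/left factorization) where $\mathcal{C}_l$ is a per-qubit Pauli-channel-like contraction, push all the noise channels $\mathsf{M}_l$ and the $\mathcal{C}_l$ factors toward one end of the product using the fact that conjugating a Pauli channel by a Clifford gives another Pauli channel (so the Pauli channels can be merged and commuted past Cliffords at the cost of relabeling), and conclude that $\tilde{\bUnitary}_{\bomega}=\bUnitary_{\bomega}\circ\mathcal{P}_{\bomega}$ for some global Pauli channel $\mathcal{P}_{\bomega}$ whose eigenvalue on any Pauli string is at most $q^{\|\bomega\|_0}$ in absolute value. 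Then $|\widetilde{\bm\alpha}_{\bomega}|=|\llangle O|\bUnitary_{\bomega}\mathsf{P}_{\bomega}|\rho_0\rrangle|$; inserting the eigenbasis of $\mathsf{P}_{\bomega}$ (the Pauli basis) and using that $\bUnitary_{\bomega}^\dagger|O\rrangle$ and $|\rho_0\rrangle$ have the relevant supports, the contraction factor $q^{\|\bomega\|_0}$ comes out, yielding $|\widetilde{\bm\alpha}_{\bomega}|\le q^{\|\bomega\|_0}|\bm\alpha_{\bomega}|$. I would also handle the degenerate cases: if $p=0$ but $\PZ>0$ (or vice versa) the bound $q<1$ still holds since $\max\{p,\PZ\}>0$ is assumed, and one checks the factorization still goes through.

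The hard part will be making the "push the noise to one end" step rigorous while keeping track of exactly which Pauli components survive at each site. The subtlety is that $\widetilde{\mathsf{D}}_{\pm1}$ is not itself a Pauli channel (it is a rotation times a diagonal scaling restricted to a subspace), so the clean statement $\tilde{\bUnitary}_{\bomega}=\bUnitary_{\bomega}\circ(\text{Pauli channel})$ requires the observation that on the image of $\mathsf{D}_{\pm1}$ — the $\{X,Y\}$ block — the extra factor is exactly $\mathrm{diag}(q_{\mathrm{X}},q_{\mathrm{Y}})$, which does commute with the $Z$-rotation structure of $\mathsf{D}_{\pm1}$ up to swapping $q_{\mathrm{X}}\leftrightarrow q_{\mathrm{Y}}$; bounding everything by the larger eigenvalue $q$ absorbs this swap. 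An alternative, perhaps cleaner, route that avoids the global commutation gymnastics is an inductive/telescoping argument on $d$: peel off the rightmost factor $\widetilde{\mathsf{D}}_{\bomega_d}\mathsf{M}_d\mathsf{V}_d$, bound the newly introduced contraction ($q$ if $\bomega_d\neq0$, and $\le1$ always from $\mathsf{M}_d$ and from the $q_{\mathrm Z}$ in $\widetilde{\mathsf D}_0$), and recurse — this only needs the single-site norm estimates above plus the fact that Clifford PTMs and Pauli-channel PTMs are norm-non-increasing, which I would state and use as routine. I expect to present the inductive version in the actual proof since it sidesteps the bookkeeping obstacle; either way, once the per-site factor $q$ is isolated, the product $\prod_{l:\bomega_l\neq0}q=q^{\|\bomega\|_0}$ and hence $|\widetilde{\bm\alpha}_{\bomega}|\le\big(1-2(p+\PZ)\big)^{\|\bomega\|_0}|\bm\alpha_{\bomega}|$ follows directly.
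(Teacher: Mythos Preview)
The paper's argument is more elementary than either of your routes. It first fixes $O=P$ a single Pauli string and uses the structural fact (which you noted but did not exploit directly) that every factor in the product---$\mathsf{V}_l$ (Clifford), $\mathsf{M}_l$ (diagonal Pauli channel), $\mathsf{D}_{\bomega_l}$, and $\widetilde{\mathsf{D}}_{\bomega_l}$---maps a single Pauli basis vector to a scalar multiple of another Pauli basis vector. Consequently $\llangle P|\widetilde{\bUnitary}_{\bomega}$ and $\llangle P|\bUnitary_{\bomega}$ traverse the \emph{same} Pauli path and land on the same final Pauli (or both vanish), differing only by an overall scalar $rc_{\bomega}$, where $r\le 1$ is the product of the $\mathsf{M}_l$-eigenvalues encountered and $c_{\bomega}\le q^{\|\bomega\|_0}$ collects the $q_{\mathrm X},q_{\mathrm Y}$ factors from $\widetilde{\mathsf D}_{\pm1}$ versus $\mathsf D_{\pm1}$. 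This gives $\widetilde{\bm\alpha}_{\bomega}=rc_{\bomega}\,\bm\alpha_{\bomega}$ outright for Pauli $O$, with no commutation of channels past Cliffords and no induction. General $O=\sum_j a_jP_j$ is then claimed by linearity.

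Your preferred inductive/norm route has a genuine gap: operator-norm non-increase of the layers yields only an \emph{absolute} bound of the form $|\widetilde{\bm\alpha}_{\bomega}|\le q^{\|\bomega\|_0}\|O\|$, not the \emph{relative} comparison against $|\bm\alpha_{\bomega}|$ that the lemma asserts; the two are inequivalent whenever $|\bm\alpha_{\bomega}|$ is small. Your factorization route is closer in spirit to the paper, but the step ``inserting the eigenbasis of $\mathsf P_{\bomega}$\ldots the contraction factor comes out'' fails for general $O$ and $\rho_0$: different Pauli components of $\bUnitary_{\bomega}^\dagger|O\rrangle$ see different eigenvalues of $\mathcal P_{\bomega}$, and you cannot extract a single scalar and still recover $|\bm\alpha_{\bomega}|$ on the right. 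The single-Pauli reduction is precisely what forces only one eigenvalue to appear at a time; adopting that reduction first would rescue your factorization approach. (It is worth noting that the paper's own linearity step to general $O$ writes $rc_{\bomega}$ as if it were independent of $P_j$, which is the same subtlety; the downstream applications in the paper actually only use $|\widetilde{\bm\alpha}_{\bomega}|\le q^{\|\bomega\|_0}B$, and that weaker bound does follow from the single-Pauli case plus the triangle inequality.)
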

\begin{proof}
    [Proof of Lemma~\ref{append:lem:noisy_coefficient}] We first consider the case of $O$ being a single Pauli string with $P\in\{\mathbb{I},X,Y,Z\}^{\otimes N}$.  Recall that the PTM representation of the Pauli channel ${\mathsf{N}}$ or ${\mathsf{M}}_l$ is a diagonal matrix. Accordingly, any Pauli channel will map a  Pauli operator to itself, up to a proportionality factor that is determined by the eigenvalues of ${\mathsf{M}}_l$ and is less than $1$, namely $\widetilde{\bUnitary}_{\bomega}|P\rrangle=\prod_{l}^d \widetilde{\mathsf{D}}_{\bomega_i} \cdot {\mathsf{M}}_l \cdot\mathsf{V}_l  |P\rrangle\propto \prod_{l} \widetilde{\mathsf{D}}_{\bomega_l} \cdot {\mathsf{V}}_l |P\rrangle$ and hence $\llangle P|\widetilde{\bUnitary}_{\bomega}|\rho_0 \rrangle = r\llangle P|\prod_{l} \widetilde{\mathsf{D}}_{\bomega_l} \cdot \mathsf{V}_l|\rho_0 \rrangle$ with $r<1$.  
    Moreover, following the definition of $\widetilde{\mathsf{D}}_{\bomega_l}$, we have $\llangle P|\prod_{l} \widetilde{\mathsf{D}}_{\bomega_l} \cdot {\mathsf{V}}_l|\rho_0 \rrangle = c_{\bomega} \llangle P|\prod_{l} {\mathsf{D}}_{\bomega_l} \cdot {\mathsf{V}}_l|\rho_0 \rrangle$, where $0<c_{\bomega}\le q^{\|\bomega\|_0}$ with $q:=\max\{q_{\mathrm{X}},q_{\mathrm{Y}}\}$. Combining the above analysis and the expression of $q=1-2(p+\PZ )$ regarding the error probability $p,\PZ$, we have
    \begin{equation}\label{append:eq:lem1_proof_1}
        \widetilde{\bm{\alpha}}_{\bomega} =r \llangle P|\prod_{l}^d \widetilde{\mathsf{D}}_{\bomega_l}\cdot {\mathsf{V}}_l|\rho_0 \rrangle = rc_{\bomega} \llangle P|\prod_{l}^d {\mathsf{D}}_{\bomega_l}\cdot{\mathsf{V}}_l|\rho_0 \rrangle \le q^{\|\bomega\|_0} \left|{\bm{\alpha}}_{\bomega}\right| = (1-2(p+\PZ ))^{\|\bomega\|_0}\left|\bm{\alpha}_{\omega}\right|.
    \end{equation}
    This equality could be directly generalized to the case of a general observable $O=\sum_{j} a_j P_j$ by employing the linearity of $\bm{\alpha}_{\bomega}$ with respect to the Pauli observable $P_j$. In particular, we have
    \begin{equation}
        \widetilde{\bm{\alpha}}_{\bomega} = \llangle O|\tilde{\bUnitary}_{\bomega}|\rho_0 \rrangle = \sum_{j} a_j \llangle P_j|\tilde{\bUnitary}_{\bomega}|\rho_0 \rrangle = rc_{\bomega} \sum_{j} a_j \llangle P_j|{\bUnitary}_{\bomega}|\rho_0 \rrangle = rc_{\bomega} \bm{\alpha}_{\bomega} \le (1-2(p+\PZ ))^{\|\bomega\|_0} \left|{\bm{\alpha}}_{\bomega}\right|, 
    \end{equation}
    where the inequality follows Eq.~\eqref{append:eq:lem1_proof_1}.  
\end{proof}

\medskip
\noindent\textbf{Equivalence between quantum circuits using various rotation gates up to additional Clifford gates.}
We note that the quantum circuits $U(\bx)=\prod_{l=1}^d \RZ(\bx_l)V_l$ consisting of the $\RZ$ gates and Clifford gates considered in this study are equivalent to those consisting of other Pauli rotation gates up to Clifford gates. Formally, consider the quantum circuit $W(\bx)=\prod_{l=1}^d \mathsf{R}_{P_l}(\bx_l)W_l$ with $\mathsf{R}_{P_l}(\bx_l)=e^{-iP_l \bx_l/2}$ being the ratation gates around any Pauli basis $P_l\in \{I,X,Y,Z\}^{\otimes N}$ and $W_l$ being the Clifford gates, there always exist Clifford gates $W_1', \cdots, W_d'$ such that for $l\in[d]$, we have
\begin{equation}
    W_l' \mathsf{R}_{P_l}(\bx_l) (W_l')^{\dagger}=W_l' e^{-iP_l \bx_l/2} (W_l')^{\dagger}=e^{-i W_l'P_l(W_l')^{\dagger} \bx_l/2} = e^{-i  Z  \bx_l/2} = \RZ(\bx_l).
\end{equation}
As a result, we could rewrite the quantum circuit $W(\bx)$ as
\[W(\bx)=\prod_{l=1}^d \mathsf{R}_{P_l}(\bx_l)W_l=\prod_{l=1}^d (W_{l}')^{\dagger} \RZ(\bx_l)W_l' W_l = \prod_{l=0}^d \RZ(\bx_l) V_l',\]
where the last equality follows by setting $\bx_0=0$ such that $\RZ(\bx_0)=I$ and denoting $V_0'=(W_{1}')^{\dagger}$, $V_d'=W_d' W_d$, and $V_l'=W_l' W_l(W_{l+1}')^{\dagger}$ for $l\in [d-1]$. Here, $V_l'$ are still Clifford gates according to the properties of the Clifford group, namely, for any Clifford operators $A,B$, the product $AB$ is also a Clifford operator. In this regard, the quantum circuits $W(\bx)$ could be regarded as equivalent to $U(\bx)$ with various Clifford gates. This relationship of equivalence enables the direct generalization of the achieved results for $U(\bx)$ in this study to any other quantum circuit $W(\bx)$, as we do not impose any assumption on the Clifford gates $V_l$.

 \subsection{Literature review}\label{append:subsec:review}

The most relevant works to this study can be broadly classified into four categories: classical simulators of quantum processors, learning-based surrogates for quantum processors, and methodologies for enhancing variational quantum algorithms (VQAs) at scale, as well as enhancing the characterization of non-equilibrium phases. In the following, we review each of these research directions separately.

\medskip
\noindent \textbf{Classical simulator of quantum processors}. 
Classical simulators of quantum processors involve designing classical simulation algorithms to predict the outcomes of quantum circuits. While simulating general quantum circuits using classical algorithms has been proven to be computationally challenging, specialized classical simulation methods have been proposed to efficiently simulate specific types of quantum systems with particular properties. For example, tensor network-based methods are developed to simulate low-entanglement quantum systems efficiently \cite{shi2006classical,markov2008simulating,singh2010tensor,biamonte2017tensor,hauschild2018efficient,pang2020efficient,causer2023optimal,patra2024efficient}. Various techniques have also been introduced to simulate near-stabilizer circuits with low magic \cite{gottesman1998heisenberg,aaronson2004improved,nest2008classical,bravyi2016improved,lerch2024efficient,beguvsic2025simulating}, including the stabilizer rank, quasiprobability, Pauli-path simulation, and Clifford perturbation methods. More general group-theoretic approaches are proposed for simulating symmetrized systems \cite{somma2005quantum,somma2006efficient,galitski2011quantum,goh2023lie,anschuetz2023efficient}. Additionally, it has been shown that the presence of noise in quantum systems can relax the constraints on classically simulable quantum circuits that are otherwise intractable in the noiseless case \cite{gao2018efficient,takahashi2021classically,shao2024simulating,gonzalez2025pauli,rajakumar2025polynomial,fontana2023classical}. As a result, quantum circuits that are computationally infeasible for classical simulation in the noiseless case may be efficiently simulated under noisy conditions. 

Despite the importance of developing efficient classical simulators for quantum processors, a fundamental limitation is their ability to faithfully emulate the behavior of real quantum devices. Classical simulation methods typically require a fully specified noise model, which is often computationally expensive to obtain. In contrast, predictive surrogates can automatically and effectively learn the noisy behavior of quantum processors directly from collected training data, offering a distinct advantage over conventional classical simulators.

\medskip
\noindent \textbf{Learning-based surrogates for quantum processors}. Unlike classical simulators, which aim to predict the output of individual quantum circuits, learning-based surrogates are designed to emulate the output of families of quantum circuits with varying parameters. The existing literature on the construction of learning-based surrogates can be categorized into two main approaches: deep learning-based surrogates and traditional machine learning-based surrogates. In what follows, we provide a separate review of each subcategory.

\smallskip

\noindent \underline{\textit{Deep learning-based surrogates.}} The deep learning-based surrogates are constructed by optimizing a deep neural network to fit the experimental data collected from quantum processors. Due to the flexibility of neural architectures and their expressive power, a variety of deep learning-based surrogates have been proposed for different tasks to predict specific properties of quantum processors based on various data types \cite{wu2023quantum,qin2024experimental,qian2024multimodal,tang2024towards,wang2022predicting,zhao2025rethink}. For instance, Ref.~\cite{wu2023quantum} constructs a surrogate using a convolutional neural network to predict the similarity between quantum states. In a similar task, Ref.~\cite{qian2024multimodal} utilizes a multimodal neural network to construct the surrogate, enabling it to fully leverage classical information about quantum processors, such as the gate layout of quantum circuits. Ref.~\cite{wang2022predicting} develops a generative surrogate capable of generating measurement outputs given a classical description of a quantum system. 

Although these approaches have demonstrated promising empirical results, they are mostly heuristic, and there is a lack of theoretical guarantees regarding their efficiency. Our work addresses this knowledge gap by providing rigorous evidence for the feasibility and effectiveness of learning-based surrogates for noisy quantum processors.

\smallskip

\noindent \underline{\textit{Machine learning-based surrogates.}} 
The machine learning-based surrogates aim to construct an explicit learning model by dequantizing the quantum circuits. Existing literature explores dequantization mainly by expressing quantum circuits in two different kinds of basis expansions: the Fourier expansion \cite{sweke2023potential} and the trigonometric expansion. For the first case, the learning-based surrogate is formalized as a linear model in the Fourier basis related to the Fourier expansion of quantum circuits, and the training data obtained from quantum processors is used to estimate the corresponding Fourier coefficients~\cite{sweke2023potential, fontana2022efficient, schreiber2023classical, nemkov2023fourier, landman2022classically, gan2024concept}. A  limitation of this kind of surrogate is that all Fourier coefficients need to be estimated without considering the truncation of the less important Fourier basis contributing to the expectation value. This leads to an exponentially increasing sample complexity with the number of Pauli rotation gates, denoted as $d$. As a result, this method is efficient only for quantum circuits with $d=\log(N)$, where $N$ is the number of qubits.

The second approach involves expressing the quantum circuits in a trigonometric expansion and utilizing the truncated trigonometric feature map to construct the predictive surrogate. This helps mitigate the exponential scaling of sample complexity with respect to the number of rotational gates. Specifically, Ref.~\cite{du2024efficient} demonstrates that for certain quantum circuits with a small gradient norm, truncating the trigonometric features with a large order only minimally impacts the quantum processor’s output. Therefore, they can be neglected when constructing the classical surrogate model. 
It is worthy to note that the results in Ref.~\cite{du2024efficient} are established in noiseless scenarios, leaving several open questions unresolved: the provable efficiency of these surrogates in the presence of noise remains unknown; it is unclear whether the noisy case is inherently more challenging than the ideal case; and the generalization of these methods beyond uniform input sampling has yet to be demonstrated. In this work, we address all of these open issues.

\medskip
\noindent \textbf{Enhancement of VQAs at scale}. Optimizing VQAs becomes progressively challenging as the number of qubits continuously increases \cite{cerezo2022challenges}. This difficulty stems from their non-convex loss landscapes \cite{bittel2021training} and non-community among trainable gates~\cite{schuld2019evaluating}, necessitating numerous shots to navigate the trainability issues like barren plateaus \cite{mcclean2018barren,zhang2020toward} and optimization divergence \cite{du2021learnability,sweke2020stochastic}. In this regard, the conflict between the resource-intensive nature of VQAs and the scarcity of quantum resources in the coming years poses a substantial obstacle to their wider applications. To improve the practicality of VQAs, huge efforts have been devoted to decreasing the quantum resources required in the optimization while preserving the quality of the output solutions \cite{cerezo2023does}. 

Smart initialization techniques have become leading methods \cite{sutskever2013importance,grant2019initialization,mitarai2022quadratic} in this context. The philosophy of this research line is to identify a better starting point in the loss landscape rather than relying on random initialization, where the starting points are closer to the good local minima or global minimum \cite{boyd2004convex}. As such, the high measurement overhead can be mitigated significantly. Existing initialization methods can be classified into two categories: heuristic initializers and informative initializers, with the key distinction being whether they require access to quantum processors.
Heuristic initialization strategies are typically classically implemented and often involve small-angle initialization  \cite{zhang2022gaussian, park2024hamiltonian, wang2024trainability, park2024hardware, shi2024avoiding} and are implemented classically, where the quantum circuit is initialized in a small region, typically around the identity matrix. This is known as ``identity initialization" \cite{grant2019initialization} and has been shown to exhibit gradients that do not vanish exponentially.

In contrast, informative initializers aim to identify high-quality initialization parameters using experimental data collected from quantum processors. These methods include warm-start techniques \cite{puig2025variational, mhiri2025unifying}, parameter transfer techniques \cite{liu2023mitigating, mele2022avoiding}, pre-training VQAs \cite{dborin2022matrix,mitarai2022quadratic,goh2023lie,shaffer2023surrogate,rudolph2023synergistic}, and neural network-based initializers \cite{verdon2019learning, cervera2021meta, luo2024quack}. Among these strategies, pre-training VQAs has emerged as a promising approach for reducing the measurement cost involved in optimizing VQAs. This method involves constructing a classical surrogate model to emulate the output of a quantum processor, assisting in the pre-training process. For instance, Ref.~\cite{goh2023lie} proposes using a Lie-algebra surrogate to pre-train VQAs, while Ref.~\cite{dborin2022matrix} introduces a circuit pre-training method based on matrix product state machine learning techniques.

The proposed predictive surrogates based on the triangular expansion complement the existing initialization techniques in advancing VQAs and, moreover, offer a new avenue for pre-training VQAs.

\medskip
\noindent \textbf{Enhancement of characterization of non-equilibrium phases}. Existing literature on phase characterization using learning-based methods predominantly focuses on equilibrium quantum many-body systems \cite{van2017learning, carrasquilla2017machine, ch2017machine, huang2022provably, kasatkin2024detecting}, where the phase is determined by the ground state of a time-independent Hamiltonian. However, there is a lack of research that applies machine learning methods to characterize the phases of non-equilibrium quantum many-body systems. This task is more challenging than its equilibrium counterpart due to the absence of well-defined steady states and the difficulty of describing the system's properties without equilibrium conditions.

\section{Learnability of the predictive surrogate $h_{\textsf{cs}}$ (Proof of Theorem 1)}\label{append:sec:proof-thm2}

This section presents the proof of Theorem 1, which analyzes the prediction error of the proposed kernel-based machine learning model defined in Eq.~(\ref{eqn:generic-learner}). The results extend our previous theoretical work~\cite{du2024efficient} to the noisy setting, providing provable guarantees for our proposal when applied to quantum circuits subject to Pauli noise. In particular, Theorem 1 indicates how the prediction error depends on the number of training examples $n$, the size of the quantum system $N$, the dimension of classical inputs $d$, and the strength of noise in Pauli Channels. Recall the explicit form of the proposed learning model in Eqs.~(\ref{eqn:generic-learner-state}) and (\ref{eqn:generic-learner}), which is 
\begin{equation}\label{append:eqn:TriGeo-non-trunc-form}\hatsigma(\bx)=\frac{1}{n}\sum_{i=1}^n\hatkappa\left(\bx, \bxi\right)\tilderho_T(\bxi) ~\text{with}~\hatkappa\left(\bx, \bxi\right) = \sum_{\bomega, \|\bomega\|_0 \leq \Lambda} 2^{\|\bomega\|_0}\Phi_{\bomega}(\bx)\Phi_{\bomega}(\bxi) \in \mathbb{R}.
\end{equation} 
Here $\tilderho_T(\bxi)$ represents the classical shadow of $\tilde{\rho}(\bxi)$ with $T$ snapshots, while $\tilde{\rho}(\bxi)$ is the quantum state prepared by the noisy circuit $\tilde{\mathcal{U}}(\bx)$ in E.~\eqref{eqn:append:noisy-u(x)}. The goal of this proof is to analyze the average discrepancy between the prediction $\Tr(\hat{\sigma}(\bx) O)$ and the expectation value  $\Tr(\tilderho(\bx) O)$, where $\bx$ is randomly and uniformly sampled from the domain $[-\pi, \pi]^d$. More precisely, we aim to bound the expected risk defined in Eq.~(\ref{eq:prediction_error}), i.e., 
\begin{equation}
  \mathsf{R} = \mathbb{E}_{\bx\sim [-\pi, \pi]^d} | \Tr(O\hatsigma(\bx)) - \Tr(O\tilderho(\bx)) |^2. 
\end{equation}

\medskip
In the remainder of this section, we first present the formal statement of Theorem 1. We then introduce two necessary lemmas used in its proof, i.e., Lemmas~\ref{lem:truncation-error-geo-kernel} and \ref{lem:estimation-error-geo-kernel}, and subsequently provide a detailed proof of Theorem~1. The proof of Lemma~\ref{lem:truncation-error-geo-kernel} is given in SM~\ref{append:sec:cs_truncate_error}.

Before proceeding with the further analysis, let us first present the formal statement of Theorem 1.  
\begin{theorem-non}[Formal statement of Theorem 1]\label{lem:Lowesa-no-trunc-prediction-error}
Following the notation in the main text, let $\tilde{\mathcal{U}}({\bx})$ in Eq.~(\ref{eqn:append:noisy-u(x)}) be the explored noisy quantum circuit. That is, each $\RZ$ gate in $\tilde{\mathcal{U}}({\bx})$ is affected by Pauli channel $\mathcal{N}_P(\PX ,\PY,\PZ )$ in Eq.~\eqref{append:eq:pauli_channel} and each Clifford operation is affected by General Pauli channel $\mathcal{M}$. Suppose that a given $N$-qubit state $\rho_0$ is evolved by this circuit, followed by operating with an observable $O$ with $O=\sum_{i=1} O_i$, $\sum_i\|O_i\|_{\infty}\leq B$, and the maximum locality of $\{O_i\}$ being $K$. 

Assume $\mathbb{E}_{\bx\sim [-\pi, \pi]^d}\|\nabla_{\bx} \Tr(\tilderho(\bx)O)\|_2^2\leq C$. Define the dataset as $\mathcal{T}_{\mathsf{s}}=\{\bxi \rightarrow \tilderho_T(\bxi)\}_{i=1}^n$, where $\bxi \sim  \textnormal{Unif}[-\pi, \pi]^d $  and $\tilderho_T(\bxi)$ refers to the Pauli-based classical shadow with $T$ snapshots. When the threshold of the frequency truncation is set  to $\Lambda=4C/\epsilon$ and  the number of trainable examples satisfies 
\begin{equation}
n = \left|\mathfrak{C}\left(\min \left\{\frac{4C}{\epsilon}, \frac{1}{2(p+\PZ )} \log\left(\frac{2}{\sqrt{\epsilon}}\right)\right\}\right)\right|  \frac{2  B^2 9^K}{\epsilon}  \log \left(\frac{2 \cdot \left|\mathfrak{C}\left(\min \left\{\frac{4C}{\epsilon}, \frac{1}{2(p+\PZ )} \log\left(\frac{2}{\sqrt{\epsilon}}\right)\right\}\right)\right|}{\delta}\right)
\end{equation}
with $\mathfrak{C}(\Lambda) =\{\bomega|\bomega \in \{0, \pm 1\}^d, ~s.t.~\|\bomega\|_0\leq \Lambda\}$ and $p=\min \{ \PX , \PY\}$, the state prediction model  $\hatsigma(\bx)$ in Eq.~(\ref{append:eqn:TriGeo-non-trunc-form}) achieves 
\begin{equation}
	\mathbb{E}_{\mathcal{T}_{\mathsf{s}}}[\hatsigma(\bx)] = \tilderho_{\Lambda}(\bx)
\end{equation}
and with  probability at least $1-\delta$, 
\begin{equation}
	\mathbb{E}_{\bx\sim [-\pi, \pi]^d} \left| \Tr(O\hatsigma(\bx)) - \Tr(O\tilderho(\bx))  \right|^2 \leq \epsilon.
\end{equation}	
\end{theorem-non}

To reach Theorem 1, we first use the triangle inequality to decouple the difference between the prediction $\Tr(O\hatsigma(\bx))$ and the expectation value $\Tr(O\tilderho(\bx))$ into the truncation error and the estimation error, i.e.,
\begin{subequations}
\begin{eqnarray}\label{append:eqn:decouple-dis-model-truth}
	&& \mathbb{E}_{\bx\sim [-\pi ,\pi]^d} \left[ \left| \Tr \left(O \hatsigma(\bx)\right) - \Tr \left(O\tilderho(\bx)\right)  \right|^2 \right] \\
	\leq && \Bigg(\sqrt{\mathbb{E}_{\bx\sim [-\pi ,\pi]^d} \left[ \left| \Tr \left(O \tilderho_{\Lambda}(\bx) \right) - \Tr(O\tilderho(\bx))  \right|^2 \right]} +
	\sqrt{\mathbb{E}_{\bx\sim [-\pi ,\pi]^d} \left[ \left| \Tr \left(O \hatsigma(\bx)\right) - \Tr(O\tilderho_{\Lambda}(\bx))  \right|^2 \right]}
	  \Bigg)^2. \label{eq:predic_error_decomposition}
\end{eqnarray}
\end{subequations} 
After decoupling, we now separately derive the upper bound of these two terms, where the relevant results are encapsulated in the following two lemmas with proofs deferred to SM~\ref{append:sec:cs_truncate_error} and SM~\ref{append:sec:cs_estimate_error}, respectively.

\begin{lemma}[Truncation error of $\tilderho_{\Lambda}(\bx)$]\label{lem:truncation-error-geo-kernel} 
 Following notations in Theorem 1, assume $\mathbb{E}_{\bx\sim [-\pi, \pi]^d}\|\nabla_{\bx} \Tr(\tilderho(\bx)O)\|_2^2\leq C$. The truncation error is upper bounded by
 \begin{equation}\label{append:eqn:noisy_truncation_error}
 	\mathbb{E}_{\bx\sim [-\pi, \pi]^d} \left|\Tr(O\tilderho_{\Lambda}(\bx)- \Tr(O\tilderho(\bx)) \right|^2 \leq \min\left\{\frac{C}{\Lambda}, B\cdot\exp\left(-2(p+\PZ )\Lambda\right)\right\}.
 \end{equation}
\end{lemma}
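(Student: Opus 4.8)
\textbf{Proof plan for Lemma~\ref{lem:truncation-error-geo-kernel}.}
The goal is to control the truncation error $\mathbb{E}_{\bx}|\Tr(O\tilderho_{\Lambda}(\bx))-\Tr(O\tilderho(\bx))|^2$, where $\tilderho_{\Lambda}(\bx)$ keeps only the trigonometric monomials $\Phi_{\bomega}$ with $\|\bomega\|_0\le\Lambda$. The starting point is the trigonometric expansion in Eq.~\eqref{eq:fourier_expansion}, $\Tr(\tilderho(\bx)O)=\sum_{\bomega}\Phi_{\bomega}(\bx)\tilde{\bm{\alpha}}_{\bomega}$, so the truncation error equals $\mathbb{E}_{\bx}\bigl|\sum_{\|\bomega\|_0>\Lambda}\Phi_{\bomega}(\bx)\tilde{\bm{\alpha}}_{\bomega}\bigr|^2$. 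Because the trigonometric monomials $\{\Phi_{\bomega}\}$ are orthogonal (up to normalization constants $2^{-\|\bomega\|_0}$) under the uniform measure on $[-\pi,\pi]^d$, this collapses to a weighted sum $\sum_{\|\bomega\|_0>\Lambda} 2^{-\|\bomega\|_0}\tilde{\bm{\alpha}}_{\bomega}^2$ (this is exactly the Parseval-type identity that underlies the $2^{\|\bomega\|_0}$ weighting in the kernel $\hatkappa$). So the whole lemma reduces to showing this tail sum is at most $\min\{C/\Lambda,\,B\exp(-2(p+\PZ)\Lambda)\}$.

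For the first bound $C/\Lambda$, I would reuse the noiseless argument from Ref.~\cite{du2024efficient}: the gradient-norm hypothesis $\mathbb{E}_{\bx}\|\nabla_{\bx}\Tr(\tilderho(\bx)O)\|_2^2\le C$ translates, via differentiating the trigonometric series term by term and applying Parseval again, into $\sum_{\bomega}\|\bomega\|_0\, 2^{-\|\bomega\|_0}\tilde{\bm{\alpha}}_{\bomega}^2\le C$ (each $\sin/\cos$ that gets differentiated contributes a factor counted by $\|\bomega\|_0$). Then $\sum_{\|\bomega\|_0>\Lambda}2^{-\|\bomega\|_0}\tilde{\bm{\alpha}}_{\bomega}^2 \le \frac{1}{\Lambda}\sum_{\|\bomega\|_0>\Lambda}\|\bomega\|_0\,2^{-\|\bomega\|_0}\tilde{\bm{\alpha}}_{\bomega}^2 \le C/\Lambda$ is a one-line Markov-type estimate. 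Since the noisy coefficients only contract ($|\tilde{\bm{\alpha}}_{\bomega}|\le|\bm{\alpha}_{\bomega}|$ by Lemma~\ref{append:lem:noisy_coefficient} with $q\le 1$), nothing is lost here; in fact the noisy gradient bound is already the hypothesis, so this direction is essentially identical to the noiseless case.

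For the second bound, which is the genuinely new, noise-exploiting part, I would invoke Lemma~\ref{append:lem:noisy_coefficient}: $|\tilde{\bm{\alpha}}_{\bomega}|\le q^{\|\bomega\|_0}|\bm{\alpha}_{\bomega}|$ with $q=1-2(p+\PZ)\le e^{-2(p+\PZ)}$. Substituting into the tail sum gives $\sum_{\|\bomega\|_0>\Lambda}2^{-\|\bomega\|_0}\tilde{\bm{\alpha}}_{\bomega}^2 \le \sum_{\|\bomega\|_0>\Lambda}2^{-\|\bomega\|_0}q^{2\|\bomega\|_0}\bm{\alpha}_{\bomega}^2$. Here I need a crude uniform bound on $\sum_{\|\bomega\|_0=k}2^{-k}\bm{\alpha}_{\bomega}^2$; this is $\le \|O\|$-type quantity bounded by $B^2$ (or $B$ after the obvious normalization conventions the paper uses), using that $\sum_{\bomega}2^{-\|\bomega\|_0}\bm{\alpha}_{\bomega}^2 = \mathbb{E}_{\bx}|\Tr(\rho(\bx)O)|^2\le\|O\|_\infty^2\le B^2$ and that the tail is dominated by $q^{2\Lambda}$. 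Pulling out $q^{2\Lambda}\le\exp(-2(p+\PZ)\Lambda)$ (the exponent in the claim uses a single power $-2(p+\PZ)\Lambda$, absorbing the factor of $2$ into constants / the $B$ versus $B^2$ bookkeeping, which I would reconcile carefully with the paper's normalization) leaves a convergent geometric remainder bounded by $B$.

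\textbf{Main obstacle.} The routine part is the Parseval bookkeeping; the delicate part is getting the constants and normalization to match the stated bound exactly — specifically, tracking whether the observable bound should appear as $B$ or $B^2$, whether the exponent is $2(p+\PZ)\Lambda$ or $(p+\PZ)\Lambda\cdot 2$ after using $q=1-2(p+\PZ)\le e^{-2(p+\PZ)}$ versus $q^2\le e^{-4(p+\PZ)}$, and how the locality parameter $K$ enters (it does not appear in this lemma but does in the sample complexity, so I must make sure the crude observable-norm bound used here is the looser $B$-type one rather than the $9^K$-refined estimate). I expect the cleanest route is to prove both bounds with the weaker constants that are easy to justify and then observe they are dominated by the stated $\min\{C/\Lambda,\,B\exp(-2(p+\PZ)\Lambda)\}$, rather than chasing the tightest possible constants.
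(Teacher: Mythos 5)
Your plan reproduces the paper's proof almost exactly: the paper's own argument (see Lemmas~\ref{lem:truncation_error_idea} and \ref{lem:truncation_error_noise}) is precisely to (i) port the noiseless Parseval-plus-gradient-Markov argument of Ref.~\cite{du2024efficient} verbatim to the noisy coefficients, and (ii) invoke the exponential decay of Fourier tails under Pauli noise from Ref.~\cite{fontana2023classical}, which is nothing but the $|\tilde{\bm{\alpha}}_{\bomega}|\le q^{\|\bomega\|_0}|\bm{\alpha}_{\bomega}|$ contraction of Lemma~\ref{append:lem:noisy_coefficient} plugged into the Parseval identity with the uniform bound $\sum_{\bomega}2^{-\|\bomega\|_0}\bm{\alpha}_{\bomega}^2=\mathbb{E}_\bx|\Tr(\rho(\bx)O)|^2\le B^2$. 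The one ``obstacle'' you flag is real, and you should not try to reconcile it away: your calculation gives $q^{2\Lambda}B^2\le B^2\exp(-4(p+\PZ)\Lambda)$, which agrees with Lemma~\ref{lem:truncation_error_noise} as stated, whereas the bound $B\exp(-2(p+\PZ)\Lambda)$ appearing in Lemma~\ref{lem:truncation-error-geo-kernel} is the \emph{square root} of that quantity. The paper's proof of Theorem~1 then takes $\min\{\sqrt{C/\Lambda},\,B\exp(-2(p+\PZ)\Lambda)\}\le \sqrt{\epsilon}/2$, i.e.\ it correctly treats $B\exp(-2(p+\PZ)\Lambda)$ as the square root of the truncation error, so the downstream sample complexity is unaffected; but the Lemma~\ref{lem:truncation-error-geo-kernel} statement itself should read $B^2\exp(-4(p+\PZ)\Lambda)$ to be consistent with Lemma~\ref{lem:truncation_error_noise} and with the way it is later applied. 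Your derivation is the correct one, and you were right to resist contorting the normalization to match the displayed constant.
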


\begin{lemma}[Estimation error of $\hatsigma(\bx)$, adapted from Lemma~F.2 in Ref.~\cite{du2024efficient}]\label{lem:estimation-error-geo-kernel} 
 Following notations in Theorem 1, with probability at least $1-\delta$, the estimation error of $\hatsigma(\bx)$ induced by finite training examples $\mathcal{T}_{\mathsf{s}}=\{\bxi \rightarrow \tilderho(\bxi)\}_{i=1}^n$ is upper bounded by
 \begin{equation}\label{append:eqn:trigo-estimation-error}
 	\mathbb{E}_{\bx\sim [-\pi, \pi]^d} \left[\left |\Tr(O  \hatsigma(\bx)) - \Tr(O \tilderho_{\Lambda}(\bx))\right|^2 \right] \leq    |\mathfrak{C}(\Lambda)|  \frac{1}{2n} B^2 9^K  \log \left(\frac{2 \cdot |\mathfrak{C}(\Lambda)| }{\delta}\right),
 \end{equation}
 where $\mathfrak{C}(\Lambda) =\{\bomega|\bomega \in \{0, \pm 1\}^d, ~s.t.~\|\bomega\|_0\leq \Lambda\}$ refers to the set of truncated frequencies.
\end{lemma}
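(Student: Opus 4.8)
The plan is to reduce the estimation error to a Parseval-type identity in the trigonometric Fourier domain followed by a per-coefficient concentration bound; this part of the argument is essentially noise-agnostic and follows the strategy of Lemma~F.2 of Ref.~\cite{du2024efficient}, the only change being that the target coefficients are now the noisy coefficients $\tilde{\bm{\alpha}}_{\bomega}$ of Eq.~(\ref{eq:fourier_expansion}). First I would substitute the explicit truncated kernel $\hatkappa(\bx,\bxi)=\sum_{\bomega\in\mathfrak{C}(\Lambda)}2^{\|\bomega\|_0}\Phi_{\bomega}(\bx)\Phi_{\bomega}(\bxi)$ into $\Tr(O\hatsigma(\bx))$ and exchange the two finite sums to obtain
\[\Tr(O\hatsigma(\bx)) = \sum_{\bomega\in\mathfrak{C}(\Lambda)}\hat{\bm{\beta}}_{\bomega}\,\Phi_{\bomega}(\bx),\qquad \hat{\bm{\beta}}_{\bomega}:=\frac{2^{\|\bomega\|_0}}{n}\sum_{i=1}^n\Phi_{\bomega}(\bxi)\,g(\bxi,O),\]
where $g(\bxi,O)=\Tr(\tilderho_T(\bxi)O)$. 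Thus $\Tr(O\hatsigma(\bx))$ is a truncated trigonometric polynomial whose coefficients are empirical averages, while $\Tr(O\tilderho_{\Lambda}(\bx))=\sum_{\bomega\in\mathfrak{C}(\Lambda)}\tilde{\bm{\alpha}}_{\bomega}\Phi_{\bomega}(\bx)$ is the same polynomial with the true coefficients.

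The second step is to exploit orthogonality of the trigonometric monomial basis under $\Unif[-\pi,\pi]^d$: the uniform measure factorizes over coordinates and on each coordinate $\{1,\cos,\sin\}$ are pairwise orthogonal, so $\mathbb{E}_{\bx}[\Phi_{\bomega}\Phi_{\bomega'}]=2^{-\|\bomega\|_0}\mathbbm{1}_{\bomega=\bomega'}$. This yields at once (i) the Parseval identity
\[\mathbb{E}_{\bx\sim[-\pi,\pi]^d}\left|\Tr(O\hatsigma(\bx))-\Tr(O\tilderho_{\Lambda}(\bx))\right|^2=\sum_{\bomega\in\mathfrak{C}(\Lambda)}2^{-\|\bomega\|_0}\left(\hat{\bm{\beta}}_{\bomega}-\tilde{\bm{\alpha}}_{\bomega}\right)^2,\]
and (ii) unbiasedness $\mathbb{E}_{\mathcal{T}_{\mathsf{s}}}[\hat{\bm{\beta}}_{\bomega}]=\tilde{\bm{\alpha}}_{\bomega}$, obtained from the unbiasedness of the Pauli classical shadow ($\mathbb{E}[g(\bx,O)]=\Tr(\tilderho(\bx)O)$) and reading off the $\bomega$-component of the expansion via the same orthogonality; point (ii) also gives the operator-level identity $\mathbb{E}_{\mathcal{T}_{\mathsf{s}}}[\hatsigma(\bx)]=\tilderho_{\Lambda}(\bx)$ needed for the formal theorem.

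The third step is the concentration estimate. Write $\hat{\bm{\beta}}_{\bomega}-\tilde{\bm{\alpha}}_{\bomega}=2^{\|\bomega\|_0}\,\frac1n\sum_i(\xi_i-\mathbb{E}\xi_i)$ with $\xi_i=\Phi_{\bomega}(\bxi)g(\bxi,O)$. Since $|\Phi_{\bomega}|\le1$ pointwise, and since the single-shot Pauli-shadow bound together with $O=\sum_iO_i$, max-locality $K$, and $\sum_i\|O_i\|_\infty\le B$ give $|g(\bx,O)|\le 3^{K}B$, we have $|\xi_i|\le 3^{K}B$ and — the key point — $\mathbb{E}[\xi_i^2]\le 9^{K}B^2\,\mathbb{E}_{\bx}[\Phi_{\bomega}^2]=9^{K}B^2\,2^{-\|\bomega\|_0}$. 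A Bernstein bound with confidence level $\delta/|\mathfrak{C}(\Lambda)|$ then gives, with probability at least $1-\delta/|\mathfrak{C}(\Lambda)|$, that $2^{-\|\bomega\|_0}(\hat{\bm{\beta}}_{\bomega}-\tilde{\bm{\alpha}}_{\bomega})^2=2^{\|\bomega\|_0}\bigl(\frac1n\sum_i(\xi_i-\mathbb{E}\xi_i)\bigr)^2\lesssim \frac{9^{K}B^2}{n}\log\!\bigl(2|\mathfrak{C}(\Lambda)|/\delta\bigr)$, where the factor $2^{-\|\bomega\|_0}$ carried by the variance exactly absorbs the $2^{\|\bomega\|_0}$ coming from the rescaling in $\hat{\bm{\beta}}_{\bomega}$. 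A union bound over the $|\mathfrak{C}(\Lambda)|$ frequencies and summation of the Parseval identity then produce the claimed $|\mathfrak{C}(\Lambda)|\,\tfrac{1}{2n}B^2 9^K\log(2|\mathfrak{C}(\Lambda)|/\delta)$, with the factor $\tfrac12$ absorbing the Bernstein constants.

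I expect the main obstacle to be pinning down the exact constants ($\tfrac{1}{2n}$ and $9^K$) rather than a loose $\tilde O(\cdot)$: this forces a variance-based (Bernstein-type) estimate instead of a plain Hoeffding estimate, and one must track the $2^{\pm\|\bomega\|_0}$ weights carefully so the variance's $2^{-\|\bomega\|_0}$ cancels the rescaling factor. A subsidiary technical point is that the additive range term in Bernstein scales like $2^{\|\bomega\|_0}(3^{K}B\log/n)^2$ and must be shown to be dominated by the variance term; this holds whenever $n\gtrsim 2^{\Lambda}\log(|\mathfrak{C}(\Lambda)|/\delta)$, which is implied by the sample complexity used in Theorem~1 (so it folds into the hypotheses), and it can be eliminated entirely by using a median-of-means classical-shadow estimator. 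Finally, one checks that replacing $\tilderho(\bxi)$ by the $T$-snapshot shadow $\tilderho_T(\bxi)$ changes nothing: the $T$-average preserves the mean and only shrinks the second moment, so the same bound holds already at $T=1$.
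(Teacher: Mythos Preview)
Your proposal is correct and follows the same strategy as the paper. The paper itself does not give a full self-contained proof of this lemma; instead it invokes Lemma~F.2 of Ref.~\cite{du2024efficient} and argues in two short paragraphs that the noiseless argument carries over verbatim, because (i) the unbiasedness $\mathbb{E}_{\mathcal{T}_{\mathsf{s}}}[\hatsigma(\bx)]=\tilderho_{\Lambda}(\bx)$ depends only on $\bx\sim\Unif[-\pi,\pi]^d$, and (ii) the only quantum-dependent input to the concentration step is the single-shot shadow bound $|\Tr(\rho_1(\bx)O)|\le 3^KB$, which holds for any underlying state. Your Parseval-plus-Bernstein outline is precisely the content of that cited lemma, so you have effectively reconstructed the referenced proof rather than taking a different route.
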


Supported by these two lemmas, we are now ready to present the proof of Theorem 1.
\begin{proof}[Proof of Theorem 1]
The difference between the prediction and ground truth can be obtained by integrating Lemma~\ref{lem:truncation-error-geo-kernel} and Lemma~\ref{lem:estimation-error-geo-kernel} into Eq.~(\ref{append:eqn:decouple-dis-model-truth}). Mathematically, with probability at least $1-\delta$, we obtain
\begin{eqnarray}
	&& \mathbb{E}_{\bx\sim [-\pi ,\pi]^d} \left[ \left| \Tr(O \hatsigma(\bx)) - \Tr(O\tilderho(\bx))  \right|^2 \right]  
     \nonumber \\
     \leq && 
     \left(\min\left \{ \sqrt{\frac{C}{\Lambda}}, B\cdot\exp\left(-2(p+\PZ )\Lambda\right)  \right\} 
     +  \sqrt{|\mathfrak{C}(\Lambda)|  \frac{1}{2n} B^2 9^K  \log \left(\frac{2 \cdot |\mathfrak{C}(\Lambda)|}{\delta}\right)} \right)^2.
\end{eqnarray}
To ensure the average prediction error is upper bounded by $\epsilon$, it is sufficient to showcase when the inner two terms are upper bounded by $\sqrt{\epsilon}/2$. For the first term, the condition is satisfied when
\begin{equation}
	\min\left \{ \sqrt{\frac{C}{\Lambda}}, B\cdot\exp\left(-2(p+\PZ )\Lambda\right)  \right\} \leq \frac{\sqrt{\epsilon}}{2} \Leftrightarrow \Lambda \geq \min \left\{\frac{4C}{\epsilon}, \frac{1}{2(p+\PZ )} \log\left(\frac{2B}{\sqrt{\epsilon}}\right)\right\}. 
\end{equation}
For the second term, we have 
\begin{eqnarray}
&&	\sqrt{|\mathfrak{C}(\Lambda)|  \frac{1}{2n} B^2 9^K  \log \left(\frac{2 \cdot |\mathfrak{C}(\Lambda)|}{\delta}\right)} \leq \frac{\sqrt{\epsilon}}{2}   
\Leftrightarrow  n \geq  |\mathfrak{C}(\Lambda)|  \frac{2  B^2 9^K}{\epsilon}  \log \left(\frac{2 \cdot |\mathfrak{C}(\Lambda)|}{\delta}\right). 
\end{eqnarray}  

Taken together, with probability $1-\delta$, the prediction error is upper bounded by $\epsilon$ when the number of training examples satisfies
\begin{equation}
	n \geq \left|\mathfrak{C}\left(\min \left\{\frac{4C}{\epsilon}, \frac{1}{2(p+\PZ )} \log\left(\frac{2B}{\sqrt{\epsilon}}\right)\right\}\right)\right|  \frac{2  B^2 9^K}{\epsilon}  \log \left(\frac{2 \cdot \left|\mathfrak{C}\left(\min \left\{\frac{4C}{\epsilon}, \frac{1}{2(p+\PZ )} \log\left(\frac{2B}{\sqrt{\epsilon}}\right)\right\}\right)\right|}{\delta}\right).
\end{equation}

\end{proof}

\subsection{Truncation error of the classical learning model (Proof of Lemma \ref{lem:truncation-error-geo-kernel})}\label{append:sec:cs_truncate_error}
The proof of Lemma~\ref{lem:truncation-error-geo-kernel} leverages the following two lemmas.

\begin{lemma}[Adapted from Lemma~F.1, Ref.~\cite{du2024efficient}]
\label{lem:truncation_error_idea}
Following the notation in Theorem 1, assuming $\mathbb{E}_{\bx\sim [-\pi, \pi]^d}\|\nabla_{\bx} \Tr(\tilderho(\bx)O)\|_2^2\leq C$, the truncation error induced by removing high-frequency terms of $\tilderho(\bx)$ under the trigonometric  expansion with $\|\bomega\|_0\leq \Lambda$ is upper bounded by
 \begin{equation}\label{append:eqn:truncation_error}
 	\mathbb{E}_{\bx\sim [-\pi, \pi]^d} \left|\Tr(O\tilderho_{\Lambda}(\bx)- \Tr(O\tilderho(\bx)) \right|^2 \leq \frac{C}{\Lambda}.
 \end{equation}
\end{lemma}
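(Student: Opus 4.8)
The plan is to exploit the fact that the noisy expectation value $g(\bx) := \Tr(O\tilderho(\bx))$ has exactly the same trigonometric-polynomial structure as its noiseless counterpart: by Eq.~(\ref{eq:fourier_expansion}) it equals the \emph{finite} sum $\sum_{\bomega\in\{0,\pm1\}^d}\Phi_{\bomega}(\bx)\,\tilde{\bm{\alpha}}_{\bomega}$, differing from the noiseless case only in that the coefficients $\bm{\alpha}_{\bomega}$ are replaced by $\tilde{\bm{\alpha}}_{\bomega}$. Since the noiseless truncation bound (Lemma~F.1 of Ref.~\cite{du2024efficient}) uses nothing about the coefficients beyond this expansion, together with the gradient-norm hypothesis --- which here is imposed directly on $g$ --- the argument carries over essentially verbatim. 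Concretely, I would first record the Parseval identity for the uniform measure on $[-\pi,\pi]^d$: because $\{1,\cos,\sin\}$ is an orthogonal family on $[-\pi,\pi]$ with squared $L^2$-norms $1,\tfrac12,\tfrac12$, their tensor products obey $\mathbb{E}_{\bx}[\Phi_{\bomega}(\bx)\Phi_{\bomega'}(\bx)] = 2^{-\|\bomega\|_0}\mathbbm{1}_{\bomega=\bomega'}$, so any trigonometric polynomial $\sum_{\bomega}c_{\bomega}\Phi_{\bomega}$ satisfies $\mathbb{E}_{\bx}[(\sum_{\bomega}c_{\bomega}\Phi_{\bomega})^2]=\sum_{\bomega}2^{-\|\bomega\|_0}c_{\bomega}^2$.

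Next I would translate the gradient-norm hypothesis into a statement about the frequency content of $g$. Writing $\Phi_{\bomega}=\prod_l\phi_{\bomega_l}$ with $\phi_0=1,\phi_1=\cos,\phi_{-1}=\sin$, the coordinate derivative is $\partial_{\bx_j}\Phi_{\bomega}=0$ when $\bomega_j=0$ and $\partial_{\bx_j}\Phi_{\bomega}=\pm\Phi_{\sigma_j(\bomega)}$ when $\bomega_j\neq0$, where $\sigma_j$ flips the sign of the $j$-th entry --- hence $\sigma_j$ preserves $\|\bomega\|_0$ and is a bijection of $\{\bomega:\bomega_j\neq0\}$ onto itself. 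Applying Parseval to $\partial_{\bx_j}g=\sum_{\bomega:\bomega_j\neq0}(\pm\tilde{\bm{\alpha}}_{\bomega})\Phi_{\sigma_j(\bomega)}$ and re-indexing the sum by $\bm{\mu}=\sigma_j(\bomega)$ gives $\mathbb{E}_{\bx}[(\partial_{\bx_j}g)^2]=\sum_{\bomega:\bomega_j\neq0}2^{-\|\bomega\|_0}\tilde{\bm{\alpha}}_{\bomega}^2$; summing over $j$ and interchanging the order of summation yields the clean identity $\mathbb{E}_{\bx}\|\nabla_{\bx}g(\bx)\|_2^2=\sum_{\bomega}\|\bomega\|_0\,2^{-\|\bomega\|_0}\tilde{\bm{\alpha}}_{\bomega}^2\leq C$.

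Finally, Parseval once more gives $\mathbb{E}_{\bx}|\Tr(O\tilderho_{\Lambda}(\bx))-\Tr(O\tilderho(\bx))|^2=\sum_{\|\bomega\|_0>\Lambda}2^{-\|\bomega\|_0}\tilde{\bm{\alpha}}_{\bomega}^2$, and on the truncated tail $\|\bomega\|_0>\Lambda$ one has $1\leq\|\bomega\|_0/\Lambda$; inserting this weight and then extending the (nonnegative) sum to all $\bomega$ produces the Markov-type bound $\sum_{\|\bomega\|_0>\Lambda}2^{-\|\bomega\|_0}\tilde{\bm{\alpha}}_{\bomega}^2\leq\tfrac1\Lambda\sum_{\bomega}\|\bomega\|_0\,2^{-\|\bomega\|_0}\tilde{\bm{\alpha}}_{\bomega}^2=\tfrac1\Lambda\,\mathbb{E}_{\bx}\|\nabla_{\bx}g\|_2^2\leq C/\Lambda$, which is the claim. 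The only place that demands care is the derivative--frequency dictionary in the second step: one must check that each $\partial_{\bx_j}$ sends a monomial to $\pm$ a monomial of the \emph{same} Hamming weight and that $\sigma_j$ is a bijection, so that Parseval applies to $\partial_{\bx_j}g$ without cross terms --- but this is bookkeeping rather than a genuine obstacle, and since the frequency set $\{0,\pm1\}^d$ is finite there are no convergence subtleties. The substantive new content relative to Ref.~\cite{du2024efficient} is simply the observation that the noisy state retains the identical expansion structure, so the bound is insensitive to whether the circuit is noiseless or subject to Pauli noise.
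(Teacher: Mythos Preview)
Your proposal is correct and follows exactly the same route as the paper: the paper does not give an independent argument but simply observes that the noiseless bound of Lemma~F.1 in Ref.~\cite{du2024efficient} carries over verbatim because $\Tr(O\tilderho(\bx))$ admits the identical trigonometric expansion \eqref{eq:fourier_expansion} with $\tilde{\bm{\alpha}}_{\bomega}$ in place of $\bm{\alpha}_{\bomega}$, and the gradient hypothesis is imposed directly on the noisy function. Your Parseval--derivative--Markov computation is precisely the content of that cited lemma, so you have in fact supplied the details the paper delegates to the reference.
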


The result of Lemma~\ref{lem:truncation_error_idea} was established for the noiseless state $\rho(\bx)$.  We now explain why it can be adapted to the noisy states $\tilderho(\bx)$. Specifically, the target functions $f(\tilde{\rho}\bx, O)=\Tr(\tilderho(\bx)O)$ for both the noiseless state $\rho(\bx)$ and the noisy state $\tilderho(\bx)$ have the same form of Fourier expansions given in Eq.~\eqref{eq:fourier_expansion}, differing only in their Fourier coefficients $\{\tilde{\bm{\alpha}}_{\bomega}\}$ and $\{\bm{\alpha}_{\bomega}\}$. Given the same constraints on these coefficients, namely, $\mathbb{E}_{\bx\sim [-\pi, \pi]^d}\|\nabla_{\bx} \Tr(\tilderho(\bx)O)\|_2^2\leq C$,  the truncation error in Eq.~\eqref{append:eqn:truncation_error} for both the noiseless state $\rho_{\Lambda}(\bx)$ and the noisy state  $\tilderho_{\Lambda}(\bx)$ has the same upper bound.

\begin{lemma}[Theorem 2, Ref.~\cite{fontana2023classical}]
\label{lem:truncation_error_noise}
Following the notation in Theorem 1, the truncation error induced by removing high-frequency terms of $\tilderho(\bx)$ under the trigonometric  expansion with $\|\bomega\|_0\leq \Lambda$ is upper bounded by
 \begin{equation}\label{append:eqn:trigo-expan-traget-state}
 	\mathbb{E}_{\bx\sim [-\pi, \pi]^d} \left|\Tr(O\tilderho_{\Lambda}(\bx)- \Tr(O\tilderho(\bx)) \right|^2 \leq   \exp\left(-4(p+\PZ )\Lambda\right)\cdot \|O\|_{\infty}^2,
 \end{equation}
 where $p:=\min \{\PX , \PY\}$ and $\PZ $ are the noise parameters of the Pauli channel $\mathcal{N}_P$ applied to the rotation gates.
\end{lemma}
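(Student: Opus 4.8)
\textbf{Proof proposal for Lemma~\ref{lem:truncation_error_noise}.}

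The plan is to reduce the claim to a Parseval-type identity over the orthogonal trigonometric monomial basis and then feed in the coefficient-contraction estimate of Lemma~\ref{append:lem:noisy_coefficient}. First I would record that, under $\bx\sim\Unif[-\pi,\pi]^d$, the monomials $\{\Phi_{\bomega}\}_{\bomega\in\{0,\pm 1\}^d}$ are orthogonal: coordinatewise one has $\mathbb{E}[\cos\bx_l]=\mathbb{E}[\sin\bx_l]=\mathbb{E}[\cos\bx_l\sin\bx_l]=0$ and $\mathbb{E}[\cos^2\bx_l]=\mathbb{E}[\sin^2\bx_l]=\tfrac12$, so that $\mathbb{E}_{\bx}[\Phi_{\bomega}(\bx)\Phi_{\bomega'}(\bx)]=\delta_{\bomega,\bomega'}\,2^{-\|\bomega\|_0}$. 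Since the trigonometric expansion $\Tr(\tilderho(\bx)O)=\sum_{\bomega}\Phi_{\bomega}(\bx)\tilde{\bm{\alpha}}_{\bomega}$ of Eq.~\eqref{eq:fourier_expansion} is a finite sum over $\{0,\pm1\}^d$ and $\tilderho_{\Lambda}(\bx)$ is obtained precisely by deleting the terms with $\|\bomega\|_0>\Lambda$, Parseval gives
\[
\mathbb{E}_{\bx\sim[-\pi,\pi]^d}\left|\Tr(O\tilderho_{\Lambda}(\bx))-\Tr(O\tilderho(\bx))\right|^2 \;=\; \sum_{\bomega:\,\|\bomega\|_0>\Lambda}\frac{|\tilde{\bm{\alpha}}_{\bomega}|^2}{2^{\|\bomega\|_0}}.
\]

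Next I would control the tail sum using Lemma~\ref{append:lem:noisy_coefficient}, which yields $|\tilde{\bm{\alpha}}_{\bomega}|\le q^{\|\bomega\|_0}|\bm{\alpha}_{\bomega}|$ with $q:=1-2(p+\PZ)=\max\{q_{\mathrm{X}},q_{\mathrm{Y}}\}\in[0,1]$. Because every index in the sum satisfies $\|\bomega\|_0>\Lambda$ and $q^{2m}$ is non-increasing in $m$, I can factor out $q^{2\Lambda}$ and then enlarge the sum back to all of $\{0,\pm1\}^d$:
\[
\sum_{\bomega:\,\|\bomega\|_0>\Lambda}\frac{|\tilde{\bm{\alpha}}_{\bomega}|^2}{2^{\|\bomega\|_0}} \;\le\; q^{2\Lambda}\sum_{\bomega}\frac{|\bm{\alpha}_{\bomega}|^2}{2^{\|\bomega\|_0}}.
\]
Reading the same Parseval identity in reverse on the noiseless expansion of Eq.~\eqref{eqn:append:expectation-idea}, the residual sum equals $\mathbb{E}_{\bx}|\Tr(\rho(\bx)O)|^2$; since $\rho(\bx)=U(\bx)\rho_0U(\bx)^\dagger$ is a density operator, $|\Tr(\rho(\bx)O)|\le\|\rho(\bx)\|_1\|O\|_\infty=\|O\|_\infty$ pointwise, so the residual sum is at most $\|O\|_\infty^2$.

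Finally I would turn the polynomial factor into the advertised exponential via the elementary bound $1-u\le e^{-u}$ with $u=2(p+\PZ)$, giving $q^{2\Lambda}=(1-2(p+\PZ))^{2\Lambda}\le e^{-4(p+\PZ)\Lambda}$; chaining the three displays yields exactly $\mathbb{E}_{\bx}|\Tr(O\tilderho_{\Lambda}(\bx))-\Tr(O\tilderho(\bx))|^2\le e^{-4(p+\PZ)\Lambda}\|O\|_\infty^2$. I do not expect a genuine obstacle here: the argument is a short Parseval computation wrapped around the already-established contraction estimate. The only points demanding care are (i) getting the orthogonality normalization $2^{-\|\bomega\|_0}$ right so no stray constants appear, (ii) keeping the contraction factor $q=1-2(p+\PZ)$ in $[0,1]$ — it coincides with $\max\{q_{\mathrm{X}},q_{\mathrm{Y}}\}$ used in the proof of Lemma~\ref{append:lem:noisy_coefficient}, so this is inherited — and (iii) propagating coefficient magnitudes (not signed coefficients) consistently through the chain so that the contraction and the $\|O\|_\infty$ bound apply term by term; in the degenerate case $\max\{p,\PZ\}=0$ the estimate is trivial, since the right-hand side is then simply $\|O\|_\infty^2$, which already dominates the left-hand side.
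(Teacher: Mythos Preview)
Your argument is correct: the Parseval identity over the trigonometric monomials, followed by the contraction $|\tilde{\bm{\alpha}}_{\bomega}|\le q^{\|\bomega\|_0}|\bm{\alpha}_{\bomega}|$ from Lemma~\ref{append:lem:noisy_coefficient}, the reverse Parseval bound $\sum_{\bomega}2^{-\|\bomega\|_0}|\bm{\alpha}_{\bomega}|^2=\mathbb{E}_{\bx}|\Tr(\rho(\bx)O)|^2\le\|O\|_\infty^2$, and finally $1-u\le e^{-u}$, chain cleanly to the stated bound. The normalization, the sign/magnitude handling, and the degenerate case are all in order.

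Note, however, that the paper does not supply its own proof of this lemma at all: it is quoted verbatim as Theorem~2 of Ref.~\cite{fontana2023classical} and invoked as a black box in the proof of Lemma~\ref{lem:truncation-error-geo-kernel}. So there is no ``paper's proof'' to compare against. What you have done is give a short, self-contained derivation using only ingredients already established in this paper (the orthogonality of the $\Phi_{\bomega}$ and Lemma~\ref{append:lem:noisy_coefficient}); this is a perfectly acceptable substitute for the external citation.
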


\begin{proof}[Proof of Lemma \ref{lem:truncation-error-geo-kernel}]
    The results of Lemma \ref{lem:truncation-error-geo-kernel} can be immediately obtained by combining Lemma~\ref{lem:truncation_error_idea} and Lemma~\ref{lem:truncation_error_noise}.
\end{proof}

\subsection{Estimation error of the classical learning model (Proof of Lemma~\ref{lem:estimation-error-geo-kernel}) }  \label{append:sec:cs_estimate_error}

The results presented in Ref.~\cite[Lemma~F.2]{du2024efficient} were derived under the assumption of a noiseless state $\rho(\bx)$, where the prediction model follows the same form as that for the noisy state $\tilderho(\bx)$ defined in Eq.~(\ref{append:eqn:TriGeo-non-trunc-form}), namely,
\[\sigma(\bx)=\frac{1}{n}\sum_{i=1}^n\hatkappa\left(\bx, \bxi\right)\rho_T(\bxi).\] 
The only difference is that the classical shadow representation is obtained from the noiseless state $\rho(\bx)$.
To extend this analysis to the noisy case, we first review the key steps of the derivation for the noiseless scenario and explain how these steps adapt to the noisy setting. 

The first step is to show that the state prediction model $\sigma(\bx)$ is equal to the trigonometric expansion of the truncated target quantum state $\rho_{\Lambda}(\bx)$ when taking the expectation over the training data, namely $\mathbb{E}_{\mathcal{T}}[\sigma(\bx)]=\rho_{\Lambda}(\bx)$, which includes the randomness from the sampled inputs $\bx$ and classical shadow. The only condition for this equality to hold is that the input $\bx$ is uniformly distributed over the interval $[-\pi, \pi]^d$, which is independent of the specific form of the quantum state. Therefore, this equality also holds for the truncated noisy state $\tilderho_{\Lambda}(\bx)$.

The second step involves employing the classical concentration inequality to upper bound the estimation error of the proposed state prediction model $\hatsigma(\bx)$ when considering the expectation value of an unseen state under the specified observable $O$,  i.e.,  $\mathbb{E}_{\bx\sim [-\pi,\pi]^d} [  | \Tr(O \sigma(\bx)) - \Tr(O\rho_{\Lambda}(\bx)) |^2 ]$. In this step, the quantum-related term needed for the derivation only involves calculating an upper bound for $\Tr(O\rho_{1}(\bx))$, where $\rho_{1}(\bx)$ is the classical shadow representation of $\rho(\bx)$ with snapshot being $T=1$. In particular, this is expressed as
\begin{equation}
    |\Tr(O\rho_{\Lambda}(\bx))| = \left|\sum_{\bomega \in \mathfrak{C}(\Lambda)}|\Phi_{\bomega}(\bx) \Tr(\rho_1(\bx)O) \right| 
 	\leq    |  \Tr(\rho_1(\bx)O)| 
 	\leq    \|O\|_{\infty} \|\rho_1(\bx)\|_1 
 	= 3^K B.
\end{equation}
Notably, this inequality depends only on the locality parameter $K$ and the norm bound $B$ of the observable $O$, and it holds for all quantum states.

Taken together, we can conclude that the results derived in Lemma~F.2 of Ref.~\cite{du2024efficient} for the noiseless state also hold for the noisy state.

\section{Learnability of the predictive surrogate $h_{\mathsf{qs}}$ (Proof of Theorem~2)}
In this section, we elucidate the construction of the predictive surrogate from a ridge regression model for quantum simulation and provide provable guarantees for the constructed learning model. To achieve this, we first derive the feature map for the quantum circuits used to simulate the quantum many-body systems in SM~\ref{append:sec:feature_map_mb}, where the parameters are correlated. Moreover, we introduce the implementation of the ridge regression model under the derived feature map in SM~\ref{append:sec:implement_regression} and present the theoretical analysis of the learnability of the proposed ML model in the subsequent subsections.
 
\subsection{Trigonometric monomial feature maps for quantum circuits with correlated parameters}\label{append:sec:feature_map_mb}
Let us first derive the trigonometric monomial feature maps for quantum circuits with \textit{correlated parameters}. Recall that the quantum circuit with correlated parameters is given by \[U(\bx)=\prod_{l=1}^{L}\left(\prod_{j=1}^{d/L}\RZ(\bx_j)V_j\right),\] 
where the inputs $\bx_j$ vary within each layer but remain the same across different layers, $L$ represents the number of correlated parameters. According to the explanation in SM~A, the PTM representation of a quantum state evolved by $U(\bx)$ is 
\begin{equation}\label{append:eq:quantum_state_param_share}
    \rho(\bx) \equiv \ket{\psi(\bx)} \bra{\psi(\bx)}  = U(\bx)(\ket{0^{N}
    }\bra{0^{N}} U(\bx)^{\dagger} =  \sum_{\bomega}\Phi_{\bomega}(\bx) \llangle 0| \bUnitary_{\bomega}^{\dagger}.
\end{equation}
For quantum circuits with uncorrelated parameters, $\Phi_{\bomega}(\bx)$ with $\bomega\in \{0, 1, -1\}^{d}$ refers to the trigonometric basis with the values 
\[
	\Phi_{\bomega}(\bx) = \prod_{i=1}^{d} \Psi(\bx_i;\bomega_i) = \prod_{i=1}^{d} \begin{cases}
		 1 ~ & \textnormal{if}~ \bomega_i = 0 \\
		 \cos(\bx_i) & \textnormal{if}~\bomega_i = 1 \\
		 \sin(\bx_i) &  \textnormal{if}~ \bomega_i = -1
	\end{cases},
\]
where each trigonometric monomial $\cos(\bx_i)$ or $\sin(\bx_i)$ only appears once in $\Phi_{\bomega}(\bx)$.

\smallskip
We now consider the scenario with correlated parameters. In this context, it is often helpful to re-index the subscript $i$ of parameters $\bx_i$ by a pair of indices  $(j,l)$ with 
\begin{equation}
	j = 1, ..., d/L, \quad \ell =0,...,L-1, \quad i = j + \ell d. 
\end{equation}
An intuition behind this reformulation is considering the layer-wise structure of the explored $U(\bx)$ with $\bx$ exhibiting correlations across layers. Under these notations, the variable $\bx$ and the frequency $\bw$ have the relation
\begin{equation}
	\bx_i = \bx_{j + \ell d} \quad \text{and} \quad \bomega_i = \bomega_{j + \ell d}.
\end{equation}
Accordingly, the explicit form of  $\Phi_{\bomega}(\bx)$ becomes
\begin{equation}
\Phi_{\bomega}(\bx) = \prod_{j=1}^{d/L} 
\prod_{\ell=0}^{L-1}
\Psi\Bigl(\bx_{\,j + \ell\,d}\,;\; \bomega_{\,j + \ell\,d}\Bigr)  = \prod_{j=1}^{d/L} 
\prod_{\ell=0}^{L-1}
\Psi\bigl(\bx_j;\; \bomega_{j,\ell}\bigr).
\end{equation}

Observe that for a fixed $j$ (that is, for one of the $d/L$ base coordinates $\bx_j$), we multiply together
\begin{equation}
	\prod_{\ell=0}^{L-1} \Psi\bigl(\bx_j;\, \bomega_{j,\ell}\bigr)= \Bigl[\cos\bigl(\bx_j\bigr)^{\,N^+_j(\bomega)}
\;\sin\bigl(\bx_j\bigr)^{\,N^-_j(\bomega)}\Bigr].
\end{equation}
Here we define 
\[N^+_j(\bomega) := \sum_{\ell=0}^{L-1} \mathbbm{1}\{\bomega_{j,\ell}=1\},\quad \text{and} \quad N^-_j(\bomega) := \sum_{\ell=0}^{L-1} \mathbbm{1}\{\bomega_{j,\ell}=-1\},\]
where $N^+_j$ and $N^-_j$ quantify how many times we see $+1$ and $-1$ for fixed $j$, respectively, and $\mathbbm{1}\{\cdot\}$ is the indicator function.   Putting everything together, the entire feature map is
\begin{equation}\label{append:eq:feature_map_para_share}
\Phi_{\bomega}(\bx) = \prod_{j=1}^{d/L}
\Bigl[\cos\bigl(\bx_j\bigr)^{\,N^+_j(\bomega)}
\;\sin\bigl(\bx_j\bigr)^{\,N^-_j(\bomega)}\Bigr].
\end{equation}
In words, we only need to track how many times each of the $d/L$  angles $\bx_j$ appears with a cosine or a sine label across the $L$ repetitions.
 
\subsection{Implementation of the regression model $h_{\mathsf{qs}}$ for quantum many-body dynamics}\label{append:sec:implement_regression}
We next explain how to construct the predictive surrogate $h_{\mathsf{qs}}$ for quantum circuits $U(\bx)$ with correlated parameters by solving a regression model. Recall that the goal of $h_{\mathsf{qs}}$ is to predict the mean-value behavior of the noisy quantum state $\tilde{\rho}(\bx)$ for a specific observable $O$. In this regard, instead of using the classical shadow representation $\tilderho_T(\bx)$ as the label of the input $\bx$, we consider the training dataset consisting of $\mathcal{T}_{\mathsf{mb}} = \{(\bxi, \yi)\}_{i=1}^n$, where $\yi$ approximates the expectation value of the quantum circuit with a maximal estimation error $\epsilon_l$, namely $|\yi - \Tr(\tilderho(\bxi)O)| < \epsilon_{l}$. The observable, which is known a prior, has a form $O = \sum_{P \in \{I, X, Y, Z\}^{\otimes N}} \alpha_P P$  and a bounded norm $\|O\|_{\infty}\le B$. 

To construct  $h_{\mathsf{qs}}(\bx)$ that emulates the expectation value $\Tr(\tilderho(\bx)O)$ output by quantum processors, the first step is to select an appropriate hypothesis space $\mathcal{H}$ that satisfies two conditions: (i) there exists a function $h_{\mathsf{qs}}(\bx)\in \mathcal{H}$ with a small prediction error $\mathbb{E}_{\bx\sim \mathcal{X}} \left| h_{\mathsf{qs}}(\bx) - \Tr(O\tilderho(\bx))  \right|^2 \leq \epsilon$, and (ii) the construction of $h_{\mathsf{qs}}(\bx)$ is computationally efficient. To achieve this, we reformulate the target function $\Tr(\tilderho(\bx)O)$ in terms of PTM as described in Eq.~\eqref{append:eq:quantum_state_param_share}. Specifically, we have
\begin{equation}\label{append:eq:linear_func_para_share}               
   f(\tilderho(\bx), O)=\Tr(\tilderho(\bx)O) = \sum_{\bomega\in \Omega} \Phi_{\bomega}(\bx) \llangle 0| \tilde{\bUnitary}_{\bomega}^{\dagger} |O \rrangle = \langle \bm{\Phi}(\bx) , \tilde{\bm{\alpha}}\rangle, 
\end{equation} 
where $\bm{\Phi}(\bx)$ is the feature vector composed of ${\Phi}_{\bomega}(\bx)$, $\tilde{\bm{\alpha}}$ is the coefficient vector, with elements indexed by $\bomega$, and $\tilde{\bm{\alpha}}_{\bomega} = \llangle 0 | \tilde{\bUnitary}_{\bomega}^{\dagger} |O \rrangle$. The expression in Eq.~\eqref{append:eq:linear_func_para_share} shows that the target function $\Tr(\tilderho(\bx)O)$ belongs to the hypothesis class \[\mathcal{H}=\Big\{ h_{\bm{\mathrm{w}}}(\bx) = \braket{\bm{\Phi}(\bx),\bm{\mathrm{w}}} \big | \bm{\mathrm{w}} \in \mathbb{R}^{\dim(\bm{\Phi})} \Big\}.\] 
While this hypothesis class contains the surrogate $h_{\mathsf{qs}}$ with a small prediction error, the dimension of the feature vector $\bm{\mathrm{w}}$ is $\dim(\bm{\Phi}) =3^{d}$, which grows exponentially with the number of rotation gates $d$, making the construction of $h_{\mathsf{qs}}$ inefficient for large gate counts $d=\mathcal{O}(N)$.

To balance prediction error with the computational complexity of $h_{\mathsf{qs}}$, we consider the hypothesis class with \textit{truncated features}, i.e., \[\mathcal{H}_{\mathfrak{C}(\Lambda)}=\{h_{\bm{\mathrm{w}}}(\bx)=\braket{\bm{\Phi}_{\mathfrak{C}(\Lambda)}(\bx), \bm{\mathrm{w}}}|\bm{\mathrm{w}}\in \mathbb{R}^{|\mathfrak{C}(\Lambda)|}\},\] where the features $\Phi_{\bomega}(\bx)$ in Eq.~(\ref{append:eq:feature_map_para_share}) with high frequency $\|\bomega\|_0>\Lambda$ are removed. In this regard, the predictive surrogate $h_{\mathsf{qs}}(\bx)=\braket{\bm{\Phi}_{\mathfrak{C}(\Lambda)}(\bx), \hat{\bm{\mathrm{w}}}}$ is constructed by solving the following optimization problem
\begin{equation}
    \label{append:eq:ridge_reg_model_truncated}
    h_{\mathsf{qs}}(\bx)=\arg\min_{h_{\bm{\mathrm{w}}}\in \mathcal{H}_{\mathfrak{C}(\Lambda)} } \frac{1}{n}\sum_{i=1}^n \left(h_{\bm{\mathrm{w}}}(\bxi)-\yi\right)^2 + \lambda \|\bm{\mathrm{w}}\|_2,
\end{equation}
where $\lambda$ is the regularization parameter. 
Denote $f_{\mathfrak{C}(\Lambda)}(\bx)=\langle \bm{\Phi}_{\mathfrak{C}({\Lambda})}(\bx) , \tilde{\bm{\alpha}}_{\mathfrak{C}(\Lambda)}\rangle$ as the truncated representation of the target function $f(\bx, O)= \langle \bm{\Phi(x)},  \tilde{\bm{\alpha}}\rangle$ in Eq.~(\ref{append:eq:linear_func_para_share}), where $\tilde{\bm{\alpha}}_{\mathfrak{C}(\Lambda)}$ is the sub-vector of $\tilde{\bm{\alpha}}$ consisting of the element $\tilde{\bm{\alpha}}_{\bomega}$ with the index $\bomega \in \mathfrak{C}(\Lambda)$. When taking $\lambda \ge \|\tilde{\bm{\alpha}}_{\mathfrak{C}(\Lambda)}\|_2$, the truncated function $f_{\mathfrak{C}(\Lambda)}(\bx)$ is contained in the solution space, ensuring the existence of $h_{\mathsf{qs}}\in \mathcal{H}_{\mathfrak{C}(\lambda)}$ with a small prediction error for a large truncation threshold $\Lambda$.  

Under the framework of ridge regression, we establish the theoretical guarantee for the efficiency of the constructed predictive surrogate $h_{\mathsf{qs}}(\bx)$, as stated in the following theorem. 

\begin{theorem-non}[Restatement of Theorem 2]\label{append:thm:ridge_predict_error}
    Consider the training dataset $\mathcal{T}_{\mathsf{mb}}=\{\bxi,\yi\}_{i=1}^n$, where $\bxi\in \mathcal{X}\subset [-R, R]^{d}$ is sampled from an arbitrary distribution $\mathbb{D}$, with the maximal value $R>0$, and $y_i$ is the estimated expectation value of observable $O$ on the parameterized state $\tilderho(\bxi)$ in Eq.~\eqref{append:eq:quantum_state_param_share}. This state is affected by the Pauli noisy channel $\mathcal{N}_P(\PX ,\PY,\PZ )$ defined in Eq.~\eqref{append:eq:pauli_channel}. Let $q=1-2(p+\PZ )$ with $p:=\min\{\PX ,\PY\}$ and assume $q(1+R)<1/e$. Let $\Lambda$ be the threshold of the truncated frequency set $\mathfrak{C}(\Lambda)$, and $\epsilon_l$ the maximal estimation error of $\{\yi\}$ in  $\mathcal{T}_{\mathsf{mb}}$, namely, $\max_{i\in[n]}|\yi-\Tr(O\tilderho(\bxi)|\le \epsilon_l$. Define $\epsilon=16B^2(deq(1+R)/\Lambda)^{2\Lambda}$.  
    
When the following conditions are satisfied: (i) $\epsilon_l\le\sqrt{\epsilon}/4$, (ii) the frequency is truncated to $\Lambda>deq(1+R)$, (iii) the number of training examples satisfies \[n=\left(\frac{1}{q(1+R)}\right)^{4\Lambda}\cdot \frac{\log(1/\delta)}{9},\] then the predictive surrogate $h_{\mathsf{qs}}(\bx)$ constructed by solving the ridge regression model in Eq.~\eqref{append:eq:ridge_reg_model_truncated}, achieves with probability at least $1-\delta$,
    \begin{equation}\label{append:eq:ridge_predict_error_bound}
	\mathbb{E}_{\bx\sim \mathbb{D}} \left| h_{\mathsf{qs}}(\bx) - \Tr(O\tilderho(\bx))  \right|^2 \leq \epsilon.
\end{equation}
\end{theorem-non}

\subsection{Prediction error bound for ridge regression model (Proof of Theorem~2)}
The proof of Theorem~2 leverages the results from statistical learning theory regarding the prediction error of kernel ridge regression algorithms \cite{saunders1998ridge, mehyar2018foundations}.

\begin{lemma}
    [Theorem 11.11,  Ref.~\cite{mehyar2018foundations}]\label{append:lem:ridge_reg_prediction_error}
    Let $K:\mathcal{X}\times \mathcal{X}\to \mathbb{R}$ be a positive definite symmetric kernel, $\Phi:\mathcal{X}\to\mathcal{H}$ be a feature mapping associated to $K$, and $\mathcal{H}=\{\bx\to \braket{\Phi(\bx),\bm{\mathrm{w}}}: \|\bm{\mathrm{w}}\|_2\le \lambda\}$. Let $\mathcal{S}=\{(\bxi,\yi)\}_{i=1}^n\in (\mathcal{X}\times \mathcal{Y})^{n}$. Assume that there exists $r>0$ such that $K(\bx,\bx)\le r^2$ and $M>0$ such that $|h(\bx)-y|<M$ for all $(\bx,\by)\in\mathcal{X}\times \mathcal{Y}$. Then, for $\delta>0$, with probability at least $1-\delta$, the following inequality holds for all $h\in \mathcal{H}$:
    \begin{equation}\label{append:eq:cl_ridge_predict_error}
        \mathbb{E}_{\bx\sim \mathcal{X}} \left| h(\bx) - \Tr(O\tilderho(\bx))  \right|^2 \le \frac{1}{n}\sum_{i=1}^n \left(h(\bxi)-\yi\right)^2 + 4M\sqrt{\frac{r^2\lambda^2}{n}} + M^2\sqrt{\frac{\log(1/\delta)}{2n}},
    \end{equation}
    where the first term on the right-hand side $\sum_{i=1}^n \left(h(\bxi)-\yi\right)^2/n$ represents the training error.
\end{lemma}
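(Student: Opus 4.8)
The plan is to obtain the bound as a corollary of the kernel ridge regression generalization inequality of Lemma~\ref{append:lem:ridge_reg_prediction_error}, specialized to the positive semidefinite kernel $K(\bx,\bx')=\langle\bm{\Phi}_{\mathfrak{C}(\Lambda)}(\bx),\bm{\Phi}_{\mathfrak{C}(\Lambda)}(\bx')\rangle$ generated by the truncated trigonometric feature map, and to the hypothesis ball $\mathcal{H}_{\mathfrak{C}(\Lambda)}=\{\bx\mapsto\langle\bm{\Phi}_{\mathfrak{C}(\Lambda)}(\bx),\bm{\mathrm{w}}\rangle:\|\bm{\mathrm{w}}\|_2\le\lambda\}$ with $\lambda:=\|\tilde{\bm{\alpha}}_{\mathfrak{C}(\Lambda)}\|_2$. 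First I would note that this choice of $\lambda$ makes the truncated target $f_{\mathfrak{C}(\Lambda)}(\bx)=\langle\bm{\Phi}_{\mathfrak{C}(\Lambda)}(\bx),\tilde{\bm{\alpha}}_{\mathfrak{C}(\Lambda)}\rangle$ feasible, so the regularized minimizer $h_{\mathsf{qs}}$ has empirical error no larger than that of $f_{\mathfrak{C}(\Lambda)}$; the whole task then reduces to controlling the four ingredients of Lemma~\ref{append:lem:ridge_reg_prediction_error}: the empirical error of $f_{\mathfrak{C}(\Lambda)}$, the radius $r$ with $K(\bx,\bx)\le r^2$, the ridge radius $\lambda$, and the range bound $M\ge|h(\bx)-y|$.

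\textbf{The crux: the uniform truncation error.} The hard part will be to bound $\sup_{\bx\in[-R,R]^d}|f(\tilderho(\bx),O)-f_{\mathfrak{C}(\Lambda)}(\bx)|$. I would write the defect as the high-weight tail $\sum_{\|\bomega\|_0>\Lambda}\tilde{\bm{\alpha}}_{\bomega}\Phi_{\bomega}(\bx)$, invoke Lemma~\ref{append:lem:noisy_coefficient} for the noise contraction $|\tilde{\bm{\alpha}}_{\bomega}|\le q^{\|\bomega\|_0}|\bm{\alpha}_{\bomega}|$, and use the uniform bound $|\bm{\alpha}_{\bomega}|\le B$ on the noiseless coefficients coming from $\|O\|_\infty\le B$ and the PTM trigonometric expansion of SM~A. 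For the basis factors, since each angle ranges over $[-R,R]$ one has $\sup_{|t|\le R}|\cos t|+\sup_{|t|\le R}|\sin t|\le 1+\min(R,1)\le 1+R$, whence $\sum_{\|\bomega\|_0=k}\sup_{\bx}|\Phi_{\bomega}(\bx)|\le\binom{d}{k}(1+R)^k$ by factorizing the sum over supports and sign patterns coordinatewise. Thus the tail is at most $B\sum_{k>\Lambda}\binom{d}{k}(q(1+R))^{k}$, and the hypotheses enter decisively here: since $\Lambda>deq(1+R)$, the ratio of successive summands $\tfrac{d-k}{k+1}q(1+R)$ is $<1$ for every $k\ge\Lambda$, so the series is geometrically dominated by its leading term and $\sup_{\bx}|f-f_{\mathfrak{C}(\Lambda)}|\le 2B\binom{d}{\Lambda}(q(1+R))^{\Lambda}\le 2B(deq(1+R)/\Lambda)^{\Lambda}=\tfrac12\sqrt{\epsilon}$, using $\binom{d}{\Lambda}\le(ed/\Lambda)^{\Lambda}$ and the definition of $\epsilon$. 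Together with $\max_i|\yi-\Tr(O\tilderho(\bxi))|\le\epsilon_l\le\tfrac14\sqrt{\epsilon}$, the empirical error of the feasible point $f_{\mathfrak{C}(\Lambda)}$ — hence the first term of Lemma~\ref{append:lem:ridge_reg_prediction_error} applied to $h_{\mathsf{qs}}$ — is at most $(\tfrac34\sqrt{\epsilon})^2=\tfrac{9}{16}\epsilon$.

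\textbf{The complexity and confidence terms.} For the remaining terms I would bound $r^2$ and $\lambda$ by the same weight-stratified sums. Because $\cos^2 t+\sin^2 t=1$ the sign sum collapses and $K(\bx,\bx)=\sum_{\bomega\in\mathfrak{C}(\Lambda)}\Phi_{\bomega}(\bx)^2=\sum_{k=0}^{\Lambda}\binom{d}{k}\le(ed/\Lambda)^{\Lambda}=:r^2$, while $\lambda^2=\sum_{\bomega\in\mathfrak{C}(\Lambda)}|\tilde{\bm{\alpha}}_{\bomega}|^2\le B^2\sum_{k=0}^{\Lambda}\binom{d}{k}(2q^2)^{k}\le B^2 e^{2q^2 d}$, which under $q(1+R)<1/e$ and $\Lambda>deq(1+R)$ gives $2q^2 d<2\Lambda/e^2$, hence $\lambda\le c_0^{\Lambda}B$ for an absolute constant $c_0<\sqrt{e}$; moreover $M\le\lambda r+B+\epsilon_l$. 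Substituting the prescribed $n=(q(1+R))^{-4\Lambda}\log(1/\delta)/9$ into $4M\sqrt{r^2\lambda^2/n}$ and $M^2\sqrt{\log(1/\delta)/(2n)}$, both terms take the form $c\,B^2\bigl((ed/\Lambda)\,c_0^2\,q^2(1+R)^2\bigr)^{\Lambda}$ for an absolute constant $c$; and since $\Lambda>deq(1+R)$ forces $ed/\Lambda<1/(q(1+R))$, each such factor is at most $(c_0^2\,q(1+R))^{\Lambda}\le(c_0^2/e)^{\Lambda}$, so in the intended regime $\Lambda=O(deq(1+R))$ (hence $\Lambda<d$) a short computation shows their sum is at most $\tfrac{7}{16}\epsilon$. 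Feeding the three contributions through Lemma~\ref{append:lem:ridge_reg_prediction_error} then yields $\mathbb{E}_{\bx\sim\mathbb{D}}|h_{\mathsf{qs}}(\bx)-\Tr(O\tilderho(\bx))|^2\le\epsilon$ with probability at least $1-\delta$, as claimed. One minor point requiring attention is that Lemma~\ref{append:lem:ridge_reg_prediction_error} is stated for bounded labels whereas $\yi$ here is a shot estimate carrying a deterministic additive error $\le\epsilon_l$ rather than zero-mean noise; this is handled by folding $\epsilon_l$ into the empirical-error budget exactly as above. The main obstacle is thus the uniform truncation estimate of the second paragraph: the correlated-parameter monomials $\Phi_{\bomega}$ are products in which one angle may recur up to $L$ times, so they are neither orthonormal nor uniformly decoupled across coordinates, and one must orchestrate the noise contraction, the uniform coefficient bound, and the combinatorial count of weight-$k$ frequencies so that all three hypotheses ($q(1+R)<1/e$, $\Lambda>deq(1+R)$, $\epsilon_l\le\sqrt{\epsilon}/4$) together force every series to be geometric with leading-term domination.
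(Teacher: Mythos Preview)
Your proposal does not prove the stated lemma at all. The statement in question is Lemma~\ref{append:lem:ridge_reg_prediction_error}, which the paper imports verbatim as Theorem~11.11 of Mohri--Rostamizadeh--Talwalkar and does \emph{not} prove: it is a standard Rademacher-complexity generalization bound for bounded-loss kernel hypothesis classes, and the paper treats it as a black box. What you have written is instead (a sketch of) the proof of \emph{Theorem~2}, which \emph{applies} this lemma to the truncated trigonometric feature map. You are invoking the very statement you were asked to establish.

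If your intent was indeed Theorem~2, then your outline is essentially the paper's own argument: choose $\lambda\ge\|\tilde{\bm{\alpha}}_{\mathfrak{C}(\Lambda)}\|_2$ so that $f_{\mathfrak{C}(\Lambda)}$ is feasible, bound the empirical error of $h_{\mathsf{qs}}$ by that of $f_{\mathfrak{C}(\Lambda)}$, control the uniform truncation error via Lemma~\ref{append:lem:noisy_coefficient} together with $|\Phi_{\bomega}(\bx)|\le R^{\|\bomega_{\#-1}\|_0}$ and the binomial estimates of Lemmas~\ref{append:lem:binomial_inequality_1}--\ref{append:lem:binomial_inequality_2}, and then size $r,\lambda,M$ by the same combinatorial sums. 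The paper's bookkeeping differs slightly from yours (it bounds $r^2\le(de^2/\Lambda)^{\Lambda}$ and $\lambda^2\le B^2(de^{2q^2}/\Lambda)^{\Lambda}$ rather than your $r^2\le(ed/\Lambda)^{\Lambda}$ and $\lambda^2\le B^2 e^{2q^2 d}$, because the paper keeps the factor $2^k$ from the $\pm1$ sign patterns that you drop via $\cos^2+\sin^2=1$), but the structure is the same. If instead you genuinely meant to prove Lemma~\ref{append:lem:ridge_reg_prediction_error} itself, you would need the textbook machinery: Talagrand's contraction for the squared loss (Lipschitz with constant $2M$ on the range), the bound $\hat{\mathfrak{R}}_n(\mathcal{H})\le r\lambda/\sqrt{n}$ on the empirical Rademacher complexity of a kernel ball, and McDiarmid's inequality for the confidence term --- none of which appears in your proposal.
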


This theorem provides a general upper bound on the prediction error for the ridge regression model. In the context of our learning setting, it reduces to analyzing the terms in Eq.~\eqref{append:eq:cl_ridge_predict_error} for the specific classical function $h_{\mathsf{qs}}(\bx) $ in Eq.~\eqref{append:eq:ridge_reg_model_truncated}, including the terms $M, r, \lambda$, and the training error $\sum_{i=1}^n \left(h_{\mathsf{qs}}(\bxi)-\yi\right)^2/n$. The derivation of these terms employs the following lemmas about the binomial inequalities and the worst-case truncation error of the learning model $f_{\mathfrak{C}(\Lambda)}(\bx)= \langle \bm{\Phi}_{\mathfrak{C}(\Lambda)}(\bx), \tilde{\bm{\alpha}}_{\mathfrak{C}(\Lambda)}\rangle$ established on the truncated feature map $\bm{\Phi}_{\mathfrak{C}(\Lambda)}(\bx)$.

\begin{lemma}[Adapted from Lemma~4 in Ref.~\cite{lerch2024efficient}]
\label{append:lem:binomial_inequality_1}
    Let $a, b, x, y>0$ be positive constants satisfying $a>b$ and $x \le b/a$. Then we have
    \begin{equation}\label{append:eq:binomial_inequality_1}
        \sum_{i=b}^{a} \binom{a}{i} x^i\le \left( \frac{eax}{b}\right)^b, \quad \mbox{and} \quad \sum_{i=0}^{b} \binom{a}{i} y^i\le \left(\frac{ae^{y} }{b} \right)^b ,
    \end{equation}
    where $e$ is the Euler number.
\end{lemma}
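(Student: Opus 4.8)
The plan is to prove both inequalities by the same elementary recipe: first replace each binomial coefficient by the bound $\binom{a}{i}\le a^i/i!$ (valid since $\binom{a}{i}=\prod_{j=0}^{i-1}(a-j)/i!$ is a product of $i$ factors each at most $a$, divided by $i!$), which turns each partial sum into a truncated exponential series; then dominate that truncated series term-by-term by a suitably rescaled exponential series; and finally complete the sum to $i=\infty$ and invoke the identity $\sum_{i\ge 0}c^i/i!=e^c$. The two inequalities differ only in which hypothesis controls the term-by-term comparison, and the direction of the monomial estimate flips between them.

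For the first inequality I would use that the summation index satisfies $i\ge b$. Factoring $(ax)^i=(ax)^b(ax)^{i-b}$ and invoking the hypothesis $ax\le b$ (equivalently $x\le b/a$), we get $(ax)^{i-b}\le b^{i-b}$, hence $\binom{a}{i}x^i\le (ax)^i/i!\le \tfrac{(ax)^b}{b^b}\cdot\tfrac{b^i}{i!}$ for every $i\ge b$. Summing over $b\le i\le a$ and extending the range to $i=\infty$ gives
\[
\sum_{i=b}^{a}\binom{a}{i}x^i\;\le\;\frac{(ax)^b}{b^b}\sum_{i\ge 0}\frac{b^i}{i!}\;=\;\frac{(ax)^b}{b^b}\,e^b\;=\;\Bigl(\frac{eax}{b}\Bigr)^b,
\]
which is exactly the stated bound.

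For the second inequality I would instead use that the summation index satisfies $i\le b$, so that the relevant comparison is between $a^i$ and $a^b$: since $a>b$ and $i-b\le 0$ we have $(a/b)^{i-b}\le 1$, i.e.\ $a^i\le a^b\,b^{\,i-b}$, and therefore $\binom{a}{i}y^i\le (ay)^i/i!\le \tfrac{a^b}{b^b}\cdot\tfrac{(by)^i}{i!}$ for every $0\le i\le b$. Summing over $0\le i\le b$ and again extending to $i=\infty$ gives
\[
\sum_{i=0}^{b}\binom{a}{i}y^i\;\le\;\frac{a^b}{b^b}\sum_{i\ge 0}\frac{(by)^i}{i!}\;=\;\frac{a^b}{b^b}\,e^{by}\;=\;\Bigl(\frac{ae^y}{b}\Bigr)^b,
\]
as desired; note that no constraint on $y$ is needed, only $a\ge b$.

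Since the argument is entirely elementary, there is no genuine obstacle here; the only point requiring care is keeping the direction of the monomial comparison straight in each case — the first inequality exploits $i\ge b$ together with $ax\le b$, whereas the second exploits $i\le b$ together with $a>b$ — and observing that replacing the truncated exponential sum by the full series is harmless. I would also remark in passing that both inequalities actually hold for any real $a\ge b>0$ with $b$ an integer, because the proof uses only $\binom{a}{i}\le a^i/i!$ and $a\ge b$; the Chernoff-type optimization with an auxiliary variable $z$ (bounding $\sum\binom{a}{i}x^i$ by $z^{-b}(1+xz)^a$ and choosing $z=b/(x(a-b))$) gives the same constants and is an equivalent route, but the term-by-term estimate above is shorter and self-contained.
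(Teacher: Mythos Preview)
Your proof is correct and follows essentially the same route as the paper: bound $\binom{a}{i}\le a^i/i!$, pull out the factor $(ax/b)^b$ (resp.\ $(a/b)^b$) using the hypothesis $ax\le b$ with $i\ge b$ (resp.\ $a>b$ with $i\le b$), and complete the remaining truncated series to $e^b$ (resp.\ $e^{by}$). The paper's presentation is nearly identical, just with the factoring written as $\tfrac{a^i x^i}{i!}=\tfrac{b^i}{i!}(ax/b)^i$ rather than your $(ax)^i=(ax)^b(ax)^{i-b}$.
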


\begin{lemma}\label{append:lem:binomial_inequality_2}
    Let $a, b, x, q>0$ be positive constants  satisfying $a>b$ and $q(1+x) \le b/a$. Then we have
    \begin{equation}
        \sum_{i=b}^{a}\sum_{j=0}^i \binom{a}{i}\binom{i}{j} x^{j}q^i\le \left( \frac{eaq(1+x)}{b}\right)^b
    \end{equation}
\end{lemma}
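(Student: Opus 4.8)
The plan is to collapse the inner sum over $j$ in closed form, reducing the double sum to a single binomial sum, and then invoke the first inequality of Lemma \ref{append:lem:binomial_inequality_1}. First I would apply the finite binomial theorem: for each fixed $i$,
\[
\sum_{j=0}^{i}\binom{i}{j}x^{j} = (1+x)^{i},
\]
so that, after factoring the $q^i$ that does not depend on $j$,
\[
\sum_{i=b}^{a}\sum_{j=0}^{i}\binom{a}{i}\binom{i}{j}x^{j}q^{i}
= \sum_{i=b}^{a}\binom{a}{i}\bigl(q(1+x)\bigr)^{i}.
\]

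Next I would introduce the shorthand $\xi := q(1+x)$. Since $q>0$ and $x>0$ we have $\xi>0$, and the hypothesis $q(1+x)\le b/a$ is exactly $\xi \le b/a$; together with $a>b>0$ this is precisely the regime in which the first bound of Lemma \ref{append:lem:binomial_inequality_1} applies. Invoking it with $\xi$ in the role of the variable there gives
\[
\sum_{i=b}^{a}\binom{a}{i}\xi^{i} \;\le\; \left(\frac{ea\xi}{b}\right)^{b}
= \left(\frac{eaq(1+x)}{b}\right)^{b},
\]
and substituting back into the previous display yields the claimed inequality.

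The only point needing any care — and the closest thing to an obstacle in what is otherwise a two-line argument — is checking that the substitution $x \mapsto q(1+x)$ leaves the hypotheses of Lemma \ref{append:lem:binomial_inequality_1} intact: the summation index still ranges over $i=b,\dots,a$, the positivity and ordering constraints $a>b>0$ are unchanged, and the constraint $\xi \le b/a$ is literally the given condition. No interchange of summation order beyond the elementary finite binomial identity is required, so the proof is short and entirely self-contained given Lemma \ref{append:lem:binomial_inequality_1}.
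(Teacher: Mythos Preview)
Your proposal is correct and follows exactly the paper's own proof: collapse the inner sum via the binomial theorem to get $\sum_{i=b}^{a}\binom{a}{i}\bigl(q(1+x)\bigr)^{i}$, then apply the first inequality of Lemma~\ref{append:lem:binomial_inequality_1} with $q(1+x)$ in place of $x$. Your check that the hypothesis $q(1+x)\le b/a$ is precisely what that lemma requires is the only point the paper notes as well.
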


\begin{lemma}\label{append:lem:ridge_training_error} 
    Following the notation in Theorem~2, let $f_{\mathfrak{C}(\Lambda)}(\bx)= \langle \bm{\Phi}_{\mathfrak{C}(\Lambda)}(\bx) ,  \tilde{\bm{\alpha}}_{\mathfrak{C}(\Lambda)} \rangle=\sum_{\bomega\in \mathfrak{C}(\Lambda)}\Phi_{\bomega}(\bx) \cdot \tilde{\bm{\alpha}}_{\bomega}$ be the truncated function of $\Tr(\tilderho(\bx)O)$ with $\Lambda$ being the truncation threshold of frequency and $\mathfrak{C}(\Lambda) =\{\bomega|\bomega \in \{0, \pm 1\}^d, \|\bomega\|_0\leq \Lambda\}$ being the set of truncated frequencies, $\mathcal{X}\subset [-R,R]^d$ the value space of $\bx$ with $R>0$ being any positive number. Then the worst-case truncation error of $f_{\mathfrak{C}(\Lambda)}(\bx)$ yields
    \begin{equation}
        \max_{\bx\in \mathcal{X}} \left|f_{\mathfrak{C}(\Lambda)}(\bx)-\Tr(\tilderho(\bx)O)\right| \le \left(\frac{edq(1+R)}{\Lambda+1} \right)^{\Lambda+1} ,
    \end{equation}
    where $e$ is the Euler number, $q=1-2(p+\PZ )$ with $p=\min\{\PX ,\PY\}$, and $\PX , \PY, \PZ \ge 0$ are the noise parameters of the Pauli noise channel $\mathcal{N}_P$.
\end{lemma}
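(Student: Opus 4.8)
The plan is to realize the truncation error as the tail of the noisy trigonometric expansion and then dominate that tail by a binomial sum. Starting from $f(\tilderho(\bx),O)=\sum_{\bomega\in\{0,\pm1\}^d}\Phi_{\bomega}(\bx)\,\tilde{\bm{\alpha}}_{\bomega}$ (Eq.~\eqref{eq:fourier_expansion}) and $f_{\mathfrak{C}(\Lambda)}(\bx)=\sum_{\bomega\in\mathfrak{C}(\Lambda)}\Phi_{\bomega}(\bx)\,\tilde{\bm{\alpha}}_{\bomega}$, the difference is exactly the sum over $\bomega$ with $\|\bomega\|_0\ge\Lambda+1$, so by the triangle inequality
\[
\max_{\bx\in\mathcal{X}}\big|f_{\mathfrak{C}(\Lambda)}(\bx)-\Tr(\tilderho(\bx)O)\big|\;\le\;\max_{\bx\in\mathcal{X}}\sum_{\bomega:\,\|\bomega\|_0\ge\Lambda+1}|\Phi_{\bomega}(\bx)|\,|\tilde{\bm{\alpha}}_{\bomega}| .
\]
I would then organize the right-hand side by the pair $(i,j)$, where $i=\|\bomega\|_0$ and $j$ is the number of entries of $\bomega$ equal to $-1$, which requires three ingredients: a uniform bound on $|\Phi_{\bomega}(\bx)|$, a bound on $|\tilde{\bm{\alpha}}_{\bomega}|$, and the count of such $\bomega$.

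For the feature bound, since $\bx\in\mathcal{X}\subset[-R,R]^d$ I would use $|\cos(\bx_l)|\le1$ and $|\sin(\bx_l)|\le|\bx_l|\le R$; in both the uncorrelated and the layer-correlated feature maps of SM~\ref{append:sec:feature_map_mb} every cosine factor is then harmless while each of the $j$ sine factors contributes at most $R$, giving $|\Phi_{\bomega}(\bx)|\le R^{\,j}$ uniformly in $\bx$. For the coefficient bound I would invoke Lemma~\ref{append:lem:noisy_coefficient}, which yields $|\tilde{\bm{\alpha}}_{\bomega}|\le q^{\|\bomega\|_0}|\bm{\alpha}_{\bomega}|=q^{\,i}|\bm{\alpha}_{\bomega}|$ with $q=1-2(p+\PZ)$ and $p=\min\{\PX,\PY\}$, combined with the per-path bound $|\bm{\alpha}_{\bomega}|\le1$ on the noiseless coefficients: $\bm{\alpha}_{\bomega}=\llangle\rho_0|\bUnitary_{\bomega}^{\dagger}|O\rrangle$ is the contribution of a single trigonometric path, which propagates the observable through Clifford gates (signs $\pm1$) and through the two branches of each $\RZ$ gate, so it reduces to a signed combination of terms $\langle0^N|P|0^N\rangle\in\{-1,0,1\}$ (for a normalized observable this is $\le1$; for a general $O$ the same argument carries a factor $\|O\|_\infty$, which the statement absorbs). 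Finally, the number of $\bomega\in\{0,\pm1\}^d$ with $\|\bomega\|_0=i$ and exactly $j$ entries equal to $-1$ is at most $\binom{d}{i}\binom{i}{j}$.

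Putting these together gives
\[
\max_{\bx\in\mathcal{X}}\sum_{\bomega:\,\|\bomega\|_0\ge\Lambda+1}|\Phi_{\bomega}(\bx)|\,|\tilde{\bm{\alpha}}_{\bomega}|\;\le\;\sum_{i=\Lambda+1}^{d}\sum_{j=0}^{i}\binom{d}{i}\binom{i}{j}R^{\,j}q^{\,i}.
\]
I would then either apply Lemma~\ref{append:lem:binomial_inequality_2} directly with $a=d$, $b=\Lambda+1$, $x=R$, or first collapse the inner sum via $\sum_{j=0}^i\binom{i}{j}R^j=(1+R)^i$ to obtain $\sum_{i=\Lambda+1}^{d}\binom{d}{i}\big(q(1+R)\big)^{i}$ and finish with the first inequality of Lemma~\ref{append:lem:binomial_inequality_1} ($a=d$, $b=\Lambda+1$, $x=q(1+R)$); either route gives $\big(edq(1+R)/(\Lambda+1)\big)^{\Lambda+1}$. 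The required hypothesis $q(1+R)\le(\Lambda+1)/d$ holds in the standing regime $\Lambda>deq(1+R)$ of Theorem~2, and this closes the proof.

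The main obstacle is the per-path bound $|\bm{\alpha}_{\bomega}|\le1$ in the coefficient step: a naive spectral-norm estimate of $\bUnitary_{\bomega}^{\dagger}$ applied to $|O\rrangle$ is exponentially loose (it scales like $\sqrt{\Tr(O^2)}$), so one must instead use the Heisenberg-picture/Pauli-path structure of the trigonometric expansion inherited from Refs.~\cite{du2024efficient,fontana2023classical}. Once that structural fact and the bookkeeping of the $(i,j)$-decomposition (number of nonzero entries and number of $-1$ entries) are in place, the remaining steps—the $\sin$-factor bound, the noise contraction of Lemma~\ref{append:lem:noisy_coefficient}, and the binomial summation of Lemma~\ref{append:lem:binomial_inequality_2}—are routine.
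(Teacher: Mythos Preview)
Your proposal is correct and mirrors the paper's proof essentially step for step: write the error as the tail $\sum_{\|\bomega\|_0>\Lambda}\Phi_{\bomega}(\bx)\tilde{\bm{\alpha}}_{\bomega}$, bound $|\tilde{\bm{\alpha}}_{\bomega}|\le q^{\|\bomega\|_0}|\bm{\alpha}_{\bomega}|$ via Lemma~\ref{append:lem:noisy_coefficient} together with $|\bm{\alpha}_{\bomega}|\le B$, bound $|\Phi_{\bomega}(\bx)|\le R^{\|\bomega_{\#-1}\|_0}$ using $|\sin|\le R$ and $|\cos|\le 1$, count the $\bomega$'s as $\binom{d}{\|\bomega\|_0}\binom{\|\bomega\|_0}{\|\bomega_{\#-1}\|_0}$, and close with Lemma~\ref{append:lem:binomial_inequality_2}. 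Your explicit flagging of the per-path bound $|\bm{\alpha}_{\bomega}|\le\|O\|_\infty$ as the nontrivial structural input (rather than a Cauchy--Schwarz estimate) is in fact more careful than the paper, which simply asserts $|\bm{\alpha}_{\bomega}|\le B$ and indeed carries the factor $B$ through the final inequality of its proof even though the lemma statement omits it.
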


We are now ready to present the proof of Theorem~2.

\begin{proof}
    [Proof of Theorem~2] 
We first reframe the theorem in our setting. Consider the input space $\mathcal{X}$ to be the parameter space $[-R,R]^d$. Our feature mapping is $\bm{\Phi}_{\mathfrak{C}(\Lambda)}(\bx)$, which consists of the triangular monomial $\Phi_{\bomega}$ of order less than $\Lambda$, namely $\|\bomega\|_0\le \Lambda$. Then, the kernel associated with this feature map is \[K(\bx,\bx')= \langle \bm{\Phi}_{\mathfrak{C}(\Lambda)}(\bx) ,  \bm{\Phi}_{\mathfrak{C}(\Lambda)}(\bx')\rangle.\] One can easily check that $K(\bx,\bx')$ is positive definite and symmetric. The hypothesis class is defined as $\mathcal{H}_{\mathfrak{C}(\Lambda)}=\{h(\bx)=\braket{\bm{\Phi}_{\mathfrak{C}(\Lambda)}(\bx), \bm{\mathrm{w}}}:\|\bm{\mathrm{w}}\|_2\le \lambda \}$. Moreover, we are given the training data $\mathcal{T}_{\mathsf{mb}}=\{(\bxi,\yi)\}_{i=1}^n$ for $\bxi\in [-R,R]^d$ and $\yi\in\mathbb{R}$, where $|\yi-\Tr(O\tilderho(\bxi))|\le \epsilon_l$.

We set $\lambda \ge \|\tilde{\bm{\alpha}}_{\mathfrak{C}(\Lambda)}\|_2$ such that the truncated function $f_{\mathfrak{C}(\Lambda)}(\bx)=\langle \bm{\Phi}_{\mathfrak{C}(\Lambda)}(\bx) , \tilde{\bm{\alpha}}_{\mathfrak{C}(\Lambda)}\rangle=\sum_{\bomega \in \mathfrak{C}(\Lambda)}\Phi_{\bomega}(\bx) \cdot \tilde{\bm{\alpha}}_{\bomega}$ belongs to the optimization space $\mathcal{H}_{\mathfrak{C}(\Lambda)}$ of the ridge regression model,  i.e., $\mathcal{H}_{\mathfrak{C}(\Lambda)}=\{\sum_{\bomega \in \mathfrak{C}(\Lambda)}\Phi_{\bomega}(\bx) \cdot \bm{\mathrm{w}}| \|\bm{\mathrm{w}}\|_2\le \lambda\le \|\tilde{\bm{\alpha}}_{\mathfrak{C}(\Lambda)}\|_2 \}$. In this regard, the training error of the optimal solution $h_{\mathsf{qs}}(\bx)$ for the ridge regression model is less than the training error of the truncated function $f_{\mathfrak{C}(\Lambda)}(\bx)$, namely
    \begin{align}\label{append:eq:ridge_tr_err_truncat_err}
        \frac{1}{n}\sum_{i=1}^n \left(h_{\mathsf{qs}}(\bxi)-\yi\right)^2 
        \le   \frac{1}{n}\sum_{i=1}^n \left(f_{\mathfrak{C}(\Lambda)}(\bxi)-\yi\right)^2. 
    \end{align}
    In the following, we separately derive the upper bounds of the training error $\sum_{i=1}^n (h_{\mathsf{qs}}(\bxi)-\yi)^2/n$, and the terms $M,r,\lambda$ in Eq.~\eqref{append:eq:cl_ridge_predict_error}.

    \noindent \underline{\textit{The upper bound of training error $\sum_{i=1}^n (h_{\mathsf{qs}}(\bxi)-\yi)^2/n$.}} As indicated by Eq.~\eqref{append:eq:ridge_tr_err_truncat_err}, we only need to bound the training error of the truncated function $f_{\mathfrak{C}(\Lambda)}(\bxi)$. Let $i^*$ be defined as the index of training data $(\bx^{(i*)}, y^{(i*)})$ with maximal error for $f_{\mathfrak{C}(\Lambda)}(\bxi)$, namely,
    \begin{equation}
        i^* = \arg\min_{i\in[n]} \Big(f_{\mathfrak{C}(\Lambda)}(\bxi)-\yi\Big)^2,
    \end{equation}
    In this regard, we only need to analyze the upper bound of the training error on the single data point $(\bx^{(i^*)}, y^{(i^*)})$. More specifically, the training error in  Eq.~(\ref{append:eq:ridge_tr_err_truncat_err}) yields 
    \begin{align}\label{append:eq:ridge_reg_training_error_bound}
    \frac{1}{n}\sum_{i=1}^n \left(h_{\mathsf{qs}}(\bxi)-\yi\right)^2 
        \le &  \frac{1}{n}\sum_{i=1}^n \left(f_{\mathfrak{C}(\Lambda)}(\bxi)-\yi\right)^2
        \nonumber \\
        \le & \left|f_{\mathfrak{C}(\Lambda)}(\bx^{(i^*)})-y^{(i^*)}\right|^2
        \nonumber \\
        \le
        & \left(\left|f_{\mathfrak{C}(\Lambda)}(\bx^{(i^*)})-\Tr(\tilde{\rho}(\bx^{(i^*)})O)\right| +\left|\Tr(\tilde{\rho}(\bx^{(i^*)})O)-y^{(i^*)}\right|\right)^2
        \nonumber  \\
        \le   & \left(B\left(\frac{edq(1+R)}{\Lambda+1} \right)^{\Lambda+1}+\epsilon_l\right)^2,
    \end{align}
    where the third inequality employs the triangle inequality,  the last inequality follows Lemma~\ref{append:lem:ridge_training_error} and employs the condition that estimation error on the training data is bounded by $\epsilon_l$.

\smallskip
    \noindent \underline{\textit{The upper bound of regularization parameter $\|\tilde{\bm{\alpha}}_{\mathfrak{C}(\Lambda)}\|_2\le \lambda$.}} 
    Utilizing the result of Lemma~\ref{append:lem:noisy_coefficient} such that $|\tilde{\bm{\alpha}}_{\bomega}| \le q^{\|\bomega\|_0}|{\bm{\alpha}}_{\bomega}|$ with $\bm{\alpha}_{\bomega}=\llangle 0| (\bUnitary_{\bomega})^{\dagger} |O \rrangle$ being the zero-noise coefficient for $\PX =\PY=\PZ =0$, we have
    \begin{equation}\label{append:eq:ridge_reg_reg_para_bound}
        \|\tilde{\bm{\alpha}}_{\mathfrak{C}(\Lambda)}\|_2^2  =  \sum_{\bomega\in\mathfrak{C}(\Lambda)}\tilde{\bm{\alpha}}_{\bomega}^2  \le  \sum_{\bomega\in\mathfrak{C}(\Lambda)}q^{2\|\bomega\|_0}(\bm{\alpha}_{\bomega})^2 \le  \sum_{\bomega\in\mathfrak{C}(\Lambda)}q^{2\|\bomega\|_0} B^2
    \end{equation}
    where $q=1-2(p+\PZ )$ and $p:=\min\{\PX ,\PY\}$, the first inequality employs Lemma~\ref{append:lem:noisy_coefficient}, and the second inequality follows that $|\bm{\alpha}_{\bomega}| \le \|O\|_{\infty}\le B$.
    To further simplify the term on the right-hand side of this equation, reformulating the summation over $\bomega\in\mathfrak{C}(\Lambda)$ into the summation over the norm of $\|\bomega\|_0\in [\Lambda]$ yields
    \begin{align}\label{append:eq:ridge_reg_reg_para_bound_2}
        \sum_{\bomega\in\mathfrak{C}(\Lambda)}q^{2\|\bomega\|_0} B^2
        &  
        = B^2\sum_{\bomega:\|\bomega\|_0=0}^{\Lambda}\sum_{\|\bomega_{\#-1}\|_0=0}^{\|\bomega\|_0} \binom{d}{\|\bomega\|_0}\binom{\|\bomega\|_0}{\|\bomega_{\#-1}\|_0}q^{2\|\bomega\|_0}
        \nonumber \\
        &  
        = B^2\sum_{\bomega:\|\bomega\|_0=0}^{\Lambda} \binom{d}{\|\bomega\|_0}2^{\|\bomega\|_0} q^{2\|\bomega\|_0} 
        \nonumber \\
        & \le B^2\left(\frac{de^{2q^2}}{\Lambda}\right)^{\Lambda },  
    \end{align}
    where $\|\bomega_{\#-1}\|_0=\sum_{j=1}^{d/L} N_j^-(\bomega)$ represents the number of $\bomega_j=-1$, the first equality follows from the fact that the number of frequency vectors  $\{\bomega\}\in \mathfrak{C}(\Lambda)$ whose nonzero entries satisfy $\|\bomega\|_0=j$ is $\sum_{\|\bomega_{\#-1}\|_0=0}^{j} \tbinom{d}{j}\tbinom{j}{\|\bomega_{\#-1}\|_0}$, the last inequality employs the binomial inequality established in Lemma~\ref{append:lem:binomial_inequality_1}.
 
\smallskip
    \noindent \underline{\textit{The upper bound of $K(\bx,\bx)<r^2$.}} We now analyze the upper bound $r^2$ of the kernel function, namely
    \begin{align}\label{append:eq:ridge_reg_kernel_bound}
        K(\bx,\bx) & = \langle \bm{\Phi}_{\mathfrak{C}(\Lambda)}(\bx) , \bm{\Phi}_{\mathfrak{C}(\Lambda)}(\bx)\rangle 
        \nonumber \\ 
        & = \sum_{\bomega \in \mathfrak{C}(\Lambda)} \Phi_{\bomega}(\bx)^2 
        \nonumber \\
        & \le |\mathfrak{C}(\Lambda)|
        \nonumber \\
        & = \sum_{\bomega:\|\bomega\|_0=0}^{\Lambda}\sum_{\|\bomega_{\#-1}\|_0=0}^{\|\bomega\|_0} \binom{d}{\|\bomega\|_0}\binom{\|\bomega\|_0}{\|\bomega_{\#-1}\|_0} 
        \nonumber \\
        & = \sum_{\bomega:\|\bomega\|_0=0}^{\Lambda}  \binom{d}{\|\bomega\|_0}  2^{\|\bomega\|_0}
        \nonumber \\
        & \le \left(\frac{de^2}{\Lambda}\right)^{\Lambda}, 
    \end{align}
    where the second equality follows the definition of the feature map $\bm{\Phi}_{\mathfrak{C}(\Lambda)}$, the first inequality follows that $\Phi_{\bomega}(\bx)\le 1$, the last equality follows $\sum_{j=0}^i \tbinom{i}{j}=2^i$, and the last inequality employs Lemma~\ref{append:lem:binomial_inequality_1}.

\smallskip
    \noindent \underline{\textit{The upper bound of $|h(\bx)-y|<M$.}} For any $(\bm{x},y)\in \mathcal{X}\times \mathcal{Y}$ and $h\in \mathcal{H}_{\mathfrak{C}(\Lambda)}$, we have
    \begin{align}\label{append:eq:ridge_reg_hypothesis_bound}
        |h(\bx)-y| \le \left| \langle \bm{\Phi}_{\mathfrak{C}(\Lambda)} , \bm{\mathrm{w}} \rangle \right| + |y| \le \left\|\bm{\mathrm{w}} \right\|_2 \left\| \bm{\Phi}_{\mathfrak{C}(\Lambda)} \right\|_2 + B + \epsilon_l \le B \left(\frac{de^{q^2+1}}{\Lambda}\right)^{\Lambda} + B+\epsilon_l \le 2B \left(\frac{de^{q^2+1}}{\Lambda}\right)^{\Lambda} ,
    \end{align}
    where the first inequality follows the triangular inequality, the second inequality employs the Cauchy-Schwartz inequality and $|y|\le B+\epsilon_l$, and the last inequality employs the results in Eq.~\eqref{append:eq:ridge_reg_reg_para_bound} and Eq.~\eqref{append:eq:ridge_reg_kernel_bound}.

    In conjunction with Lemma~\ref{append:lem:ridge_reg_prediction_error} and the derived upper bound of the training error (Eq.~\eqref{append:eq:ridge_reg_training_error_bound}),  the term $\lambda$ (Eq.~\eqref{append:eq:ridge_reg_reg_para_bound}), the term $r^2$ (Eq.~\eqref{append:eq:ridge_reg_kernel_bound}), and the term $M$ (Eq.~\eqref{append:eq:ridge_reg_hypothesis_bound}), with probability at least $1-\delta$, we have
    \begin{align}\label{append:eq:ridge_predict_error_bound_proof}
	\mathbb{E}_{\bx\sim \mathcal{X}} \left| h_{\mathsf{qs}}(\bx) - \Tr(O\tilderho(\bx))  \right|^2 \leq \left(B\left(\frac{deq(1+R)}{\Lambda} \right)^{\Lambda} + \epsilon_l \right)^2 + 4B^2\left(\frac{de^{q^2+1}}{\Lambda}\right)^{2\Lambda}\sqrt{\frac{\log(1/\delta)}{n}}.
    \end{align}
    To make the upper bound nontrivial,   the noise parameters $q$ and the maximal value $R$ of the input $\bx$ should satisfy $q(1+R)<1/e$, and the frequency should be truncated to $\Lambda> deq(1+R)$. 
    
    To ensure the average prediction error is upper bounded by $\epsilon$, it is sufficient to showcase when the two terms are separately bounded by $\epsilon/4$ and $3\epsilon/4$. For the first term, employing the definition that $\epsilon=16B^2(deq(1+R)/\Lambda )^{2\Lambda}$ and the condition $\epsilon_l \le \sqrt{\epsilon}/4$, hence we have
    \begin{equation}
        \left(B\left(\frac{deq(1+R)}{\Lambda} \right)^{\Lambda} + \epsilon_l \right)^2 \le \frac{\epsilon}{4}.
    \end{equation}
    For the second term, we have
    \begin{equation}
        4B^2\left(\frac{de^{q^2+1}}{\Lambda}\right)^{2\Lambda}\sqrt{\frac{\log(1/\delta)}{n}}\le \frac{3}{4}\epsilon \Leftrightarrow n \ge B^4\left(\frac{de^{q^2+1}}{\Lambda}\right)^{4\Lambda} \frac{256\log(1/\delta)}{9\epsilon^2}.
    \end{equation}
    Employing the condition $\epsilon=16B^2(deq(1+R)/\Lambda )^{2\Lambda}$ could further simplify this lower bound, namely
    \begin{align}
        n \ge B^4\left(\frac{de^{q^2+1}}{\Lambda}\right)^{4\Lambda} \frac{256\log(1/\delta)}{9\epsilon^2} \ge \left(\frac{de}{\Lambda}\right)^{4\Lambda} \cdot \left(\frac{\Lambda}{deq(1+R)}\right)^{4\Lambda} \cdot \frac{\log(1/\delta)}{9}=\left(\frac{1}{q(1+R)}\right)^{4\Lambda}\cdot \frac{\log(1/\delta)}{9}.
    \end{align}
    Taken together, with probability at least $1-\delta$, the prediction error is upper bounded by $\epsilon$ when the number of training examples satisfies
    \begin{equation}
        n \ge \left(\frac{1}{q(1+R)}\right)^{4\Lambda}\cdot \frac{\log(1/\delta)}{9}.
    \end{equation}
\end{proof}

\subsection{Binomial inequality (Proof of Lemma~\ref{append:lem:binomial_inequality_1} and Lemma~\ref{append:lem:binomial_inequality_2})}
\begin{proof}
    [Proof of Lemma~\ref{append:lem:binomial_inequality_1}]
    We first observe that 
    \begin{equation}\label{append:eq:binomial_inequality_1_2}
        \left(\frac{a}{b}\right)^b \le \binom{a}{b} \le \frac{a^b}{b!},~~\mbox{and}~~\sum_{i=0}^{\infty} \frac{b^i}{i!}  = e^b,
    \end{equation}
    where the second equality follows the Taylor series expansion of the exponential function $e^b$.
    Hence we have
    \begin{equation}\label{append:eq:binomial_inequality_1_3}
        \sum_{i=b}^a \binom{a}{i} x^i \le \sum_{i=b}^a  \frac{a^i}{i!} x^i = \sum_{i=b}^a \frac{b^i}{i!} \cdot \left(\frac{ax}{b} \right)^i \le \left(\frac{ax}{b} \right)^b \sum_{i=b}^a \frac{b^i}{i!}  \le \left( \frac{eax}{b}\right)^b,
    \end{equation}
    where the second inequality employs the condition $x\le b/a$. The last inequality is obtained by adding terms in the sum (from $i = 0$ to $i = \infty$) and recognizing the exponential series. Moreover, for any $y>0$, we have
    \begin{equation}
        \sum_{i=0}^b \binom{a}{i} y^i \le \sum_{i=0}^a  \frac{a^i}{i!} y^i = \sum_{i=0}^b \frac{(by)^i}{i!} \cdot \left(\frac{a}{b} \right)^i \le \left(\frac{a}{b} \right)^b \sum_{i=0}^b \frac{(by)^i}{i!}  \le \left(\frac{a}{b} \right)^b e^{by} =\left(\frac{ae^{y} }{b} \right)^b, 
    \end{equation}
    where the second inequality employs $a/b>1$, and the last inequality is obtained by adding terms in the sum (from $i = 0$ to $i = \infty$) and recognizing the exponential series.
\end{proof}

\begin{proof}
    [Proof of Lemma~\ref{append:lem:binomial_inequality_2}]
    We first observe that for fixed $i$ the inner sum can be simplified to
    \begin{equation}
    \sum_{j=0}^i \binom{a}{i}\binom{i}{j} x^{j} = \binom{a}{i} \sum_{j=0}^i \binom{i}{j} x^{j} = \binom{a}{i} (1+x)^{i}.
    \end{equation}
    Therefore, employing Eq.~\eqref{append:eq:binomial_inequality_1} established in Lemma~\ref{append:lem:binomial_inequality_1} yields
    \begin{equation}
        \sum_{i=b}^{a}\sum_{j=0}^i \binom{a}{j}\binom{a-j}{i-j} x^{j}q^{i} = \sum_{i=b}^{a} \binom{a}{i} (1+x)^{i}q^{i} \le \left( \frac{eaq(1+x)}{b}\right)^b,
    \end{equation}
    where the last equality follows that $(1+x)q\le b/a$.
\end{proof}

\subsection{Worst-case truncation error for ridge regression models (Proof of Lemma~\ref{append:lem:ridge_training_error})}
\begin{proof}
    [Proof of Lemma~\ref{append:lem:ridge_training_error}]
    Denote the set of truncated frequencies as $\mathfrak{C}(\Lambda) =\{\bomega|\bomega \in \{0, \pm 1\}^d, ~s.t.~\|\bomega\|_0\leq \Lambda\}$. We consider the worst-case truncation error of the truncated function $f_{\mathfrak{C}(\Lambda)}(\bx)= \langle \bm{\Phi}_{\mathfrak{C}(\Lambda)}(\bx),  \tilde{\bm{\alpha}}_{\mathfrak{C}(\Lambda)} \rangle=\sum_{\|\bomega\|_0\le \Lambda}\Phi_{\bomega}(\bx) \cdot \tilde{\bm{\alpha}}_{\bomega}$, where $\bm{\Phi}_{\mathfrak{C}(\Lambda)}(\bx)=[\Phi_{\bomega}(\bx)]_{\bomega\in\mathfrak{C}(\Lambda)}$ and $\tilde{\bm{\alpha}}_{\mathfrak{C}(\Lambda)}=[\tilde{\bm{\alpha}}_{\bomega}]_{\bomega\in\mathfrak{C}(\Lambda)}$, namely 
    \begin{align}\label{append:eq:worst_case_train_error}
    \max_{\bx\in \mathcal{X}}\left|f_{\mathfrak{C}(\Lambda)}(\bx)-\Tr(\tilde{\rho}(\bx)O)\right| = &
        \max_{\bx\in \mathcal{X}}\left|\sum_{\bomega:\|\bomega\|_0>\Lambda}\Phi_{\bomega}(\bx)\tilde{\bm{\alpha}}_{\bomega}\right|
        \nonumber \\
        \le & \max_{\bx\in \mathcal{X}} \sum_{\bomega:\|\bomega\|_0>\Lambda}q^{\|\bomega\|_0}\left|\Phi_{\bomega}(\bx)\right| \left|\bm{\alpha}_{\bomega}\right| 
        \nonumber \\
        \le & \max_{\bx\in \mathcal{X}} \sum_{\bomega:\|\bomega\|_0>\Lambda}q^{\|\bomega\|_0}\cdot \left|\prod_{j=1}^{d/L}\sin^{\|\bomega_{\#-1}\|_0}(\bx_j)\cos^{\|\bomega\|_0-\|\bomega_{\#-1}\|_0}(\bx_j)\right| \cdot B
        \nonumber \\
        \le &  B \sum_{\bomega:\|\bomega\|_0>\Lambda}q^{\|\bomega\|_0} \cdot \prod_{j=1}^{d/L}R^{\|\bomega_{\#-1}\|_0}
        \nonumber \\
        = &  B \sum_{\bomega:\|\bomega\|_0=\Lambda+1}^{d} \sum_{\|\bomega_{\#-1}\|_0=0}^{\|\bomega\|_0}\binom{d}{\|\bomega\|_0}\binom{\|\bomega\|_0}{\|\bomega_{\#-1}\|_0}q^{\|\bomega\|_0}R^{\|\bomega_{\#-1}\|_0}
        \nonumber \\
        \le & B\left(\frac{edq(1+R)}{\Lambda+1} \right)^{\Lambda+1},
    \end{align}
    where $\|\bomega_{\#-1}\|_0=\sum_{j=1}^{d/L} N_j^-(\bomega)$ represents the number of $\bomega_j=-1$. The first equality follows the definition of $\Tr(\tilderho(\bx)O)=\sum_{\bomega:\|\bomega\|_0\ge0}\Phi_{\bomega}(\bx)\tilde{\bm{\alpha}}_{\bomega}$. The first inequality employs Lemma~\ref{append:lem:noisy_coefficient} about the relation between $\tilde{\bm{\alpha}}_{\bomega}$ and zero-noise coefficients $\bm{\alpha}_{\bomega}$ such that  $|\tilde{\bm{\alpha}}_{\bomega}| \le q^{\|\bomega\|_0}\bm{\alpha}_{\bomega}$. The second inequality employs $|\bm{\alpha}_{\bomega}|\le \|O\|_{\infty}\le B$ and the definition of $\Phi_{\bomega}(\bx)=\prod_{j=1}^{d/L} \sin^{N_j^-(\bomega)}(\bx_j)\cos^{N_j^+(\bomega)}(\bx_j)$ in Eq.~\eqref{append:eq:feature_map_para_share}. The third inequality follows that $|\sin(\bx_j)|\le |\bx_j| \le R$ and $|\cos(\bx_j)|\le 1$. The second equality follows from the fact that the number of frequency vectors  $\bomega\in \mathfrak{C}(\Lambda)$ satisfying $\|\bomega\|_0=a$  and $\|\bomega_{\#-1}\|_0=b$ is $\tbinom{d}{a}\tbinom{a}{b}$. The last inequality employs Lemma~\ref{append:lem:binomial_inequality_2} and the condition $q(1+R)\le (\Lambda+1)/d$.
\end{proof}

\section{Learnability for the proposed ridge regression model in a small range}\label{append:sec:ridge_samll_range}
In this section, we consider the efficiency of the predictive surrogate for the quantum circuits with correlated parameters, where the range of the parameters $\bx$ is restricted to a small region $[-R,R]^d$. In particular, we follow the same manner as discussed in SM~\ref{append:sec:implement_regression} to construct the predictive surrogate $h_{\mathsf{qs}}$ from the ridge regression model as defined in Eq.~\eqref{append:eq:ridge_reg_model_truncated} but with a different truncated feature map $\bm{\Phi}_{\mathfrak{C}(\Lambda)}(\bx)$. 

Recall that the truncated feature map $\bm{\Phi}_{\mathfrak{C}(\Lambda)}(\bx)$ considered in SM~\ref{append:sec:implement_regression} contains the trigonometric monomial $\Phi_{\bomega}(\bx)$ with the frequency index $\bomega$ satisfying $\|\bomega\|_0 \le \Lambda$, where the product terms $\sin(\bx_j)$ and $\cos(\bx_j)$ are treated equally in the truncation rule. For a small range of the input parameters $\bx_j \in [-R,R]$ with $R\to 0$, this truncation rule is not appropriate as the term $\cos(\bx_j)$ is much larger than $\sin(\bx_j)$ and approaches $1$ such that the truncation of the trigonometric monomial $\Phi_{\bomega}(\bx)$ containing $\cos(\bx_j)$ leads to a large truncation error. In this regard, we consider the truncated feature map 
\begin{equation}
\bm{\Phi}_{\mathfrak{S}(\Lambda)}(\bx)=[\Phi_{\bomega}(\bx)]_{\bomega\in\mathfrak{S}(\Lambda)},~ \mbox{with} ~\mathfrak{S}(\Lambda)=\{\bomega|\bomega\in\{0,\pm 1\},~s.t.~\|\bomega_{\#-1}\|_0\le \Lambda\},
\end{equation}
where $\|\bomega_{\#-1}\|_0$ refers to the number of $\bomega_j=-1$. This indicates that the truncation rule is only based on the number of $\sin(\bx_j)$ contained in $\Phi_{\bomega}(\bx)$. For this feature map, we construct the predictive surrogate $h_{\mathsf{qs}}=\braket{\bm{\Phi}_{\mathfrak{S}(\Lambda)}(\bx), \hat{\bm{\mathrm{w}}}}$ by solving the optimization problem defined in Eq.~\eqref{append:eq:ridge_reg_model_truncated}. This predictive surrogate could achieve an improved efficiency compared to that constructed on the feature map $\bm{\Phi}_{\mathfrak{C}(\Lambda)}(\bx)$, as encapsulated in the following theorem.

\begin{theorem}\label{append:thm:h_mb_small_range}
    Following the notation in Theorem~2, let the feature map be $\bm{\Phi}_{\mathfrak{S}(\Lambda)}$. Define $\epsilon=16B^2(deqR)/\Lambda)^{2\Lambda}$. When the following conditions are satisfied: (i) $\epsilon_l\le\sqrt{\epsilon}/4$, (ii) the frequency is truncated to $\Lambda>deqR$, (iii) the number of training examples satisfies \[n=\Big(\frac{1}{eqR}\Big)^{4\Lambda}\cdot \frac{\log(1/\delta)}{9},\] then the predictive surrogate $h_{\mathsf{qs}}(\bx)$ constructed by the ridge regression model in Eq.~\eqref{append:eq:ridge_reg_model_truncated} with the feature map $\bm{\Phi}_{\mathfrak{S}(\Lambda)}$, achieves with probability at least $1-\delta$,
    \begin{equation}\label{append:eq:ridge_predict_error_bound}
	\mathbb{E}_{\bx\sim \mathbb{D}} \left| h_{\mathsf{qs}}(\bx) - \Tr(O\tilderho(\bx))  \right|^2 \leq \epsilon.
\end{equation}
\end{theorem}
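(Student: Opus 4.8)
The plan is to mirror the proof of Theorem~2 essentially line by line, the only structural change being that the feature map $\bm{\Phi}_{\mathfrak{C}(\Lambda)}$ is replaced by $\bm{\Phi}_{\mathfrak{S}(\Lambda)}$, which truncates a trigonometric monomial only according to how many $\sin$-factors it carries ($\|\bomega_{\#-1}\|_0\le\Lambda$) while imposing no constraint on the $\cos$-branches. As before, I would invoke the kernel ridge regression bound (Lemma~\ref{append:lem:ridge_reg_prediction_error}) with the kernel $K(\bx,\bx')=\langle\bm{\Phi}_{\mathfrak{S}(\Lambda)}(\bx),\bm{\Phi}_{\mathfrak{S}(\Lambda)}(\bx')\rangle$ and the hypothesis class $\mathcal{H}_{\mathfrak{S}(\Lambda)}=\{\bx\mapsto\langle\bm{\Phi}_{\mathfrak{S}(\Lambda)}(\bx),\bm{\mathrm{w}}\rangle:\|\bm{\mathrm{w}}\|_2\le\lambda\}$, choosing $\lambda\ge\|\tilde{\bm{\alpha}}_{\mathfrak{S}(\Lambda)}\|_2$ so that the $\mathfrak{S}(\Lambda)$-truncated target $f_{\mathfrak{S}(\Lambda)}(\bx)=\langle\bm{\Phi}_{\mathfrak{S}(\Lambda)}(\bx),\tilde{\bm{\alpha}}_{\mathfrak{S}(\Lambda)}\rangle$ lies in $\mathcal{H}_{\mathfrak{S}(\Lambda)}$; then the training error of the optimizer $h_{\mathsf{qs}}$ is dominated by that of $f_{\mathfrak{S}(\Lambda)}$, hence by its worst-case uniform truncation error plus the label error $\epsilon_l$, exactly as in Eq.~\eqref{append:eq:ridge_tr_err_truncat_err}.

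The four quantities entering Eq.~\eqref{append:eq:cl_ridge_predict_error} then have to be re-derived for the new index set, and for each I would exploit the two elementary facts that characterize the small-$R$ regime: $|\sin(\bx_j)|\le|\bx_j|\le R$ while $|\cos(\bx_j)|\le 1$, together with the noise contraction $|\tilde{\bm{\alpha}}_{\bomega}|\le q^{\|\bomega\|_0}|\bm{\alpha}_{\bomega}|\le q^{\|\bomega\|_0}B$ from Lemma~\ref{append:lem:noisy_coefficient}. (i) Worst-case truncation error: the analog of Lemma~\ref{append:lem:ridge_training_error}, now summing over $\bomega$ with $\|\bomega_{\#-1}\|_0>\Lambda$; each retained $+1$-entry costs only a $\cos$-factor $\le 1$ and each of the $\|\bomega_{\#-1}\|_0$ dropped $\sin$-entries costs a factor $\le R$, so $|\Phi_{\bomega}(\bx)|\le R^{\|\bomega_{\#-1}\|_0}$; organizing the sum by the number of $-1$'s (which exceeds $\Lambda$) and the number of $+1$'s, and then applying Lemmas~\ref{append:lem:binomial_inequality_1}--\ref{append:lem:binomial_inequality_2}, yields a bound of the form $B\big(edqR/(\Lambda+1)\big)^{\Lambda+1}$ once the un-truncated $\cos$-branches are accounted for. (ii) The regularization radius: $\lambda^2\le\sum_{\bomega\in\mathfrak{S}(\Lambda)}q^{2\|\bomega\|_0}B^2$, evaluated by a binomial sum. (iii) The kernel bound $r^2=\max_{\bx}\sum_{\bomega\in\mathfrak{S}(\Lambda)}\Phi_{\bomega}(\bx)^2$, using again $\Phi_{\bomega}(\bx)^2\le R^{2\|\bomega_{\#-1}\|_0}$. (iv) $M=\max|h(\bx)-y|$ by Cauchy--Schwarz from (ii) and (iii) together with $|y|\le B+\epsilon_l$.

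Finally I would substitute these into Lemma~\ref{append:lem:ridge_reg_prediction_error}, impose the truncation condition $\Lambda>deqR$ so that all the binomial series sit in their convergent regime, split the error budget into a deterministic part ($\le\epsilon/4$, absorbing the truncation error and $\epsilon_l\le\sqrt\epsilon/4$) and a statistical part ($\le 3\epsilon/4$), solve the latter for $n$, and then use the definition $\epsilon=16B^2(deqR/\Lambda)^{2\Lambda}$ to collapse the lower bound on $n$ to $\big(1/(eqR)\big)^{4\Lambda}\log(1/\delta)/9$, in complete parallel with the closing display of the proof of Theorem~2. The main obstacle is step~(i) together with (ii)--(iii): unlike $\mathfrak{C}(\Lambda)$, the set $\mathfrak{S}(\Lambda)$ has unbounded Hamming weight, so the $\cos$-branches are never truncated and cannot be controlled by a cardinality bound; the argument must instead lean on the smallness of $R$ (so those branches contribute only factors arbitrarily close to $1$) in tandem with the contraction $q<1$, and pinning down precisely how $\Lambda$ must scale with $d$, $q$, $R$ for the resulting sums over the number of $-1$'s to converge is where the real work lies. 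Once those estimates are secured, the remainder is the same bookkeeping already carried out for Theorem~2.
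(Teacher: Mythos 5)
Your proposal follows the paper's structure and correctly isolates the place where the argument for $\mathfrak{S}(\Lambda)$ differs from that for $\mathfrak{C}(\Lambda)$: the set $\mathfrak{S}(\Lambda)$ carries no cap on the total Hamming weight of $\bomega$, since the $+1$-coordinates are never truncated, so the small-cardinality bookkeeping that drives the proof of Theorem~2 is not available. You flag this as ``where the real work lies'' but stop there, so as written the proposal is a plan rather than a proof, with the gap sitting at precisely the step on which the theorem hinges.

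That gap is worth pressing on, because for the kernel bound (your item~(iii)) it looks fatal rather than merely unfinished. Writing $a=\|\bomega_{\#-1}\|_0$, the number of $\bomega\in\{0,\pm 1\}^d$ with exactly $a$ coordinates equal to $-1$ is $\binom{d}{a}\,2^{d-a}$, not $\binom{d}{a}$: each of the remaining $d-a$ coordinates ranges freely over $\{0,+1\}$. Hence $\mathfrak{S}(\Lambda)$ always contains all $2^d$ strings in $\{0,+1\}^d$, for which your surrogate estimate $\Phi_{\bomega}(\bx)^2\le R^{2\|\bomega_{\#-1}\|_0}=1$ is vacuous, and at $\bx=\bm 0$ one has
\begin{equation*}
K_{\mathfrak{S}(\Lambda)}(\bm 0,\bm 0)\;\ge\;\sum_{\bomega\in\{0,+1\}^d}\Phi_{\bomega}(\bm 0)^2\;=\;2^d.
\end{equation*}
Neither the smallness of $R$ nor the contraction $q<1$ of Lemma~\ref{append:lem:noisy_coefficient} can tame this: the $q$-decay controls only the coefficients $\tilde{\bm{\alpha}}_{\bomega}$ and never enters the kernel radius $r^2$ that Lemma~\ref{append:lem:ridge_reg_prediction_error} requires, while smallness of $R$ cannot shrink $\cos$-factors sitting at $1$. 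Plugging $r^2\ge 2^d$ into the generalization bound makes the statistical term $4M\sqrt{r^2\lambda^2/n}$ exponential in $d$ at the advertised sample size. A milder version of the same $2^{d-a}$ undercount, manifesting as factors $(1+q)^{d-a}$ or $(1+q^2)^{d-a}$, also appears in your items~(i) and~(ii) once the genuine $q^{\|\bomega\|_0}$ contraction, rather than the loosened $q^{\|\bomega_{\#-1}\|_0}$, is used. Your instinct that the real difficulty lives here is correct, but the tools you cite (Lemmas~\ref{append:lem:binomial_inequality_1}--\ref{append:lem:binomial_inequality_2}, smallness of $R$) do not suffice; some additional structure --- rescaling the features, pushing the $q^{\|\bomega\|_0}$ damping into the kernel itself, or further restricting the hypothesis class --- would be needed. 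You should also scrutinize the cardinality identity $\left|\{\bomega:\|\bomega_{\#-1}\|_0=a\}\right|=\binom{d}{a}$ asserted in the paper's own Eqs.~\eqref{append:eq:smallR_ridge_reg_reg_para_bound} and~\eqref{append:eq:smallR_ridge_reg_kernel_bound}, since the dropped factor $2^{d-a}$ is exactly the quantity at issue.
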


\begin{proof}
    [Proof of Theorem~\ref{append:thm:h_mb_small_range}] 
    We follow a similar strategy with the proof of Theorem~2 to derive the upper bound of the prediction error. Denote $f_{\mathfrak{S}(\Lambda)}(\bx)=\langle \bm{\Phi}_{\mathfrak{S}(\Lambda)}(\bx) , \tilde{\bm{\alpha}}_{\mathfrak{S}(\Lambda)} \rangle=\sum_{\bomega \in \mathfrak{S}(\Lambda)}\Phi_{\bomega}(\bx) \cdot \tilde{\bm{\alpha}}_{\bomega}$ as the truncated target function regarding the feature map $\bm{\Phi}_{\mathfrak{S}(\Lambda)}(\bx)$, $K_{\mathfrak{S}(\Lambda)}(\bx,\bx')=\langle \bm{\Phi}_{\mathfrak{S}(\Lambda)}(\bx),  \bm{\Phi}_{\mathfrak{S}(\Lambda)}(\bx')\rangle$ as the kernel induced by $\bm{\Phi}_{\mathfrak{S}(\Lambda)}(\bx)$, and \[\mathcal{H}_{\mathfrak{S}(\Lambda)}=\{h(\bx)=\braket{\bm{\Phi}_{\mathfrak{S}(\Lambda)}(\bx), \bm{\mathrm{w}}}|\bm{\mathrm{w}}\in \mathbb{R}^{|\mathfrak{S}(\Lambda)|}\}\] as the hypothesis class. 
    
    Recall that deriving the upper bound of prediction error for surrogate $h_{\mathsf{qs}}$ based on the feature map $\bm{\Phi}_{\mathfrak{S}(\Lambda)}(\bx)$ could be reduced to separately derive the upper bound of training error $\sum_{i=1}^n (h_{\mathsf{qs}}(\bxi)-\yi)^2/n$, regularization parameter $\|\tilde{\bm{\alpha}}_{\mathfrak{S}(\Lambda)}\|_2\le \lambda$, kernel function $K_{\mathfrak{S}(\Lambda)}(\bx,\bx)\le r^2$, and $|h(\bx)-y|\le M$ for any $h\in \mathcal{H}_{\mathfrak{S}(\Lambda)}$.

\smallskip
    \noindent \underline{\textit{The upper bound of training error $\sum_{i=1}^n (h_{\mathsf{qs}}(\bxi)-\yi)^2/n$.}} Let $i^*$ be the index of training data $(\bx^{(i*)}, y^{(i*)})$ with the maximal error for $f_{\mathfrak{S}(\Lambda)}(\bxi)$, i.e., $
        i^* = \arg\min_{i\in[n]} (f_{\mathfrak{S}(\Lambda)}(\bxi)-\yi)^2$. Following the derivation of Eq.~\eqref{append:eq:ridge_reg_training_error_bound}, we have
    \begin{align}\label{append:eq:smallR_ridge_reg_training_error_bound}
    \frac{1}{n}\sum_{i=1}^n \left(h_{\mathsf{qs}}(\bxi)-\yi\right)^2 \le
        & \left(\left|f_{\mathfrak{S}(\Lambda)}(\bx^{(i^*)})-\Tr(\tilde{\rho}(\bx^{(i^*)})O)\right| +\left|\Tr(\tilde{\rho}(\bx^{(i^*)})O)-y^{(i^*)}\right|\right)^2
        \nonumber \\
        \le   & \left(B\left(\frac{edqR}{\Lambda+1} \right)^{\Lambda+1}+\epsilon_l\right)^2,
    \end{align}
    where the second term in the second inequality employs the condition that estimation error on the training data is bounded by $\epsilon_l$, and the first term in the second inequality follows the derivation in Eq.~\eqref{append:eq:worst_case_train_error}, namely,
      \begin{align}
    \left|f_{\mathfrak{S}(\Lambda)}(\bx^{(i^*)})-\Tr(\tilde{\rho}(\bx^{(i^*)})O)\right| = &
        \max_{\bx\in \mathcal{X}}\left|\sum_{\bomega:\|\bomega_{\#-1}\|_0>\Lambda}\Phi_{\bomega}(\bx)\tilde{\bm{\alpha}}_{\bomega}\right|
        \nonumber \\
        \le &  \sum_{\bomega:\|\bomega_{\#-1}\|_0>\Lambda}q^{\|\bomega_{\#-1}\|_0} \cdot R^{\|\bomega_{\#-1}\|_0} \cdot B
        \nonumber \\
        = &  B \sum_{\bomega:\|\bomega_{\#-1}\|_0=\Lambda+1}^{d} \binom{d}{\|\bomega_{\#-1}\|_0}(qR)^{\|\bomega_{\#-1}\|_0}
        \nonumber \\
        \le & B\left(\frac{edqR}{\Lambda+1} \right)^{\Lambda+1},
    \end{align}
    where the second equality follows follows from the fact that the number of frequency vectors  $\bomega\in \mathfrak{S}(\Lambda)$ satisfying $\|\bomega_{\#-1}\|_0=a$ is $\tbinom{d}{a}$, the last inequality employs Lemma~\ref{append:lem:binomial_inequality_1}.

\smallskip
    \noindent \underline{\textit{The upper bound of regularization parameter $\|\tilde{\bm{\alpha}}_{\mathfrak{S}(\Lambda)}\|_2\le \lambda$.}} 
    Following the derivation of Eq.~\eqref{append:eq:ridge_reg_reg_para_bound}, we have
    \begin{align}\label{append:eq:smallR_ridge_reg_reg_para_bound}
        \|\tilde{\bm{\alpha}}_{\mathfrak{S}(\Lambda)}\|_2^2    \le  \sum_{\bomega\in\mathfrak{S}(\Lambda)}q^{2\|\bomega\|_0} B^2
        = B^2\sum_{\bomega:\|\bomega_{\#-1}\|_0=0}^{\Lambda} \binom{d}{\|\bomega_{\#-1}\|_0} q^{2\|\bomega_{\#-1}\|_0}
         \le B^2\left(\frac{de^{q^2}}{\Lambda}\right)^{\Lambda }
    \end{align}
    where the first equality follows that the number of frequency vectors  $\{\bomega\}\in \mathfrak{S}(\Lambda)$ satisfying $\|\bomega_{\#-1}\|_0=a$ is $\tbinom{d}{a}$, the last inequality employs Lemma~\ref{append:lem:binomial_inequality_1}.

\smallskip 
    \noindent \underline{\textit{The upper bound of $K_{\mathfrak{S}(\Lambda)}(\bx,\bx)<r^2$.}} Following the derivation of Eq.~\eqref{append:eq:ridge_reg_kernel_bound}, we have
    \begin{equation}\label{append:eq:smallR_ridge_reg_kernel_bound}
        K_{\mathfrak{S}(\Lambda)}(\bx,\bx)  = \sum_{\bomega \in \mathfrak{S}(\Lambda)} \Phi_{\bomega}(\bx)^2  = \sum_{\bomega:\|\bomega_{\#-1}\|_0=0}^{\Lambda} \binom{d}{\|\bomega_{\#-1}\|_0} \Phi_{\bomega}(\bx)^2.
    \end{equation}
    Employing the fact of $\sin(\bx_j)\le \bx_j\le R$ and $\cos(\bx_j)\le 1$ for a small range of $\bx_j\in [-R,R]$, we have
    \begin{align}\label{append:eq:smallR_ridge_reg_kernel_bound_2}
        K_{\mathfrak{S}(\Lambda)}(\bx,\bx)
        & \le \sum_{\bomega:\|\bomega_{\#-1}\|_0=0}^{\Lambda}  \binom{d}{\|\bomega_{\#-1}\|_0}  R^{2\|\bomega_{\#-1}\|_0}
        \nonumber \\
        & \le \left(\frac{de^{R^2}}{\Lambda}\right)^{\Lambda}, 
    \end{align}
    where the first inequality follows $\Phi_{\bomega}(\bx)\le R^{\|\bomega_{\#-1}\|_0}$, and the last inequality employs Lemma~\ref{append:lem:binomial_inequality_1}.

\smallskip
    \noindent \underline{\textit{The upper bound of $|h(\bx)-y|<M$.}} Following the derivation of Eq.~\eqref{append:eq:ridge_reg_hypothesis_bound}, for any $h\in \mathcal{H}_{\mathfrak{S}(\Lambda)}$, we have
    \begin{align}\label{append:eq:smallR_ridge_reg_hypothesis_bound}
        |h(\bx)-y| \le \left| \Phi_{\mathfrak{S}(\Lambda)} \cdot \bm{\mathrm{w}} \right| + |y| \le \left\|\bm{\mathrm{w}}\right\|_2 \left\| \Phi_{\mathfrak{C}(\Lambda)} \right\|_2 + B+\epsilon_l \le 2B \left(\frac{de^{(q^2+R^2)/2}}{\Lambda}\right)^{\Lambda} ,
    \end{align}
    where the last inequality employs the results in Eq.~\eqref{append:eq:smallR_ridge_reg_reg_para_bound} and Eq.~\eqref{append:eq:smallR_ridge_reg_kernel_bound}. 
    
    In conjunction with Lemma~\ref{append:lem:ridge_reg_prediction_error} and the derived upper bound of the training error (Eq.~\eqref{append:eq:smallR_ridge_reg_training_error_bound}),  the term $\lambda$ (Eq.~\eqref{append:eq:smallR_ridge_reg_reg_para_bound}), the term $r^2$ (Eq.~\eqref{append:eq:smallR_ridge_reg_kernel_bound}), and the term $M$ (Eq.~\eqref{append:eq:smallR_ridge_reg_hypothesis_bound}), with probability at least $1-\delta$, we have
    \begin{align}\label{append:eq:ridge_predict_error_bound_proof}
	\mathbb{E}_{\bx\sim \mathcal{X}} \left| h_{\mathsf{qs}}(\bx) - \Tr(O\tilderho(\bx))  \right|^2 \leq \left(B\left(\frac{deqR}{\Lambda} \right)^{\Lambda} + \epsilon_l \right)^2 + 4B^2\left(\frac{d^2e^{q^2+R^2}}{\Lambda^2}\right)^{\Lambda}\sqrt{\frac{\log(1/\delta)}{n}}.
    \end{align}
    To make the upper bound nontrivial,  the noise parameters $q$ and the maximal value $R$ of the input $\bx$ should satisfy $qR<1/e$, and the frequency should be truncated to $\Lambda> deqR$. 
    
    To ensure the average prediction error is upper bounded by $\epsilon$, it is sufficient to showcase when the two terms are separately bounded by $\epsilon/4$ and $3\epsilon/4$. For the first term, employing the definition that $\epsilon=16B^2(deqR/\Lambda )^{2\Lambda}$ and the condition $\epsilon_l \le \sqrt{\epsilon}/4$, hence we have
    \begin{equation}
        \left(B\left(\frac{deqR}{\Lambda} \right)^{\Lambda} + \epsilon_l \right)^2 \le \frac{\epsilon}{4}.
    \end{equation}
    For the second term, we have
    \begin{equation}
        4B^2\left(\frac{d^2e^{q^2+R^2}}{\Lambda^2}\right)^{\Lambda^2}\sqrt{\frac{\log(1/\delta)}{n}}\le \frac{3}{4}\epsilon \Leftrightarrow n \ge B^4\left(\frac{d^2e^{q^2+R^2}}{\Lambda^2}\right)^{2\Lambda} \frac{256\log(1/\delta)}{9\epsilon^2}.
    \end{equation}
    Employing the condition $\epsilon=16B^2(deqR/\Lambda )^{2\Lambda}$ could further simplify this lower bound, namely
    \begin{align}
        n \ge B^4\left(\frac{d^2e^{q^2+R^2}}{\Lambda^2}\right)^{2\Lambda} \frac{256\log(1/\delta)}{9\epsilon^2} \ge \left(\frac{de}{\Lambda}\right)^{4\Lambda} \cdot \left(\frac{\Lambda}{deqR}\right)^{4\Lambda} \cdot \frac{\log(1/\delta)}{9}=\left(\frac{1}{qR}\right)^{4\Lambda}\cdot \frac{\log(1/\delta)}{9}.
    \end{align}
    Taken together, with probability at least $1-\delta$, the prediction error is upper bounded by $\epsilon$ when the number of training examples satisfies
    \begin{equation}
        n \ge \left(\frac{1}{eqR}\right)^{4\Lambda}\cdot \frac{\log(1/\delta)}{9}.
    \end{equation}
\end{proof}

\section{Computational time for training and prediction}\label{append:sec:complexity-analysis}
In this section, we analyze the computational efficiency of the proposed predictive surrogates, namely $h_{\mathsf{cs}}$ for classical shadow and $h_{\mathsf{qs}}$ for quantum simulation, as described in Eq.~\eqref{eqn:generic-learner} and Eq.~\eqref{eqn:surrogate_correlated_para}, respectively. Recall that our proposal consists of two stages: training dataset collection and model inference. As a result, to demonstrate the efficiency of our proposal, it is equivalent to exhibiting the computational efficiency at each stage. In the following, we separately analyze the computational cost of $h_{\mathsf{cs}}$  and $h_{\mathsf{qs}}$ in SM~\ref{appendix:subsec:time_cs} and SM~\ref{appendix:subsec:time_mb}.

\subsection{Computational time of $h_{\mathsf{cs}}$}\label{appendix:subsec:time_cs}

The analysis of the computational time of $h_{\mathsf{cs}}$ is similar to the noiseless setting \cite{du2024efficient}. This is because their training and prediction procedures are completely identical. For self-consistency, here we provide a brief review of the analysis provided in Ref.~\cite{du2024efficient}.

\noindent\underline{\textit{Training time}}. Recall that the training procedure of $h_{\mathsf{cs}}$ amounts to loading the collected training dataset $\mathcal{T}_{\mathsf{s}}=\{\bxi, \tilde{\rho}_T(\bxi)\}_{i=1}^n$ to the classical memory, where the required computation cost to store and load the $N$-qubit classical shadow representation $\tilde{\rho}_T(\bxi)$ with $T$ snapshots is $\mathcal{O}(NT)$. Simultaneously, the computation cost to store and load the classical input $\bxi$ is $\mathcal{O}(d)$. Combining these facts with the result of Theorem~1, where the required total number of training examples $n$ of $h_{\mathsf{cs}}$, the computation cost to load the dataset $\mathcal{T}_{\mathsf{s}}$ is 
 \begin{eqnarray}
 	\mathcal{O}(nNT)= \mathcal{O}\left(  \frac{B^2 9^KNT|\mathfrak{C}(\Lambda)|}{\epsilon}   \log \left(\frac{|\mathfrak{C}(\Lambda)|}{\delta}\right)\right) = \widetilde{\mathcal{O}}\left(\frac{ NTB^2 9^K}{\epsilon} \left|\mathfrak{C}\left(\min \left\{\frac{4C}{\epsilon}, \frac{1}{2(p+\PZ )} \log\left(\frac{2B}{\sqrt{\epsilon}}\right)\right\}\right)\right|    \right).
 \end{eqnarray}
 Analogous to the noiseless case, when $C$ is bounded (i.e., $C\sim \mathcal{O}(1/\text{poly}(N))$ or $C\sim \mathcal{O}(1/\text{exp}(N))$) or $d$ is a small constant, the cardinality $|\mathfrak{C}(4C/\epsilon)|$ polynomially scales with $N$ and $d$, and thus the predictive surrogate $h_{\mathsf{cs}}$ is computationally efficient to complete the training stage. Moreover, higher levels of Pauli noise (i.e., a smaller $1/2(p+p_Z)$) also lead to the efficient training of $h_{\mathsf{cs}}$.
 
 \smallskip
  \noindent\underline{\textit{Inference (prediction) time}}. Suppose that the observable $O$ is constituted by multiple local observables with a bounded norm, i.e., $O=\sum_{i=1}^q O_i$ and $\sum_l \|O_i\|\leq B$, and the maximum locality of $\{O_i\}$ is $K$. Following the definition in Eq.~(\ref{append:eqn:TriGeo-non-trunc-form}), given a new input $\bx$ and the observable $O$, the prediction yields 
  \begin{equation}
  h_{\mathsf{cs}} \equiv\Tr(O \hatsigma(\bx))=\frac{1}{n}\sum_{i=1}^n\hatkappa\left(\bx, \bxi\right)	\Tr\left(O \tilderho_T(\bxi)\right).
  	\end{equation}  
  In this regard, the evaluation involves summing over the assessment of each training example  $(\bxi, \tilderho_T(\bxi))$ for $\forall i \in [n]$, and the evaluation of each training example can be further decomposed into two components. That is, the first component is classically computing the shadow estimation $\Tr(O\tilde{\rho}_T(\bxi))$; the second component is calculating the kernel function $\kappa_{\Lambda}(\bx, \bxi)$ for $\forall i \in [n]$. The computation of each $\Tr(O_i\tilde{\rho}_T(\bxi))$ for $K\sim \mathcal{O}(1)$ can be completed in $\mathcal{O}(T)$ time after storing the classical shadow in the classical memory. Therefore, the computation cost to complete the first part is $\mathcal{O}(Tq)$. Moreover, the computation cost of evaluating the kernel $\kappa_{\Lambda}(\bx, \bxi)$ is determined by the dimension of the truncated feature map, namely $\mathcal{O}(|\mathfrak{C}(4C/\epsilon)|)$.
 In conjunction with the computation cost of each example and the total number of training examples $n$, the required predicting time for $h_{\mathsf{cs}}$ is 
    \begin{equation}
  	\mathcal{O}\Big(n(Tq+|\mathfrak{C}(4C/\epsilon)|)\Big) \leq \widetilde{\mathcal{O}}\left( \frac{TqB^2 9^K}{\epsilon} \left|\mathfrak{C}\left(\min \left\{\frac{4C}{\epsilon}, \frac{1}{2(p+\PZ )} \log\left(\frac{2B}{\sqrt{\epsilon}}\right)\right\}\right)\right|^2  \right) \le \mathcal{O}(n^2Tq).
  \end{equation}
    
\smallskip
  \noindent\underline{\textit{Total computation cost.}}  When $C$ is bounded or a large noise level, the whole procedure of our proposal (encompassing the training and predicting) is both computation and memory efficient, which is upper bounded by 
\begin{equation}
\widetilde{\mathcal{O}}\left( \frac{TNqB^2 9^K |\mathfrak{C}(4C/\epsilon)|^2}{\epsilon} \right)\le \widetilde{\mathcal{O}}(n^2NTq).
\end{equation}

\subsection{Computational time of $h_{\mathsf{qs}}$}\label{appendix:subsec:time_mb}
\noindent\underline{\textit{Training time}}. Recall that the predictive surrogate for quantum simulation $h_{\mathsf{qs}}=\braket{\bm{\Phi}_{\mathfrak{C}(\Lambda)}(\bx), \widehat{\bm{\mathrm{w}}}}$ is constructed by solving the optimization problem
\begin{equation}\label{append:eq:ridge}
    \min_{\bm{\mathrm{w}}} \frac{1}{n}\sum_{i=1}^n \left(\yi- h_{\mathsf{qs}}(\bxi,\bm{\mathrm{w}})\right)^2 + \lambda \|\bm{\mathrm{w}}\|_2,
\end{equation}
where the training dataset is $\mathcal{T}_{\mathsf{mb}}=\{(\bxi,\yi)\}_{i=1}^n$, and the response $\yi$ is the estimated mean value of the observable $O$ with the maximal estimation error  $\max_{i\in [n]} |\yi-\Tr(\tilderho(\bxi)O)| \le \epsilon_l$. In this regard, the computational cost during the training stage is dominated by the time required for
kernel ridge regression over the feature space defined by the feature vector $\bm{\Phi}_{\mathfrak{C}(\Lambda)}(\bx)$. In particular, one can show that the optimization problem in Eq.~\eqref{append:eq:ridge} is a convex problem so that we can solve its equivalent dual problem instead, which is given by
\begin{equation}
    \max_{\bm{\beta}\in \mathbb{R}^n} - \bm{\beta}^{\top} (K+\lambda I) \bm{\beta} + 2 \bm{\beta}\cdot \bm{Y},
\end{equation}
where $K=\bm{\Phi}_{\mathfrak{C}(\Lambda)}(\bx)^{\top} \bm{\Phi}_{\mathfrak{C}(\Lambda)}(\bx)$ is the kernel matrix and $\bm{Y}=(y^{(1)}, \cdots, y^{(n)})$ refers to the response vector. The time complexity for the kernel-based model has been well established in the statistical learning theory \cite{mehyar2018foundations}. In particular,
assume the time complexity for computing the kernel entry $K(\bx,\bx')=\braket{\bm{\Phi}_{\mathfrak{C}(\Lambda)}(\bx), \bm{\Phi}_{\mathfrak{C}(\Lambda)}(\bx')}$ is $\kappa$, then the complexity for solving this dual problem is $\mathcal{O}(\kappa n^2+n^3)$ and the time complexity for prediction is $\mathcal{O}(\kappa n)$. 

In our case, the time complexity for computing $K(\bx,\bx')$ is determined by the feature dimension of $\bm{\Phi}_{\mathfrak{C}(\Lambda)}(\bx)$, namely $\kappa =\mathfrak{C}(\Lambda)$. Combining these facts with the result of Theorem~2 such that the required total number of training examples $n$ of our model, the computation cost for solving the ridge regression model in Eq.~\eqref{append:eq:ridge} is 
\begin{equation}
    \mathcal{O}(|\mathfrak{C}(\Lambda)| \cdot n^2+n^3) = \mathcal{O} \left(|\mathfrak{C}(\Lambda)| \cdot  \left(\frac{1}{q(1+R)}\right)^{8\Lambda}\right).
\end{equation}

\noindent\underline{\textit{Inference (prediction) time}}.   Similarly, as explained above for the training time, the time complexity for prediction in the regression model is $\mathcal{O}(\kappa n)$, where $\kappa$ represents the time complexity for computing the kernel entry $K(\bx, \bx')$, which is determined by the cardinality $|\mathfrak{C}(\Lambda)|$. As a result, the prediction time is given by
\begin{equation}
    \mathcal{O}(|\mathfrak{C}(\Lambda)| \cdot n) = \mathcal{O}\left(|\mathfrak{C}(\Lambda)| \cdot \left(\frac{1}{q(1+R)}\right)^{4\Lambda}\right).
\end{equation}

\section{More implementation details of the proposed predictive surrogates}

In this section, we provide a detailed explanation of how to implement the proposed predictive surrogates, $h_{\mathsf{cs}}$ and $h_{\mathsf{qs}}$, in two practical tasks: pre-training variational quantum eigensolvers and Hamiltonian simulation for identifying Floquet symmetry-protected topological phases.

\subsection{Implementation details in pre-training VQEs with $h_{\textsf{cs}}$}\label{apd:hyper}
In this section, we first review the workflow of VQEs and highlight the extensive measurement costs when implementing the VQEs entirely on quantum processors. We then elucidate how to exploit the predictive surrogate $h_{\textsf{cs}}$ to pre-train the VQEs.

\subsubsection{Expensive measurement overheads of variational quantum Eigensovers}
 Variational quantum algorithms  (VQA)~\cite{bharti2021noisy,cerezo2021variational}  are a class of hybrid algorithms that combine variational ans\"atze with classical optimization. Thanks to their flexibility in adapting to hardware constraints, e.g.,  restrictive qubit connectivity, limited circuit depth, and the presence of noise, VQAs have emerged as leading candidates for pursuing the practical utility of near-term quantum computers. Two typical prototypes of VQAs are quantum neural networks~\cite{mitarai2018quantum,havlicek2018supervised}  and variational quantum Eigensolvers (VQEs)~\cite{peruzzo2014variational,tilly2022variational}. Over the past years, a plethora of theoretical and experimental studies have explored the potential of VQAs in solving real-world tasks, including image classification \cite{huang2021power,liu2021rigorous,du2023problem,wang2023transition}, solving combinatorial optimization problems \cite{farhi2014quantum,zhou2020quantum}, and ground state energy estimation~\cite{kandala2017hardware,kim2023evidence}.

Let us briefly recap the mechanism of VQEs, which constitutes one of the main focuses of this work. Without loss of generality, let $U(\bx)$ in Eq.~(\ref{eqn:append:circuit-ideal}) be the employed $N$-qubit ansatz. Given an initial state $\rho_0$ and the explored Hamiltonian $\mathsf{H}$, the expectation value is $f(\rho(\bx), \mathsf{H})=\Tr(\rho(\bx)\mathsf{H})$, as defined in Eq.~(\ref{eqn:append:expectation-idea}). The objective of VQE is to find the ground state energy of $\mathsf{H}$. To achieve this, a common strategy is to use a gradient descent optimizer to iteratively update the tunable parameters $\bx$ to minimize $f(\rho(\bx), \mathrm{H})$. Specifically, at the $t$-th iteration, the updating rule yields
\begin{equation}
	\bx^{(t+1)} = \bx^{(t)} - \eta \nabla_{\bx} f(\rho(\bx^{(t)}), \mathsf{H}),
\end{equation}
where $\eta$ refers to the learning rate. This process continues until the loss function converges to a predefined condition.

However, implementing VQEs on quantum processors incurs significant measurement overhead, primarily due to the need for estimating expectation values and gradients during the optimization process.

\smallskip
\noindent\underline{\textit{Expectation value estimation}}.
The core of any optimization strategy requires estimating the expectation value of the Hamiltonian $\mathsf{H}$ on the quantum state $\rho(\bx)$. Consider the representation of $\mathsf{H}$ in the Pauli basis as  $\mathsf{H}=\sum_{j=1}^q a_j P_j$. A common approach to estimating $\Tr(\rho(\bx)\mathrm{H})$ is to separately obtain the expectation value estimation of the individual Pauli terms, $\Tr(\rho(\bx)P_j)$. As a result, the measurement cost scales linearly with both the number of Pauli terms $q$ and the inverse of the desired precision $\varepsilon$. For the Hamiltonian studied in quantum chemistry \cite{cao2019quantum}, the quantity of $q$ typically scales as $\mathcal{O}(N^4)$, resulting in a measurement cost that follows the same scaling.

\smallskip
\noindent\underline{\textit{Gradient estimation}}. Beyond the measurement cost required for estimating $f(\rho(\bx), \mathsf{H})$,  the evaluation of gradients $\nabla f(\rho(\bx, \mathsf{H})$ likewise encounters a considerable measurement cost. Due to the no-cloning theorem and the non-commutability of trainable gates in $U(\bx)$, obtaining gradients often requires an intensive number of quantum measurements~\cite{abbas2023quantum}. For instance, when the parameter shift rule \cite{schuld2019evaluating} is adopted, the total number of measurements at each iteration linearly scales with the dimension of $\bx$, and the number of snapshots used to estimate $\Tr(\rho(\bx)\mathsf{H})$. In addition, the measurement cost also scales linearly with the total number of iteration steps. It has been shown that an under-parameterized quantum circuit requires exponentially many iterations to reach convergence \cite{you2022convergence,liu2023analytic}. 

More importantly, the measurement requirement of VQEs can increase substantially when the gradient becomes small, a phenomenon that often arises when the quantum circuit encounters barren plateaus  \cite{mcclean2018barren, larocca2025barren} or when the parameter approaches the convergence point during optimization~\cite{du2021learnability}. In the case of a barren plateau, the gradients are concentrated to zero, which results in an exponential increase in the measurement cost needed to obtain informative gradients in each iteration. 

Taken together, the optimization process of VQEs requires a significant number of quantum measurements, a challenge that could potentially be alleviated by constructing a predictive surrogate.

\subsubsection{Workflow of pre-training VQEs via the predictive surrogate}
The employment of the predictive surrogate $h_{\mathsf{cs}}$ to pre-training VQEs involves two main steps: (i) Pre-training VQE with $h_{\mathsf{cs}}$ on classical processors, and (ii) optional fine-tuning with $f(\tilderho(\bx),\mathsf{H})$ on the quantum processors.  In the following, we describe the detailed training process for each step, assuming the predictive surrogate $h_{\mathsf{cs}}$ is properly constructed, and discuss the reduction in measurements achieved by pre-training VQEs compared to the original VQEs.

\smallskip

\noindent \underline{\textit{Step I: Pre-training VQE with $h_{\mathsf{cs}}$.}} Recall that the aim of VQEs is to minimize $f(\tilderho(\bx)\mathsf{H})$ to find the optimal parameter $\bx^*$. Given access to the \textit{optimized} predictive surrogate $h_{\mathsf{cs}}$, the target parameter $\bx^*$ can be approximated by minimizing $h_{\mathsf{cs}}(\bx,\mathsf{H})$. The optimized parameter obtained from this step is denoted by $\hat{\bx}$, which satisfies \[\hat{\bx}=\arg\min_{\bx} h_{\mathsf{cs}}(\bx,\mathsf{H}).\]  Notably, this step is performed exclusively on the classical side. As such, the optimization process can be facilitated by automatic differentiation combined with an efficient classical optimizer, distinguishing it from conventional VQEs that rely on the parameter-shift rule to perform optimization directly on quantum processors.

\smallskip

\noindent \underline{\textit{Step II: Additional fine-tuning with $f(\tilderho(\bx),\mathsf{H})$.}} Given the access to the optimized parameter $\hat{\bx}$, the quantum circuit $U(\hat{\bx})$ can be further \textit{optionally} trained on the employed quantum processor by minimizing $f(\tilderho(\hat{\bx}),\mathsf{H})$. Specifically, when the estimated expectation value $f(\tilderho(\hat{\bx}),\mathsf{H})$ obtained from the optimized parameter $\hat{\bx}$ fails to achieve a satisfactory precision relative to the target value $f(\tilderho(\bx^*),\mathsf{H})$, this additional fine-tuning process is necessary to improve the solution further. During the fine-tuning stage, the parameters are updated using the parameter-shift rule.

\smallskip

\noindent \underline{\textit{Reduction in measurement achieved by pre-training VQE.}} The total number of measurements of the pre-training VQE consists of two components: the collection of the training dataset for constructing the surrogate $h_{\mathsf{cs}}$, and the additional fine-tuning on quantum processors. The offline optimization enabled by the predictive surrogate in the pre-training phase significantly reduces the demand for quantum resources. This is particularly evident when the quantum circuit involves a small number of rotation gates or when solving VQEs for a family of problem Hamiltonians. Specifically, quantum circuits with fewer parameters often exhibit an under-parameterized loss landscape, requiring an exponential number of iterations to achieve convergence. As a result, optimization on quantum processors incurs an exponential measurement cost. However, this issue can be mitigated by optimizing the surrogate classically. Additionally, the predictive surrogate, constructed on the classical shadow, can calculate the gradients of a family of Hamiltonians simultaneously, leading to a substantial reduction in measurement costs when using quantum circuits for VQEs involving an increased number of Hamiltonians. In conclusion, these measurement reductions provide critical advantages given the scarcity of available quantum processors in the current landscape.

\subsection{Implementation details in  identifying FSPT phases with $h_{\mathsf{qs}}$} \label{appendix:sec-F:subsec:FSPT}

Here we first recap the task of identifying Floquet symmetry-protected topological (FSPT) phases through Hamiltonian simulation. After that, we introduce how to employ the predictive surrogate to enhance the identification of FSPT phases.

\subsubsection{Quantum simulation of Floquet
symmetry-protected topological phases}\label{append:subsec:background_FSPT}
Understanding and categorizing the phases of quantum many-body systems is essential for gaining insights into the fundamental nature of matter at a microscopic level. A quantum many-body system is governed by a Hamiltonian, $\mathsf{H}(\bx)$, where $\bx$ represents external physical parameters that influence the system’s properties. To study the phases of these systems at different values of $\bx$, various Hamiltonian simulation algorithms have been developed. These algorithms simulate the time evolution of the quantum system to reach a specific state that contains information about its phases. For equilibrium quantum many-body systems, the phase is entirely determined by the ground state of a time-independent Hamiltonian $\mathsf{H}(\bx)$. Numerous quantum simulation algorithms \cite{latorre2004adiabatic,cao2025unveiling} and learning-based algorithms \cite{van2017learning,carrasquilla2017machine,ch2017machine,huang2022provably} have been proposed to characterize the phase of $\mathsf{H}(\bx)$. In contrast, characterizing the \textit{non-equilibrium phases} of a time-dependent Hamiltonian $\mathsf{H}(\bx,t)$ presents a more challenging task due to the absence of well-defined steady states and the difficulty of describing the system's properties in the absence of equilibrium conditions.

One particularly intriguing type of non-equilibrium phase is the Floquet symmetry-protected topological (FSPT) phase, a unique phase of matter that only exists out of equilibrium. We follow the conventions of Ref.~\cite{zhang_digital_2022}
to recap the approach of identifying the FSPT phase and the phase transition point with digital Hamiltonian simulation. Recall the 1D spin-1/2 chain governed by the time-periodic Hamiltonian defined in the main text, i.e.,
\begin{equation}
	\mathsf{H}_{\mathsf{tp}}(\mathrm{t}) = \begin{cases}
	\mathsf{H}_{\mathsf{tp},1} = (\frac{\pi}{2}-\delta)\sum_{i}X_i, & 2k_1\mathrm{T}_1 \leq \mathrm{t} < (2k_1+1)\mathrm{T}_1, \\
	\mathsf{H}_{\mathsf{tp},2} = -\sum_i J_iZ_{i-1}X_i Z_{i+1}, & (2k_1+1)\mathrm{T}_1 \leq t < (2k_1+2)\mathrm{T}_1,
			\end{cases} \nonumber
\end{equation}
where $P_i\in \{X, Y, Z\}$ refers to the Pauli operator acting on the $i$-th qubit, $2\mathrm{T}_1$ refer to a driven period, $k_1\in \mathbb{N}$ is an arbitrary non-negative integer, $\delta$ denotes the driven perturbation.

For clarity, we denote $\bm{J}=(J_2,\cdots,J_{N-1})$ as the vector of coupling parameters, and $\bx=(\delta,\bm{J})$ as the parameter vector of  $\mathsf{H}_{\mathsf{tp}}(\mathrm{t})$. Through varying the drive perturbation $\delta$, two dynamical phases emerge: the FSPT phase and thermal phase. The FSPT phase is fully characterized by the Floquet unitary $U_{F}=U_2(\bm{J})U_1(\delta)$, where $U_1(\delta)=\exp(-\imath \mathsf{H}_{\textsf{tp,1}})$ and $U_2(\bm{J})=\exp(-\imath \mathsf{H}_{\textsf{tp,2}})$ are the unitary operators generated by the Hamiltonians $\mathsf{H}_{\textsf{tp,1}}$ and $\mathsf{H}_{\textsf{tp,2}}$, respectively. Here, the Floquet unitary $U_F$ describes the evolution of $\mathsf{H}_{\mathsf{tp}}(\mathrm{t})$ over one full period $2\mathrm{T}_1$.   One could use the idea of digital quantum simulation to evolve $\mathsf{H}_{\mathsf{tp}}(\mathrm{t})$ using the quantum circuit $U(\bx)$. More precisely, for the evolution time $t=k\mathrm{T}_1$ of $k/2$ periods, the quantum circuit has the form of \[U(\bx)=(U_2(\bm{J})U_1(\delta))^{t/2\mathrm{T}_1},\] where the detailed implementations of $U_1(\delta)$ and $U_2(\bm{J})$ are deferred to SM~\ref{apd:FSPT_circuit}. As such, the evolution time $t$ determines the circuit depth and the number of correlated parameters $\bx$. In the following, we alternatively employ $\bx$ and $(\delta,\bm{J},t)$ to refer to the parameters of the quantum circuit.

The FSPT phase could be theoretically signified by the dynamical properties of the evolved state under the quantum circuit $U(\delta,\bm{J},t)$, i.e., the local spin magnetization 
\[\braket{Z_i(\delta,\bm{J},t)}=\braket{\psi_0|U(\delta,\bm{J},t)^{\dagger}Z_iU(\delta,\bm{J},t)|\psi_0}\] 
for all qubits $i\in [N]$. Here, the initial state $\ket{\psi_0}$ could be an arbitrary computational basis state. In general, to characterize the FSPT phases at a specific driven perturbation $\delta$, it is necessary to estimate the averaged local spin magnetization $\braket{\overline{Z_i(\delta,t)}}$ over a set of varying coupling parameters $\{\bm{J}^{(s)}\}_{s=1}^S$ at a range of evolution time $t$, namely
\begin{equation}\label{append:eqn:avg-local-spin}
    \braket{\overline{Z_i(\delta,t)}} = \frac{1}{S}\sum_{s=1}^S  \braket{Z_i(\delta,\bm{J}^{(s)},t)},
\end{equation}
where the evolution time $t$ varies in $\{k\mathrm{T}_1\}$ with $k \in [n_k]$. Here, $n_k/2$ represents the total evolution period. In the FSPT phase, the averaged local spin magnetization $\braket{\overline{Z_i(\delta,t)}}$  is expected to exhibit persistent period-doubled oscillations at the boundary spins (i.e., $i=1,N$) and rapidly decay to zero for the bulk spins  (i.e., $i=2,\cdots,N-1$). In contrast, for the thermal phase, the averaged local spin magnetization  $\braket{\overline{Z_i(\delta,t)}}$ is expected to decay quickly to zero for all spins (i.e., $i=1,\cdots,N$). 

Another challenging task in this context is to identify the critical point where the phase transition occurs.  According to the setting in Ref.~\cite{zhang_digital_2022}, the critical point amounts to the largest variance of the subharmonic spectral peak height, namely the amplitude $A_{\omega}(\delta,\bm{J})$ in the Fourier spectrum of the local magnetization $\braket{Z_1(\delta,\bm{J},t)}$ measured at $\omega = \omega_0/2$ for the boundary spin. To identify the phase transition point, an intensive number of quantum measurements are typically required to estimate $\braket{Z_1(\delta,\bm{J},t)}$ across the parameter space $(\delta,\bm{J})$ with the varied $t$. Specifically, the total number of measurements required to identify the phase transition point up to a precision $\epsilon$ is considerable, which scales linearly with the total evolution time $t$, the dimension of physical parameters $(\delta,\bm{J})$, the inverse of the precision $1/\epsilon$, and the number of shots used to estimate the expectation values for each circuit configuration.

\subsubsection{Workflow of enhancing FSPT phase identification via the predictive surrogate}\label{append:subsec:workflow_FSPT}
Before elaborating on the workflow, we would like first to outline the form of the predictive surrogate $h_{\mathsf{qs}}$ needed for FSPT phase identification. As discussed in SM~\ref{append:subsec:background_FSPT}, characterizing the FSPT phase requires estimating a set of expectation values $\braket{Z_i(\delta,\bm{J},t)}$ for the varied qubit indices with $i\in [N]$ at different evolution times with $t=k\mathrm{T}_1$ and $k\in [n_k]$. This amounts to implementing quantum circuits with varied observables $\{Z_i\}$ and ans\"atze $U(\delta,\bm{J},t)$, each with a circuit depth determined by $t$. Accordingly, $N\cdot n_k$ predictive surrogates $\{h_{\mathsf{qs}}^{(i,t)}\}$ should be employed to separately emulate mean-value behaviors of $\braket{Z_i(\delta,\bm{J},t)}$, whose construction details are deferred to SM~\ref{append:subsec:exp_implement_detail_mb}.

Given access to the optimized predictive surrogates $\{h_{\mathsf{qs}}^{(i,t)}\}$, the workflow of employing them to enhance FSPT phase identification involves two sequential steps: (i) FSPT phase and critical region identification with predictive surrogate $h_{\mathsf{qs}}$; and (ii) validation on quantum processors, as separately detailed below.

\smallskip

\noindent \underline{\textit{Step I: FSPT phase and critical region identification with predictive surrogate $h_{\mathsf{qs}}$.}} 
To characterize the FSPT phase and locate a small region around the critical point $\delta^*$, we first uniformly sample $n_d$ candidate values $\{\delta^{(l)}\}_{l=1}^{n_d}$ over the grid points within the parameter range $[\delta_{\min}, \delta_{\max}]$ with the interval  $\Delta \delta$. Without loss of generality, we assume that $\delta^{(1)}\le \delta^{(2)}\le \cdots \le \delta^{(n_d)}$. For each $\delta^{(l)}$, we generate $S$ disorder parameters $\{\bm{J}^{(l, s)}\}_{s=1}^S$, resulting in a total of $n_d\cdot S$ parameter combinations, denoted as $\mathcal{S}=\{(\delta^{(l)},\bm{J}^{(l,s)})\}_{l, s=1}^{n_d,S}$.  To characterize the FSPT phase at a specific parameter $(\delta^{(l)},\bm{J}^{(l,s)})$, the $N\cdot n_k$ predictive surrogates $h_{\mathsf{qs}}^{(i,t)}$ are employed to independently predict $\braket{Z_i(\delta^{(l)},\bm{J}^{(l,s)},t)}$ for the specified $i$ and the time $t$ over $\mathcal{S}$. As such, following the Eq.~(\ref{append:eqn:avg-local-spin}), the averaged local magnetization $\braket{\overline{Z_i(\delta^{(l)},t)}}$ for a given $\delta^{(l)}$ over the disorder parameters $\{\bm{J}^{(l,s)}\}_{s=1}^S$ is estimated by \[\overline{h_{\mathsf{qs}}^{(i,t)}(\delta^{(l)})}=\frac{1}{S}\sum_{s=1}^S h_{\mathsf{qs}}^{(i,t)}\Big(\delta^{(l)},\bm{J}^{(l, s)}\Big).\]

Recall that the phase transition is indicated by the largest variance of the subharmonic spectral peak height, namely the amplitude $A_{\omega}(\delta^{(l)},\bm{J}^{(l,s)})$ in the Fourier spectrum of the local magnetization $\braket{Z_1(\delta^{(l)},\bm{J}^{(l,s)},t)}$ measured at $\omega = \omega_0/2$ for the boundary spin. We then estimate $A_{\omega}(\delta^{(l)},\bm{J}^{(l,s)})$ by applying the Fourier transformation to $h_{\mathsf{qs}}^{(1,t)}(\delta^{(l)},\bm{J}^{(l,s)})$ as we do for $\braket{Z_1(\delta^{(l)},\bm{J}^{(l,s)},t)}$, and denote the estimation as $\hat{A}_{\omega}(\delta^{(l)},\bm{J}^{(l,s)})$. Hence, the estimated phase transition point is given by
\[\delta^{(l^*)}=\arg\max_{\delta^{(l)}}\Var\left(\hat{A}_{\omega}(\delta^{(l)})\right),~\mbox{with} ~ \Var\left(\hat{A}_{\omega}(\delta^{(l)})\right)= \frac{1}{S}\sum_{s=1}^S \left(\hat{A}_{\omega}\left(\delta^{(l)},\bm{J}^{(l,s)}\right)- \frac{1}{S}\sum_{s=1}^S\hat{A}_{\omega}\left(\delta^{(l)},\bm{J}^{(l,s)}\right)\right)^2.\]
Furthermore, the region $[\delta^{(l^*-1)}, \delta^{(l^*+1)}]$, using the neighboring points $\delta^{(l^*-1)}$ and $\delta^{(l^*+1)}$ as the endpoints, is identified as the critical region where the phase transition occurs. 

\smallskip
\noindent \underline{\textit{Step II: Validation on quantum processors.}} This step aims to verify the effectiveness of the identified critical region $[\delta^{(l^*-1)}, \delta^{(l^*+1)}]$ on the employed quantum processor. To do this, we uniformly sample $n_v$ different $\delta^{(l)}$ values over the grid points within $[\delta_{l^*-1}, \delta_{l^*+1}]$. We then follow the same procedure as we do in Step I, while the local magnetization is estimated by running the quantum processors rather than the predictive surrogate $h_{\mathsf{qs}}$. The identified region of the critical point is confirmed if the identified phase transition point $\delta^{(l^*)}$ maximizes the experimentally observed amplitude variance $\Var(A_{\omega}(\delta))$ on quantum processors.

\smallskip
\noindent\underline{\textit{Reduction in measurement achieved by surrogate-enhanced identification of critical points.}} The measurement cost of this method arises from the construction of predictive surrogates and their validation on quantum processors. Compared to direct quantum simulation on quantum processors, the surrogate-enhanced identification of critical points offers significant measurement reductions, particularly when the physical parameters span a large region and a high precision in identifying the critical region is required.
Specifically, the measurement cost of quantum simulation on quantum processors increases with the number of grid points within the parameter region. In contrast, the measurement cost for constructing surrogates is independent of this factor. Therefore, increasing the parameter range or raising the required precision necessitates more grid points for identifying the critical point. As a result, this leads to a larger discrepancy in measurement costs between the surrogate-enhanced identification method and the conventional Hamiltonian simulation approach.

\section{Additional experimental results} \label{apd:exp_info}

This section provides further details and supporting results for our experimental study. We first describe the device used in our experiments in SM~\ref{apd:device}. Next, we present additional experimental results of pre-training VQEs using the predictive surrogate $h_{\mathrm{cs}}$ in SM~\ref{apd:vqe_add_results}. We then discuss our methodology for FSPT phase transitions, supported by numerical simulations and analysis of measurement overhead in SM~\ref{apd:FSPT_add_results}. Finally, in SM~\ref{apd:vali_noise}, we provide a validation of noise characterization in quantum processor emulation.

\subsection{Device information}\label{apd:device}
\begin{figure*}[!h]
\centering\includegraphics[width=0.65\textwidth]{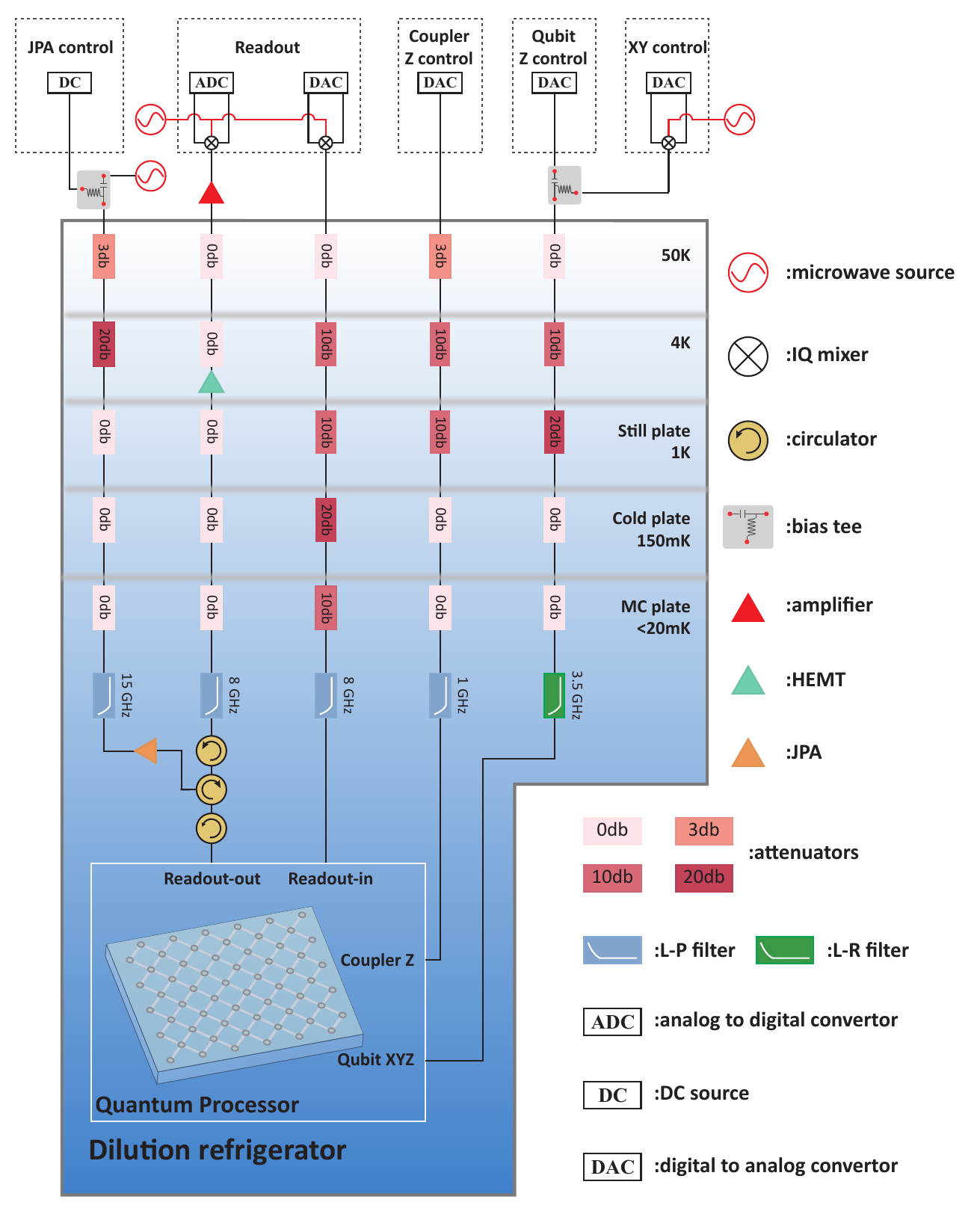}
\caption{\small{\textbf{ Schematic illustration of the superconducting quantum processor.}} Each qubit has independent XY and Z control lines. Room-temperature digital-to-analog converters (DACs) generate qubit XY control pulses and Z control pulses, which are combined via bias tees before entering the quantum processor. Microwave sources and mixers produce readout probing pulses and JPA pump signals.  Attenuators and filters are installed at multiple plates to suppress noise. Readout signals undergo three-stage amplification: first by a Josephson parametric amplifier (JPA) under 20 mK, then a high-electron mobility transistor (HEMT) at 4 K, and finally room-temperature amplifiers. Digitization and demodulation are performed by analog-to-digital converters (ADC) modules. DC sources provide static flux bias for JPA operation. Flux bias lines enable frequency tuning through Z-control pulses combined with microwave signals. }
\label{sfig:device_scheme}
\end{figure*}

\begin{figure*}[ht]
	\centering\includegraphics[width=1\textwidth]{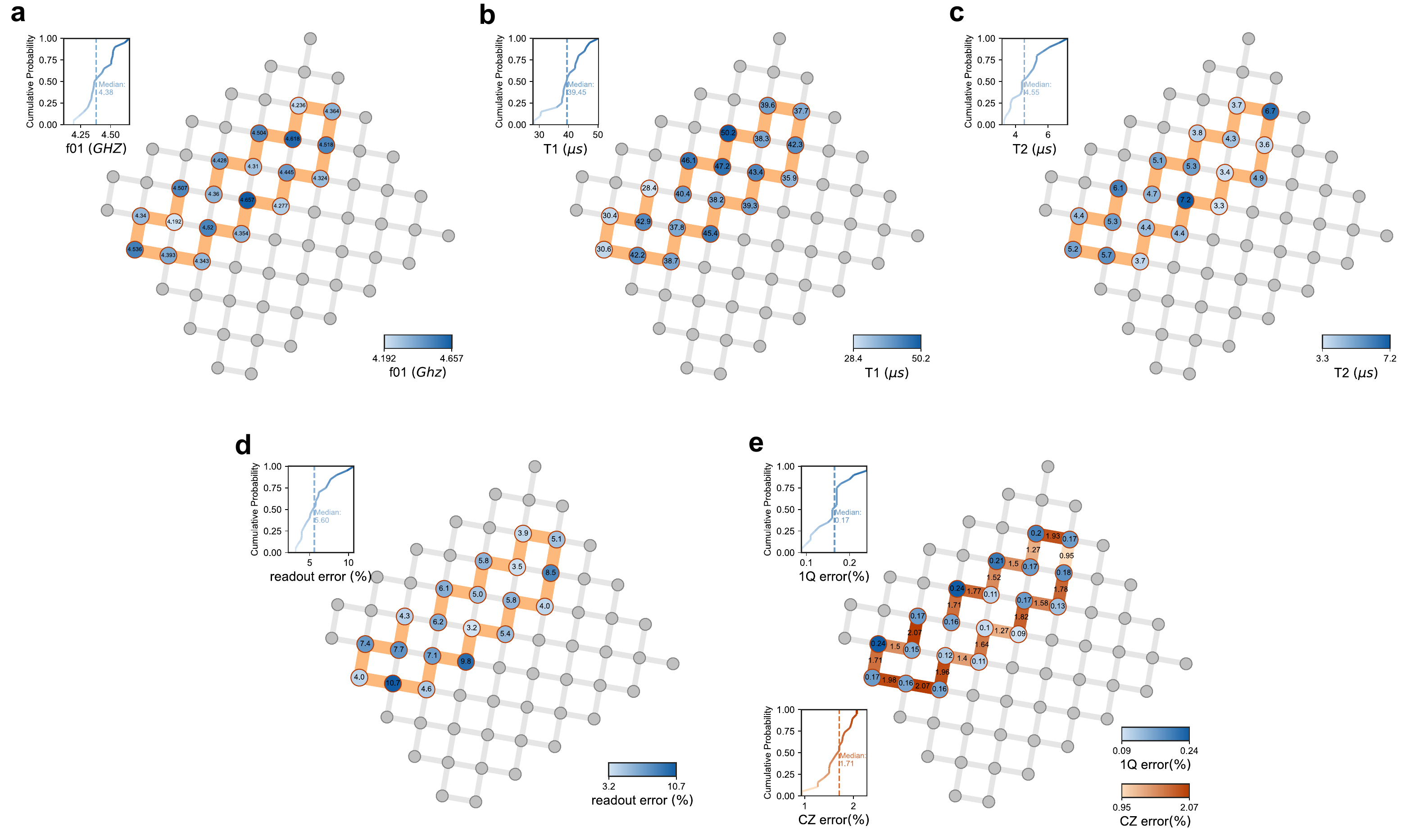}
    \caption{\small{\textbf{Device parameters of the employed superconducting quantum processor.} \textbf{a.} Distribution of Idle frequency for the qubits, with a median value of $4.38$ GHz. \textbf{b.} Distribution of $T_1$ relaxation times, with a median of $39.45 \mu s$. \textbf{c.} Distribution of $T_2$ coherence times, with a median of $4.55 \mu s$. \textbf{d.} Distribution of readout errors, with a median of $5.60\%$. \textbf{e.} Distribution of gate error rates for single-qubit gates, with a median of $0.17\%$, and for CZ gates, with a median of $1.71\%$.}}
	\label{sfig:device}
\end{figure*}

The experiments are conducted on a programmable superconducting quantum processor housed within a dilution refrigerator operated at temperatures below 20 mK. The quantum processor implements a $6\times11$ two-dimensional array of frequency-tunable transmon qubits configured in a cross-shaped architecture. Nearest-neighbor coupling is achieved via dedicated tunable couplers. For this study, a contiguous 20-qubit chain is constructed within the array, with various layouts selected according to specific experimental requirements. The device schematic is illustrated in FIG.~\ref{sfig:device_scheme}. In particular, each qubit is equipped with independent XY and Z control lines. Room-temperature digital-to-analog converters (DACs) generate the qubit XY control pulses and Z control pulses, which are combined via bias tees before entering the quantum processor. Microwave sources and mixers are responsible for producing the readout probing pulses and JPA pump signals. To ensure optimal performance and minimize noise, attenuators and filters are strategically installed at multiple temperature plates throughout the system. The readout signals undergo a three-stage amplification process: initial amplification by a Josephson parametric amplifier (JPA) operating at the base temperature of 20 mK, followed by amplification from a high-electron mobility transistor (HEMT) at the 4 K stage, and finally further amplification from room-temperature amplifiers. Digitization and demodulation of the signals are performed by analog-to-digital converter (ADC) modules. DC sources provide the necessary static flux bias for JPA operation, while flux bias lines enable precise frequency tuning through the combination of Z-control pulses with microwave signals.

After calibration and optimization, the quantum processor is characterized by the metrics depicted in FIG.~\ref{sfig:device}. The median idle frequency of the qubits is measured at 4.38 GHz. The $T_1$ relaxation times exhibit a median value of 39.45 $\mu s$, while the $T_2$ coherence times show a median of 4.55 $\mu s$. The readout error across the qubits has a median of 5.60$\%$. For gate operations, the median error rate for single-qubit gates is 0.17$\%$, and for CZ gates, it is 1.71$\%$. These parameters are determined through rigorous characterization processes to ensure the reliability and accuracy of the experimental results.

\subsection{More experimental results of pre-training VQEs by $h_{\textsf{cs}}$}\label{apd:vqe_add_results}

This subsection provides extended experimental results for VQE pre-training using \(h_{\textsf{cs}}\). We first detail the circuit implementation for TFIM ground state estimation in SM~\ref{append:sec-exp:subsec:circuit-imp}. Next, provide additional implementation details for the experiment in SM~\ref{append:sec:Exp:sub:imp-detail}. Subsequently, we evaluate the prediction performance of \(h_{\textsf{cs}}\) across multiple metrics in SM~\ref{apd:eva_hcs}. We then analyze the discrepancy between experimental and ideal processor outputs in SM~\ref{apd:experimental-ideal discrepancy}, followed by a numerical study of training data quality in SM~\ref{apd: data_quality}. Finally, we quantify the measurement reduction achieved by our pre-training protocol in SM~\ref{vqe_measurement_reduction}.

\begin{figure*}[h]
	\centering\includegraphics[width=0.65\textwidth]{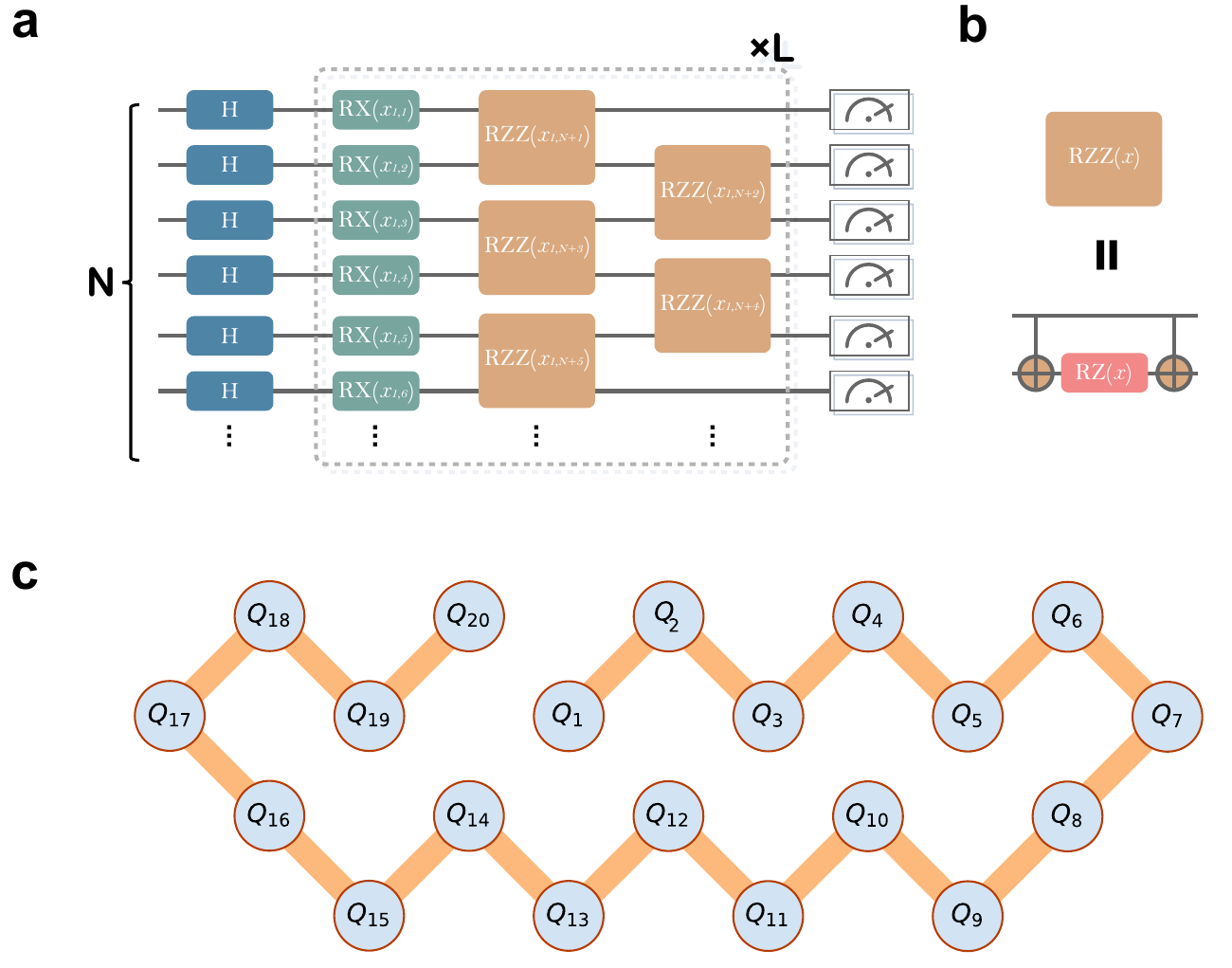}
        \caption{\small{\textbf{Implementations of quantum circuits in pre-training VQEs.} \textbf{a.} The $N$-qubit ansatz $U(\bx)$ used in the task of pre-training VQEs has a layer-wise structure, where the total number of layers is denoted by $L$. Each layer consists of $N$ single-qubit rotation gates $\RX$ and $(N-1)$ two-qubit rotation gates $\RZZ$. All parameters in $U(\bx)$ are independent.  \textbf{b.} Implementation of the two-qubit rotation gate $\RZZ(\bx)$ on the superconducting processor, where it is composed of two CNOT gates and a single-qubit rotation gate $\RZ(\bx)$.  \textbf{c.} The structure of the employed $20$ superconducting qubits in the experiments, where these qubits are ordered from $Q_1$ to $Q_N$ with $Q_i$ referring to the $i$-th qubit.}} 	\label{sfig:cir_layout1}
\end{figure*}

\subsubsection{Circuit implementation details}\label{append:sec-exp:subsec:circuit-imp}

In our pre-training VQE experiments, we apply $h_{\textsf{cs}}$ to estimate the ground state energy of the 1D TFIM Hamiltonian 
\begin{equation}
\mathsf{H}_{\mathsf{TFIM}}=-J\sum_iZ_iZ_{i+1}-h\sum_iX_i.
\end{equation}
Recall that the explicit form of the employed ansatz $U(\bx)$, as introduced in the main text, is  
\begin{equation}
U(\bx)=\prod_{l=1}^L \left(\prod_{i=1}^N  \RX_i(\bx_{l,i}) \prod_{i=1}^{N-1}\RZZ_{i,i+1}(\bx_{l,N+i})\right).
\end{equation}
The circuit diagram of  $U(\bx)$ is visualized in FIG.~\ref{sfig:cir_layout1}\text{a}. Specifically, the structure of $U(\bx)$  originates from the $L$-step Trotter decomposition of the time evolution operator $\exp({-\imath \mathsf{H}_{\mathsf{TFIM}}\tau})$, where the time step $\Delta t = \tau/L$. In the conventional Hamiltonian simulation, each layer would implement rotational quantum gates with fixed angles $2J\Delta t$ for ZZ interactions and $2h\Delta t$ for X fields. In contrast, $U(\bx)$ replaces these fixed couplings with tunable parameters per layer, i.e., $\bx_{l,i}$ replaces $2h\Delta t$ in the single-qubit rotational gate $\RX_i$, while $\bx_{l,N+i}$ replaces $2J\Delta t$ in the two-qubit rotational gate  $\RZZ_{i,i+1}$.

In our experiments,  $\RX_i(\bx_{l,i}) = \exp({-\imath   \bx_{l,i} X_i/2})$ is directly implemented on our quantum processor, while $\RZZ_{i,i+1}(\bx_{l,N+i}) = \exp({-\imath	 \bx_{l,N+i} Z_i Z_{i+1}/2})$ is realized through the standard decomposition as shown in FIG.~\ref{sfig:cir_layout1}\text{b}.  

\subsubsection{Experimental implementation details}\label{append:sec:Exp:sub:imp-detail}
This section presents the details of the construction of the predictive surrogate $h_{\mathsf{cs}}$, and the hyperparameter settings used for pre-training VQE in our experiments.

\smallskip 
\noindent \underline{\textit{Construction of the predictive surrogate $h_{\mathsf{cs}}$.}} The first step of constructing the predictive surrogate is to collect the training dataset $\mathcal{T}_{\mathsf{cs}}$. To achieve this, we first randomly generate classical input $\bx\in [-\pi,\pi]^d$, and feed it to the employed $N$-qubit quantum
processor to prepare the state $U(\bx)\ket{+}^N$. Denote the corresponding noisy state as $\tilderho(\bx)$. Then, we apply Pauli-based classical shadow with $T$ snapshots to constitute a single training example $(\bx,\tilderho_T(\bx))$ with  $\tilderho_T(\bx)$ being the classical shadow representation. This procedure is repeated $n$ times to construct the training dataset $\mathcal{T}_{\mathsf{cs}}$.

Once the training dataset $\mathcal{T}_{\mathsf{cs}}$ is prepared, we move on to construct $h_{\mathsf{cs}}(\bx,\mathsf{H})$. Building on the correspondence between regression techniques and kernel methods, we employ ridge kernel regression to optimize the predictive surrogate, thereby enhancing the efficiency of the optimization process. Mathematically, the predictive surrogate takes the form as $h_{\mathsf{cs}}(\bx,\mathsf{H};\bm{\mathrm{w}})=\braket{\bm{\Phi}_{\mathfrak{C}(\Lambda)}(\bx),\bm{\mathrm{w}}}$. Its optimization amounts to solving the ridge regression problem $\hat{\bm{\mathrm{w}}}=\arg\min_{\bm{\mathrm{w}}}\sum_{i=1}^n \left(h_{\mathsf{cs}}(\bx,\mathsf{H};\bm{\mathrm{w}})-g(\bxi) \right)^2 + \lambda \|\bm{\mathrm{w}}\|_2$, where $\lambda$ is the regularization parameter and $g(\bx)=\Tr(\tilderho_T(\bx)\mathsf{H})$ is the shadow estimation of $\Tr(\tilderho(\bx)\mathsf{H})$.

\smallskip 

\noindent \underline{\textit{Hyperparameter settings.}} The training dataset $\mathcal{T}_{\mathsf{cs}}$ is set with sizes $n \in \{400, 800, 1200, 1600, 2000\}$. For constructing the predictive surrogate $h_{\mathsf{cs}}$, we set the regularization parameter to $\lambda = 1$ and the frequency threshold to $\Lambda \in \{1, 2\}$. The initial parameters for pre-training the VQEs are uniformly sampled from the interval $[-\pi, \pi]^d$. We use the ADAM optimizer with a learning rate of $\eta = 0.1$ for pre-training the VQEs and additional fine-tuning on quantum processors~\cite{Kingma2014Adam}.  
To evaluate the convergence of the training process, we use the exponential moving average (EMA) of weights \cite{morales2024exponential}, with early stopping based on performance on the training data. The convergence threshold is set to $\epsilon = 10^{-4}$, and the decay factor for EMA is set to $\gamma = 0.9$.

Unless otherwise specified, all optimization problems in this study follow the same hyperparameter settings aforementioned. In addition, all numerical simulations were conducted using JAX 0.4.38 with a CUDA 12.1 backend.

\subsubsection{Evaluation of the prediction performance of $h_{\textsf{cs}}$} \label{apd:eva_hcs}
\begin{figure*}[]
	\centering
	\includegraphics[width=0.6\textwidth]{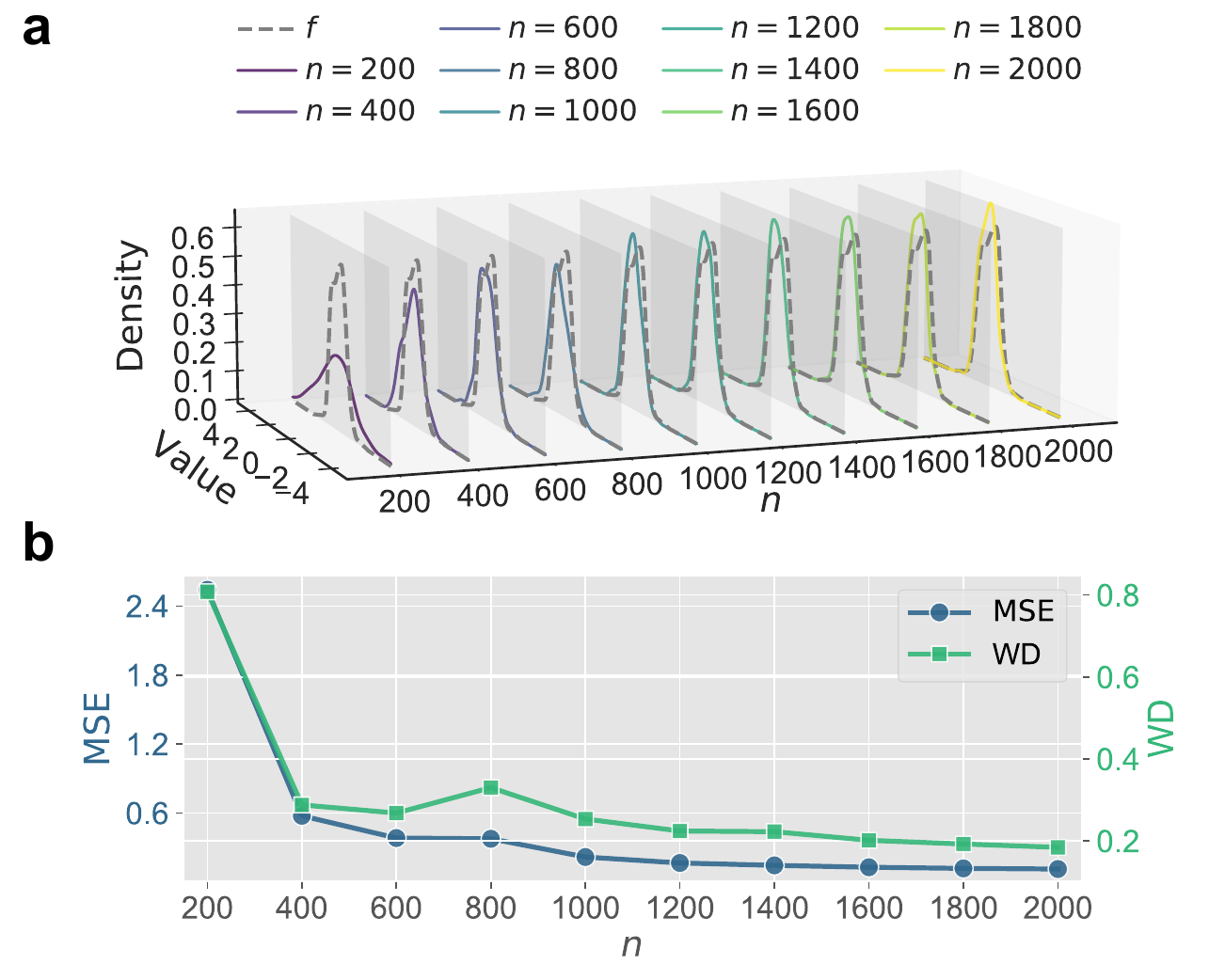}
    \caption{\small{ \textbf{Experimental results for the prediction performance of $h_{\mathsf{cs}}$.} 
	\textbf{a.} Kernel density estimates (KDEs) for the quantum processor outputs $f(\tilde{\rho}(\bx), O)$ (dashed curves) and the classical surrogate predictions $h_{\textsf{cs}}(\bx, O)$ (solid curves). The label `$n=a$' indicates that the number of training data is $a$. The bandwidth $\eta$ in the KDE is set via Scott's rule, i.e., $\eta = m^{-1/5}$ with $m = 200$ test points. 
	\textbf{b.} Mean squared error and Wasserstein distance between the surrogate predictions and the quantum processor outputs versus training set size $n$. The experimental results for these two metrics are plotted with the left $y$-axis labeled `MSE' (blue) and the right $y$-axis labeled `WD' (green), respectively. }}
	\label{sfig:eva_r_cs}
\end{figure*}

We provide a systematic analysis of the prediction performance of $h_{\mathsf{cs}}$ using identical settings to FIG.~2a in the main text, where the parameters for $\mathsf{H}_{\mathsf{TFIM}}$ are   $N=6$, $J=-0.1$, $h=-0.5$, $T=10$ snapshots per training sample, and frequency truncation $\Lambda=2$.

\smallskip
\noindent\textit{\underline{Brief overview of the employed metrics.}} We employ six metrics for the quantitative assessment. Let $\by = \{\yi\}_{i=1}^m$ denote quantum processor outputs $f(\tilde{\rho}(\bxi), O)$ and $\hat{\by} = \{\hat{y}^{(i)}\}_{i=1}^m$ denote the surrogate predictions (i.e., the output of $h_{\textsf{cs}}(\bx_i, O)$) over $m$ test examples. The definitions of the employed metrics are summarized below: 
\begin{itemize}
    \item Mean absolute error: $\text{MAE}=\frac{1}{m} \sum_{i=1}^{m} |\yi - \hat{y}^{(i)}|$;
    \item Coefficient of determination: $R^2 =1 - \frac{ \sum_{i=1}^{m} (\yi - \hat{y}^{(i)})^2 }{ \sum_{i=1}^{m} (\yi - \bar{y})^2 }$;
    \item Pearson's correlation coefficient: $ R = \frac{ \sum_{i=1}^{m} (\yi - \bar{y}) (\hat{y}^{(i)} -\text{mean}({\hat{y}})) }
         { \sqrt{ \sum_{i=1}^{m} (\yi - \bar{y})^2 } \cdot \sqrt{ \sum_{i=1}^{m} (\hat{y}^{(i)} - \text{mean}({\hat{y}}))^2 } }$, 
         where $\bar{y} = \frac{1}{m}\sum_{i=1}^m \yi$ and $\text{mean}({\hat{y}}) = \frac{1}{m}\sum_{i=1}^m \hat{y}^{(i)}$ are sample means of $\by$ and $\hat{\by}$, respectively;
    \item Mean squared error: $\text{MSE} =\frac{1}{m} \sum_{i=1}^{m} (\yi - \hat{y}^{(i)})^2$; 
	\item Kernel Density Estimation (KDE): $\text{Pr}(s;\by) = \frac{1}{m\eta}\sum_{i=1}^m K\left(\frac{s - \yi}{\eta}\right)$, 
		$\text{Pr}_{\hat{y}}(t) = \frac{1}{m\eta}\sum_{i=1}^m K\left(\frac{t - \hat{y}^{(i)}}{\eta}\right)$, 
		where $K(u) = \frac{1}{\sqrt{2\pi}}e^{-u^2/2}$ is the Gaussian kernel and $\eta = m^{-1/5}$ the bandwidth;
    \item Wasserstein distance: $\text{WD} = \inf_{\gamma \in \Gamma(\text{Pr}(\by), \text{Pr}(\hat{\by})} \int_{\mathbb{R}\times\mathbb{R}} |u - v|  d\gamma(u,v)$, 
         where $\Gamma$ denotes the set of all joint distributions with marginals $\text{Pr}(\by)$ and $\text{Pr}(\hat{\by})$, and $u, v \in \mathbb{R}$.
\end{itemize}

These fix metrics provide complementary insights: MAE quantifies mean absolute prediction error with robustness to outliers;  the metric $R$ measures the linear correlation strength between predictions and true values; $R^2$ evaluates the proportion of variance explained by the model; MSE emphasizes large errors through quadratic weighting; KDE enables non-parametric comparison of probability distributions; and WD assesses distributional similarity via optimal transport theory. Optimal performance yields MAE $\to 0$, MSE $\to 0$, WD $\to 0$, $R \to 1$, $R^2 \to 1$, and visual alignment in KDE plots.

\smallskip
\noindent\textit{\underline{Experimental results}}. Here we present extended results across $n \in \{200, 400, \dots, 2000\}$, assessed on $m=200$ test samples under the same hyperparameter settings as used for training.

The convergence of the distributions, as assessed by KDE, is shown in FIG.~\ref{sfig:eva_r_cs}\text{a}. As $n$ increases from 200 to 2000, the surrogate prediction's distribution (solid curves) converges toward the quantum processor output's distribution (dashed curves). This monotonic improvement confirms the efficiency of the predictive surrogate $h_{\mathsf{cs}}$ in learning both point estimates and distributional properties. 

In addition, FIG.~\ref{sfig:eva_r_cs}\text{b} quantifies the reduction in error as n increases. Specifically, MSE decreases from $2.49$ at $n=200$ to $0.11$ at $n=2000$, while WD drops from $0.81$ to $0.18$ over the same range. The MSE and WD trends directly align with the MAE, $R^2$, and $R$ convergence observed in FIG.~2\text{a}, demonstrating consistent performance improvement across all metrics.

\subsubsection{Analysis of experimental-ideal processor discrepancy}\label{apd:experimental-ideal discrepancy}

\begin{figure*}[t]
	\centering
	\includegraphics[width=1\textwidth]{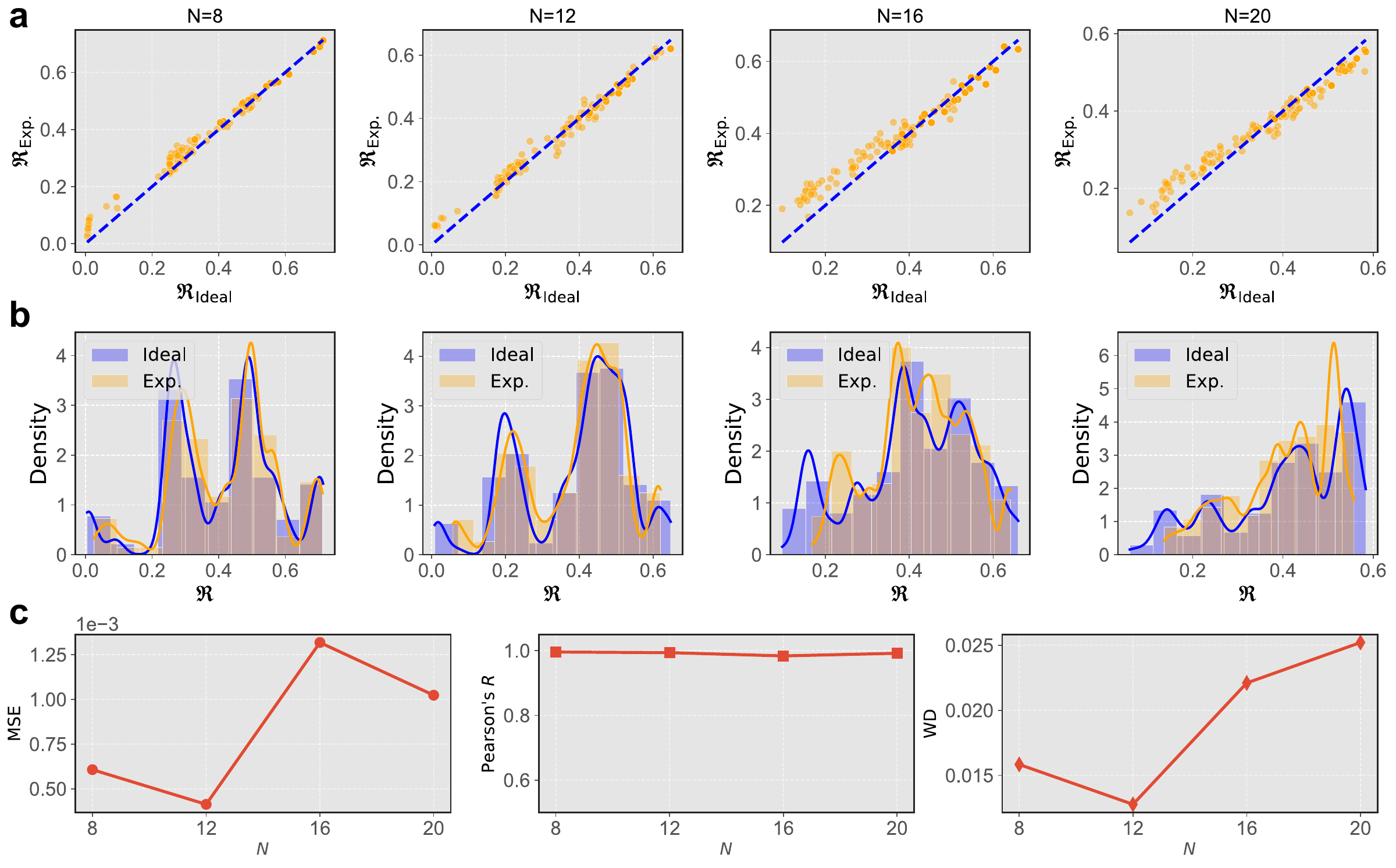}
    \caption{\textbf{Experimental results for the discrepancy between experimental quantum processor outputs and ideal theoretical predictions.} 
	\textbf{a.} The normalized deviation values \(\mathfrak{R}_{\text{Exp.}}\) evaluated on quantum processors (orange points) against the normalized deviation values \(\mathfrak{R}_{\text{Ideal}}\) obtained from noiseless numerical simulation (blue dashed line: \(y = x\)) across four system sizes \(N = \{8,12,16,20\}\). The plotted data points are identical to those used in FIG.~3, where \(\mathfrak{R}_{\text{Exp.}}\) corresponds to unhatched bars labeled 'Exp.' and \(\mathfrak{R}_{\text{Ideal}}\) corresponds to hatched bars labeled 'Ideal'. 
	\textbf{b.} Probability density distribution of the normalized deviation values. The labels `Ideal' and `Exp.' indicate that the results are obtained from noiseless numerical simulation and quantum processor outputs, respectively. The histograms with equal-width bins refer to the statistical results, and the smooth curves refer to Gaussian kernel density estimations with a bandwidth of $\eta \approx 0.35$.
	\textbf{c.} Quantitative analysis  of differences between the quantum processor outputs and the noiseless numerical simulations for varying system sizes $N=\{8,12,16,20\}$. The left, middle, and right panels present the experimental results under the metrics of the mean squared error (MSE), Pearson's correlation coefficient $R$, and Wasserstein distance (WD), respectively.}
	\label{sfig:gap}
\end{figure*}

We now provide an extended analysis of the data presented in FIG.~3, focusing specifically on pointwise discrepancies between experimental quantum processor outputs and ideal theoretical outputs. To this end, we systematically compare the paired data from FIG.~3, i.e., the experimental normalized deviation \(\mathfrak{R}_{\text{Exp.}}\) (indicated by the unhatched bars with the label `Exp.'), which is measured directly on our experimental quantum processor, and the ideal normalized deviation \(\mathfrak{R}_{\text{Ideal}}\) (indicated by the hatched bars with the label `Ideal'), which is computed through noiseless numerical simulation.

The dataset used here comprises all data points from FIG.~3 across four system sizes $N \in \{8,12,16,20\}$. For each system size, we include six training configurations: the initialized VQE (without pre-training), and pre-trained VQEs with five distinct sample sizes $n \in \{2^4, 2^6, 2^8, 2^{10}, 2^{11}\}$. Each configuration contains $20$ random instances, which correspond to different initialized parameters before pre-training. In this way, there are in total $6 \times 20 = 120$ paired results \((\mathfrak{R}_{\text{Exp.}}^{(i)}, \mathfrak{R}_{\text{Ideal}}^{(i)})\) per system size.

The comparison of all experimental outputs against their ideal counterparts is illustrated in FIG.~\ref{sfig:gap}\text{a}. The tight clustering of points along the $y = x$ reference line demonstrates systematic agreement across all system sizes. While minor dispersion appears at larger system sizes with $N=16$ and $20$, indicating expected noise amplification, the fundamental alignment remains uncompromised.

We employ two complementary methods to visualize the distribution of the data. First, we use histograms with ten equal-width bins, where the heights are normalized so that the total area equals one, providing a straightforward view of the distribution’s shape. Second, we apply Gaussian kernel density estimation with a bandwidth of $\eta \approx 0.35$, which produces a smooth, continuous estimate of the underlying probability density function. These two approaches together offer both discrete and continuous perspectives on the data distribution. As indicated in  FIG.~\ref{sfig:gap}\text{b}, experimental results indicate that both metrics reveal substantial distributional overlap between experimental and ideal outputs. 

Quantitative analysis under the measures introduced in SM~\ref{apd:eva_hcs}  reveals consistently strong agreement across all system sizes $N$. As shown in  FIG.~\ref{sfig:gap}\text{c}, MSE remains below $1.32 \times 10^{-3}$, Pearson's $R$ exceeds 0.98, and WD stays below 0.0252. These metrics collectively validate that experimental quantum processor outputs maintain structural correspondence with ideal predictions across diverse system sizes and training configurations, confirming hardware reliability for our pre-training VQE experiments.

\subsubsection{Numerical analysis of training data quality} \label{apd: data_quality}

\begin{figure*}[t]
\centering
\includegraphics[width=0.9\textwidth]{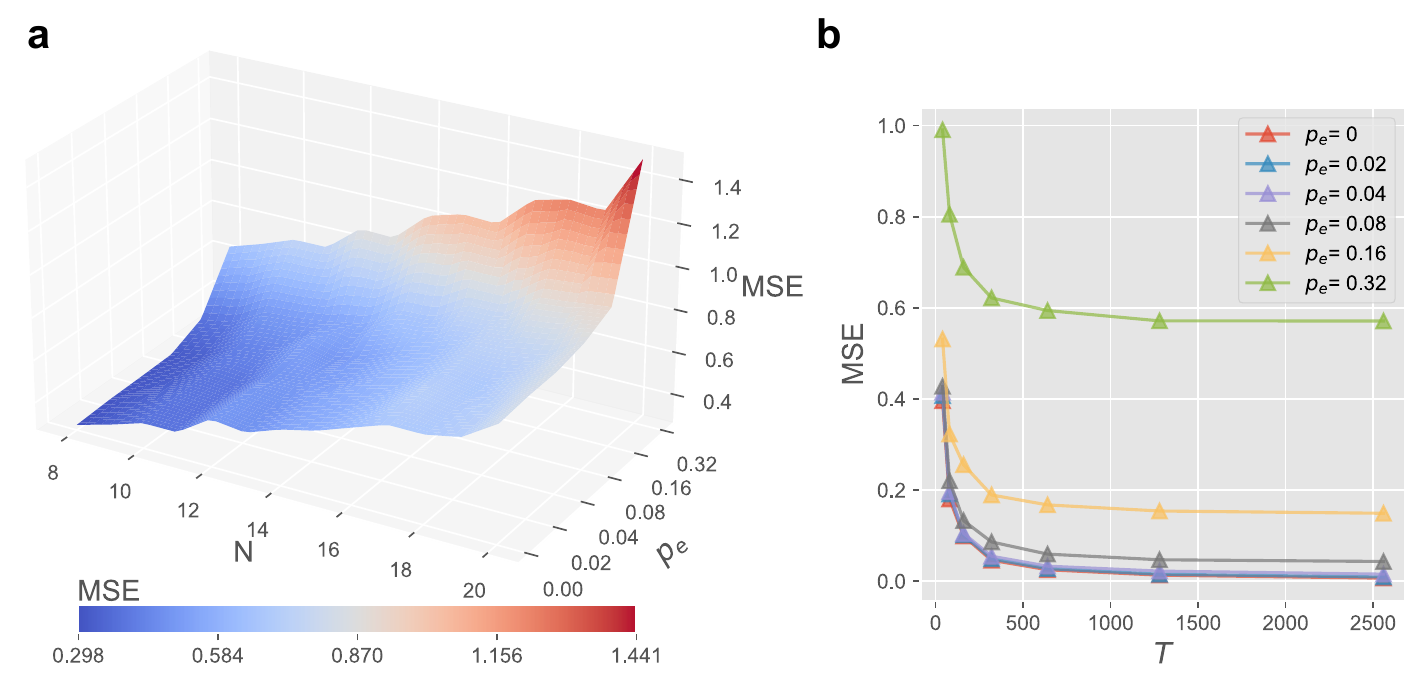}
\caption{\small\textbf{Performance of shadow estimations under varying training data quality.} 
\textbf{a.} Mean squared error (MSE) between the Pauli-based classical shadow estimations $\Tr(\tilderho_{T}(\bxi)\mathsf{H}_{\mathsf{TFIM}})$ and the ground truth expectation values $\Tr(\rho(\bxi)\mathsf{H}_{\mathsf{TFIM}})$ with varying system sizes $N \in \{8,10,\dots,20\}$ and bit-flip error probabilities $p_e \in \{0, 0.02, \cdots, 0.32\}$. The snapshot count $T$ is fixed to be $T=20$. 
\textbf{b.} MSE with varying snapshot counts $T \in \{20,40,\cdots,2560\}$ and the bit-flip error probabilities $p_e \in \{0, 0.02, \cdots, 0.32\}$, while the system size is fixed to be $N=20$.}
\label{sfig:biterror}
\end{figure*}

Recall that in the scalability experiments (i.e., FIG.~3\text{a} in the main text), the number of snapshots $T$ is slightly increased for each training example, which is from $T=20$ for $N=8$ to $T=100$ for $N=20$. The main reason for this adjustment is to suppress the error incurred by shadow estimation. As indicated in Theorem~1 and Lemma~\ref{lem:estimation-error-geo-kernel}, the shadow estimation error dominates the prediction error of $h_{\mathsf{cs}}$, scaling as $3^KB$. 
As the number of qubits increases, the observable norm $B$ also grows, requiring an increased number of snapshots $T$ to suppress the shadow estimation error. Fortunately, this adverse effect can be mitigated by increasing the number of snapshots $T$. However, it is noteworthy that the experimental noise introduces bit-flip errors with probability $p_e$, which can degrade the performance of the predictive surrogate even with an increased number of $T$. In the following,  we conduct numerical simulations to explore how the system size $N$, the bit-flip error $p_e$, and the number of measurements $T$ affect the estimation error. 

To achieve this goal, we employ a family of $N$-qubit TFIM models with $J=-0.1$ and $h=-0.5$. The circuit implementation is the same as those introduced in SM~\ref{append:sec-exp:subsec:circuit-imp}. We systematically explored the following parameter ranges, i.e.,  $N \in \{8,10,\dots,20\}$,  $T \in \{20,40,\dots,2560\}$, and $p_e \in \{0, 0.02, 0.04, 0.08, 0.16, 0.32\}$. For each configuration $(N, T, p_e)$, we performed $n=1000$ independent trials. For each trial, we compute exact expectation values $\Tr(\rho(\bxi)\mathsf{H}_{\mathsf{TFIM}})$ as ground truth, record noisy Pauli-based classical shadows $\tilderho_{T}(\bxi)$ with the bit-flip error $p_e$, and calculate the shadow estimation $\Tr(\tilderho_{T}(\bxi)\mathsf{H}_{\mathsf{TFIM}})$. After that, we exploit MSE to measure the discrepancy between the ground truth and shadow estimations across all $n$ trials.

The scaling behavior of the predictive surrogate with fixed $T=20$ is shown in FIG.~\ref{sfig:biterror}\text{a}. An observation is that the achieved  MSE exhibits a strong positive correlation with both system size $N$ and bit-flip error $p_e$. Specifically, at $N=20$ and $p_e=0.32$, MSE reaches the maximum value $1.49$, which is approximately $5$ times higher than that observed at $N=8$ with $p_e=0$. These results hint that with well-controlled $p_e$, the estimation error can be effectively suppressed with a slightly increased $T$.

We next turn to explore the scaling behavior of the predictive surrogate with the fixed $N=20$. The achieved simulation results are exhibited in FIG.~\ref{sfig:biterror}\text{b}. In particular, MSE decreases with increasing $T$ across all $p_e$, while higher values of $p_e$ tend to elevate the entire curve. For example, when $p_e=0.32$, the MSE plateaus around  $0.571$ even as $T$ increases, while setting $p_e=0$ enables MSE to drop below $0.006$. These observations further confirm the necessity of increasing $T$ in our scalability experiments. Given that the employed quantum processor exhibits relatively low $p_e$,  increasing $T$ slightly is an effective strategy for controlling estimation errors and maintaining consistent data quality across different system sizes.

\subsubsection{Measurement reduction analysis}\label{vqe_measurement_reduction}
\begin{figure*}[t]
	\centering\includegraphics[width=0.7\textwidth]{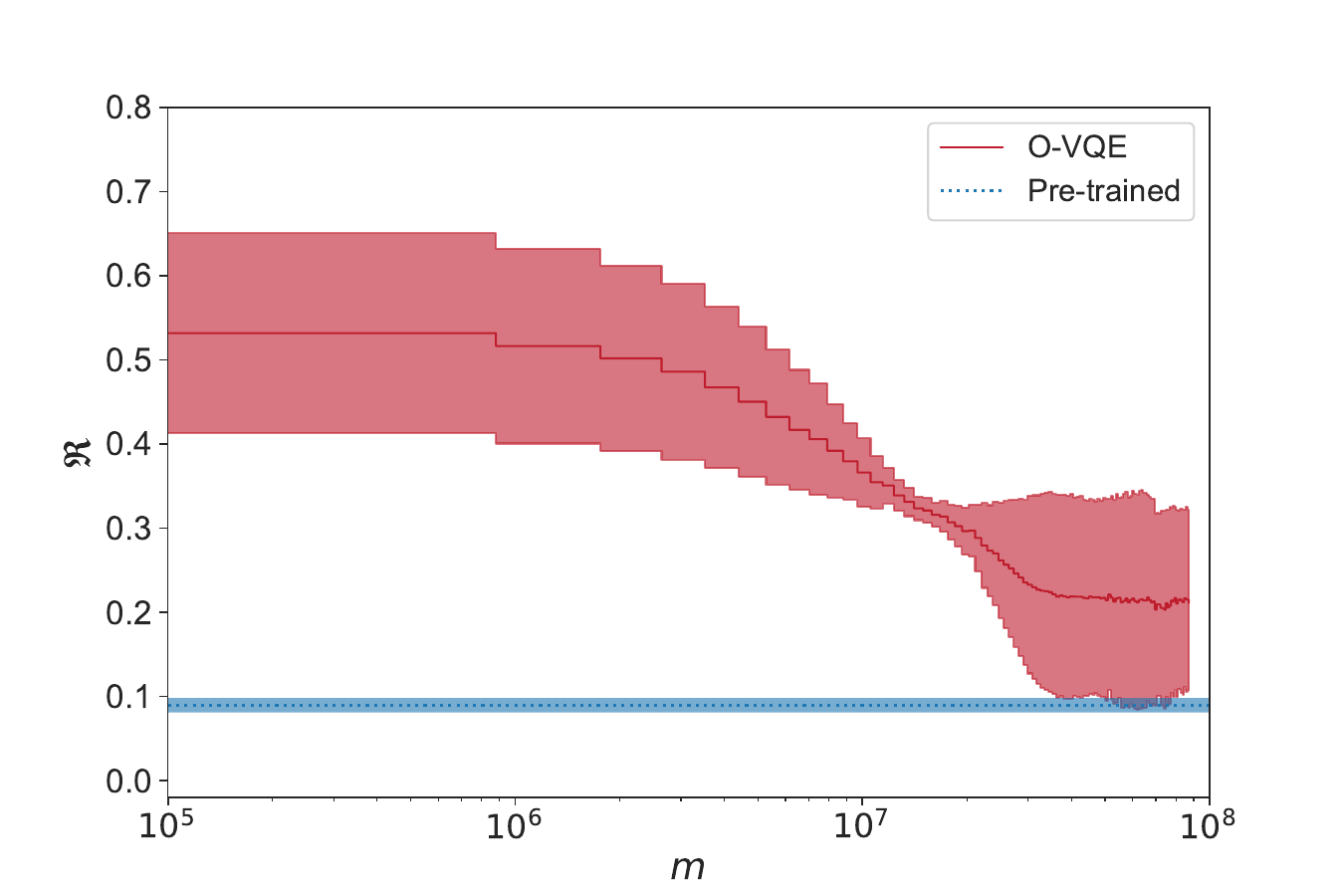}
	\caption{\small{\textbf{The normalized deviation $\mathfrak{R}$ versus the number of measurements used in the optimization of the original VQEs.}} The red solid line labeled by `O-VQE' refers to the normalized deviation $\mathfrak{R}(\bx)$ of the original VQEs. The blue dashed line labeled by `Pre-trained' corresponds to the experimental results of the pre-training phase for the predictive surrogate, serving as the baseline reference. The predictive surrogate requires only $m=20000$ measurements for training. The shaded regions represent the standard deviations across $5$ independent experiments.}
	\label{sfig:resource_reduction}
\end{figure*}

The measurement reduction analysis conducted here quantifies the measurement efficiency of our pre-training protocol, where the relevant results are supplementary to FIG.~2\text{c}. In particular, the analysis of resource consumption, or equivalently, the total number of measurements, is calculated as follows. 

For a conventional VQE, the gradient of each optimization step is computed by parameter-shift rules, requiring $2d$ circuit evaluations with $d=11$ being the parameter count used in our experiments. Moreover, the expectation value in the parameter-shift rules is estimated by $40,000$ shots to ensure the statistical precision. Accordingly, the number of measurements per-step cost is $2 \times 40,000 \times 11 = 880,000$ shots. When the number of optimization steps is $100$, the total number of measurements is $100 \times 880,000 = 88,000,000$. 

For the proposed pre-training protocol based on the predictive surrogate, the initial pre-training phase employs $n=2000$ training examples with $T=10$ snapshots each, requiring only $20,000$ shots—a mere $0.023\%$ of conventional VQE's resource consumption. Crucially, as demonstrated in FIG.~\ref{sfig:resource_reduction}, this phase (blue dashed) already achieves superior performance with normalized deviation $\mathfrak{R} = 0.090 \pm 0.009$, outperforming conventional VQE's $\mathfrak{R} = 0.211 \pm 0.109$(red solid) attained at full $88,000,000$-shot cost.

These results conclusively demonstrate that strategic classical pre-training fundamentally reshapes VQE optimization dynamics, enabling high-precision solutions with orders-of-magnitude reduction in quantum measurement requirements.

\subsection{Predicting FSPT phase transition} \label{apd:FSPT_add_results}

This subsection presents implementation details and supplemental results for FSPT phase identification. We first describe the circuit implementation for FSPT phase identification in SM~\ref{apd:FSPT_circuit}. We then introduce the omitted details of constructing the predictive surrogate $h_{\mathsf{qs}}$ in SM~\ref{append:subsec:exp_implement_detail_mb}. Next, we validate the prediction performance of surrogates \(h_{\mathsf{qs}}\) during the optimization process in SM~\ref{apd:eva_hmb}. We then demonstrate consistency between surrogate predictions and noisy numerical simulations for phase transitions identification in SM~\ref{apd:numerical_simu_FSPT}. Finally, we quantify measurement overhead reduction achieved via surrogate-enhanced critical region identification in SM~\ref{apd:FSPT_measurement_reduction}.
 
\subsubsection{Circuit implementation for FSPT phase identification}\label{apd:FSPT_circuit}

We follow the conventions of Ref.~\cite{zhang_digital_2022} and the description in SM~\ref{appendix:sec-F:subsec:FSPT} to construct the quantum circuit $U(\bx)$ for FSPT phase identification. Recall that the time-evolution unitary $U(\delta,\bm{J},t)$ is composed of two distinct components: $U_1(\delta,t) = e^{-\imath t\mathsf{H}_{\mathsf{tp},1}}$ and $U_2(\bm{J},t) = e^{-\imath t\mathsf{H}_{\mathsf{tp},2}}$, implemented sequentially over driving cycles $k\mathrm{T}_1$, namely
\begin{equation}
    U(\delta,\bm{J},k\mathrm{T}_1) = \begin{cases}
    \left[U_2(\bm{J},2\mathrm{T}_1)U_1(\delta,2\mathrm{T}_1)\right]^{k/2}, & \mbox{$k$ is even}, \\
    U_1(\delta,2\mathrm{T}_1)\left[U_2(\bm{J},2\mathrm{T}_1)U_1(\delta,2\mathrm{T}_1)\right]^{(k-1)/2}, & \mbox{$k$ is odd}.
            \end{cases} \nonumber
\end{equation}

The circuit implementation of $U(\delta,\bm{J},k\mathrm{T}_1)$ is shown in FIG.~\ref{sfig:cir_layout2}\text{a}. Specifically, the operator $U_1(\delta,2\mathrm{T}_1)$, generated by the Hamiltonian $\mathsf{H}_{\mathsf{tp},1}$, contains only single-qubit $\RX$ gates. By setting $2\mathrm{T}_1=2$, its explicit form is \[U_1(\delta,2)=\prod_{i=1}^N\RX_i(2\pi-\delta).\] In addition, the unitary operator $U_2(\bm{J})$, generated by the Hamiltonian $\mathsf{H}_{\mathsf{tp},2}$ involving the three-body interaction, could be approximately implemented using only two-body gates with precision over $0.9999$ by employing the neuroevolution algorithm proposed by Ref.~\cite{lu2021markovian}. In particular, the quantum circuit has the form  \[U_2(\bm{J},2)=\prod_{i=1}^{N-1} \CRZ_{i+1,i}(\pi) \prod_{i=2}^{N-1}\RY(-2J_i) \prod_{i=1}^{N-1} \CRZ_{i+1,i}(-\pi).\] 

The experimental implementation of $U(\bx)$ is shown in FIG.~\ref{sfig:cir_layout2}\text{b}, where the nearest-neighbor connectivity matches the interaction structure of $U_2$. When expressing  $U(\delta,\bm{J},k\mathrm{T}_1)$ as the quantum circuit $U(\bx)$ regarding the parameter $\bx$, the relation between $\delta,\bm{J},t=k\mathrm{T}_1$ and the parameter $\bx$ for an even number $k$ is \[\bx=(\delta,\cdots,\delta,J_2,\cdots,J_{N-1})^{k/2},\]
where the gate parameters $\delta$ and $\bm{J}=(J_2,\cdots,J_{N-1})$ are correlated.

\begin{figure*}[t]
    \centering\includegraphics[width=0.6\textwidth]{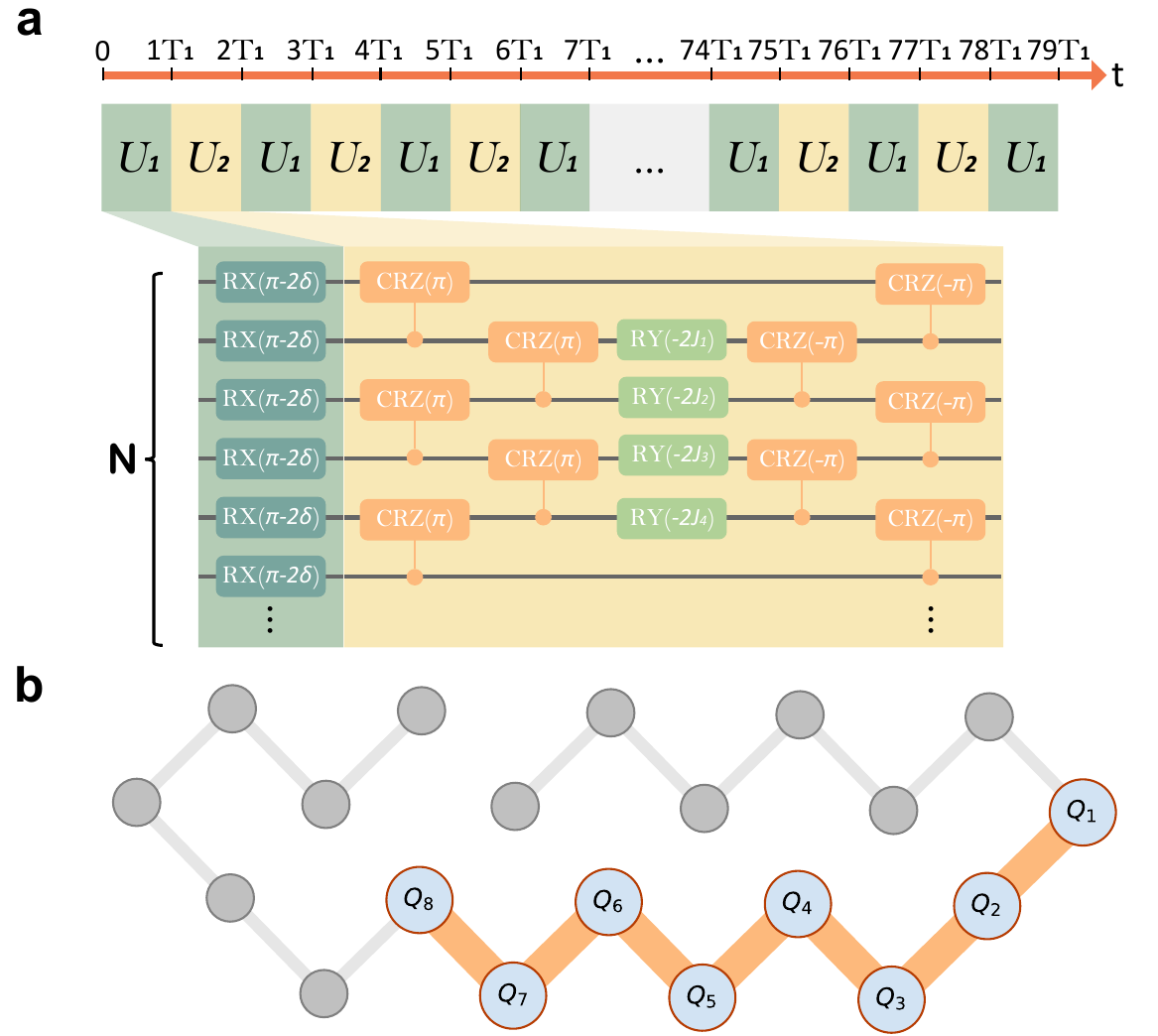}
    \caption{\small{\textbf{Implementation of quantum circuits for the FSPT phase identification.} \textbf{a.} The quantum circuit $U(\bx)$ consists of two alternating layers $U_1$ and $U_2$, with the circuit depth determined by the evolution time $t$. The layer $U_1$ consists of $N$ single-qubit rotation gates $\RX(\pi-2\delta)$. The layer $U_2$ consists of $(N-2)$ single-qubit rotation gates $\RY(-2J_i)$ with $i\in\{2,\cdots,N-1\}$ and $2(N-1)$ two-qubit rotation gates $\CRZ$ with parameter $\pm \pi$. \textbf{b.} The layout of the physical qubits in the employed quantum processors with nearest-neighbor connectivity.}}
    \label{sfig:cir_layout2}
\end{figure*}

\subsubsection{Experimental implementation details}\label{append:subsec:exp_implement_detail_mb}

We next present the details of the construction of the predictive surrogates $\{h_{\mathsf{qs}}^{(i,t)}\}$ for different observables $\{Z_i\}_{i=1}^N$ at different evolution times $t$, and the hyperparameter settings in identifying the critical point with $h_{\mathsf{qs}}$.

\smallskip
\noindent\underline{\textit{Construction of the predictive surrogates $h_{\mathsf{qs}}^{(i,t)}$.}} As explained in SM~\ref{append:subsec:workflow_FSPT}, we need to construct $Nn_k$ separate predictive surrogates $\{h_{\mathsf{qs}}^{(i,t)}\}$ to emulate the mean-value behavior of $\braket{Z_i(\bx)}$, where the parameter $\bx$ is specified by $(\delta,\bm{J},t)$. As the process for constructing these predictive surrogates is the same, our main focus is on describing the construction procedure of $h_{\mathsf{qs}}^{(1,t')}$ for the fixed observable $Z_1$ and a quantum circuit $U(\bx)$ with the evolution time being $t'$. For ease of notation, we henceforth denote $h_{\mathsf{qs}}^{(1,t')}$simply as $h_{\mathsf{qs}}$.

We begin by constructing the dataset $\mathcal{T}_{\mathsf{mb}} = \{(\bx^{(l)}, g(\bx^{(l)}))\}$, where $g(\bx^{(l)})$ represents the estimated mean value of $\braket{Z_1(\bx^{(l)})}$. To do this, we first generate classical input $\bx$, randomly sampling from the distribution $\mathbb{D}$, and use it as input to the specified $N$-qubit quantum processors to obtain the pre-measured state $\tilde{\mathcal{U}}(\bx)[(\ket{0}\bra{0})^N]$. We then apply the relevant Pauli measurement $Z_1$ to form a single training example $(\bx, g(\bx))$. This procedure is repeated $n$ times to build the training dataset $\mathcal{T}_{\mathsf{mb}}$.

Given access to  $\mathcal{T}_{\mathsf{mb}}$, the predictive surrogate $h_{\mathsf{qs}}(\bx) = \braket{\bm{\Phi}_{\Omega(\Lambda)}(\bx), \bm{\mathrm{w}}}$ is optimized by solving the ridge regression model in Eq.~\eqref{append:eq:ridge_reg_model_truncated}. The feature map $\bm{\Phi}_{\Omega(\Lambda)}$ is a subvector of the truncated feature vector $\bm{\Phi}_{\mathfrak{C}(\Lambda)}$, constructed by randomly and uniformly sampling $|\Omega(\Lambda)|$ elements without replacement from the set of trigonometric monomials $\{\Phi_{\bomega}(\bx) | \bomega \in \mathfrak{C}(\Lambda)\}$. This design with a dimension $|\bm{\Phi}_{\mathfrak{C}(\Lambda-1)}| < |\bm{\Phi}_{\Omega(\Lambda)}(\bx)| < |\bm{\Phi}_{\mathfrak{C}(\Lambda)}|$, which balances the computational cost of $\bm{\Phi}_{\mathfrak{C}(\Lambda)}(\bx)$ and the error introduced by $\bm{\Phi}_{\mathfrak{C}(\Lambda-1)}(\bx)$.

\smallskip
\noindent \underline{\textit{Hyperparameter settings.}} The training dataset $\mathcal{T}_{\mathsf{mb}}$ for constructing each $h_{\mathsf{qs}}^{(i,t)}$ is set with the size $n=250$. The training parameters are sampled from the Beta distribution with $\delta \sim \text{Beta}(\alpha=0.9,\beta=2)$ to enhance sampling density near the critical region $\delta=0$, and the disorder coupling parameters $\bm{J}$ are independently sampled from the uniform distribution over $[0,2]^N$. The evolution time is set to $t\in\{k\mathrm{T}_1:k\in [n_k]\}$ with $n_k=79$. For each parameter pair $(\delta,\bm{J})$ and time $t$, the local spin magnetization values are estimated from $T=40000$ measurements.  For constructing the predictive surrogates $h_{\mathsf{qs}}$, we set the regularization parameter to $\lambda = 1$ and construct the feature map $\bm{\Phi}_{\Omega(\Lambda)}$ by randomly sampling $1000$ features from $\mathfrak{C}(\Lambda)$ with the frequency threshold set to $\Lambda =7$. 

\subsubsection{Evaluation of the prediction performance of $h_{\mathsf{qs}}$ during the optimization process}\label{apd:eva_hmb}
\begin{figure*}[htbp]
	\centering
	\includegraphics[width=0.9\textwidth]{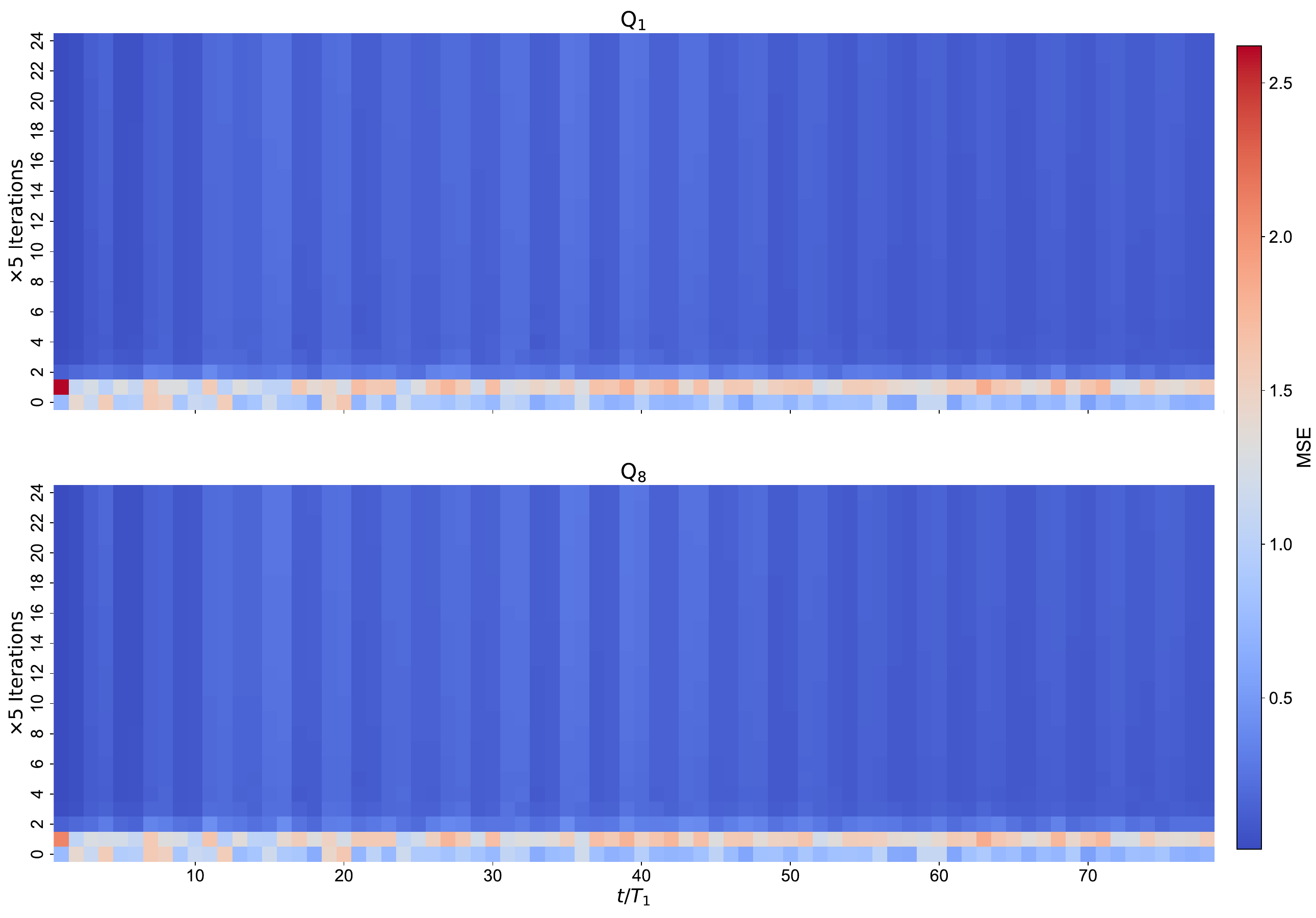}
	\caption{\small \textbf{The prediction performance of the surrogates during the optimization process.} The heat maps present the mean squared error (MSE) between the outputs of quantum processors and the outputs of the surrogates $h_{\mathsf{qs}}(\bx,\bm{\mathrm{w}}(j))$, where $\bm{\mathrm{w}}(j)$ refers to the optimized parameter after $j$ updates. The upper and lower panels plot the experimental results for the boundary qubits $Q_1$ and $Q_8$, respectively.
    The $x$-axis and $y$-axis refer to the evolution time $t/\mathrm{T}_1$ and the iteration step $j$, respectively. The label `$\times 5$' refers that the scales of the $y$-axis are scaled up by $5$ times. The results are aggregated from 10 independent trials. }
	\label{sfig:eva_hmb}
\end{figure*}

In this subsection, we evaluate the prediction performance of the surrogate $h_{\mathsf{qs}}(\bx,\bm{\mathrm{w}})=\braket{\bm{\Phi}_{\Omega(\Lambda)}(\bx), \bm{\mathrm{w}}}$ during the optimization process of $\bm{\mathrm{w}}$. Specifically, let $\bm{\mathrm{w}}(j)$ be the optimized parameter after $j$ updates in the optimization problem defined by Eq.~\eqref{append:eq:ridge}. We assess the prediction performance of the surrogates $h_{\mathsf{qs}}(\bx,\bm{\mathrm{w}}(j))$ during the training of  $\bm{\mathrm{w}}(j)$ with $j\in [480]$ using a validation set $\{(\bx^{(l)},\braket{Z_i(\bx^{(l)})})\}_{l=1}^{n_v}$ of size $n_v=50$. The predictive performance is measured by MSE between the output of the predictive surrogate $h_{\mathsf{qs}}(\bx,\bm{\mathrm{w}}(j))$ and the processor output $\braket{Z_i(\bx)}$, i.e.,
\begin{equation}
    \text{MSE}(j)=\frac{1}{n_v}\sum_{l=1}^{n_v}\left(h_{\mathsf{qs}}(\bx^{(l)},\bm{\mathrm{w}}(j))-\braket{Z_i(\bx^{(l)})}\right)^2.
\end{equation}

Given the critical role of boundary qubits in phase identification, we focus on the predictive surrogates $h_{\mathsf{qs}}^{(i,t)}$ for $i=1, N$ with $N=8$ across all evolution times $t \in \{ k\mathrm{T}_1: k =1,2...,79\}$. Figure.~\ref{sfig:eva_hmb} illustrates the training dynamics of $\text{MSE}(j)$ for various surrogates $h_{\mathsf{qs}}^{(i,t)}$ over $10$ independent trials. In each trial, the frequency set $\Omega(\Lambda)$ with $\Lambda=7$ is generated using distinct random seeds. The results show that $\text{MSE}(j)$ decreases rapidly within initial iterations for $j\in [20]$, then stabilizes at low values, indicating efficient convergence for both boundary qubits.

The complete experimental results for all qubits $i\in [N]$ with $N=8$ and various evolution times $t$ are provided in Tab~\ref{tab:mse_comparison}. The optimization process significantly reduces MSE across all configurations, with the optimized parameter $\bm{\mathrm{w}}$ typically achieving lower MSE up to two orders of magnitude than initial values. Notably, the standard deviations decrease substantially after optimization, indicating enhanced prediction stability. These consistent improvements confirm the effectiveness of the optimization procedure and the reliability of  $h_{\mathsf{qs}}$ for phase identification tasks. 

\subsubsection{Noisy numerical simulations for FSPT phase identification}\label{apd:numerical_simu_FSPT}
\begin{figure*}[t]
    \centering
    \includegraphics[width=0.5\textwidth]{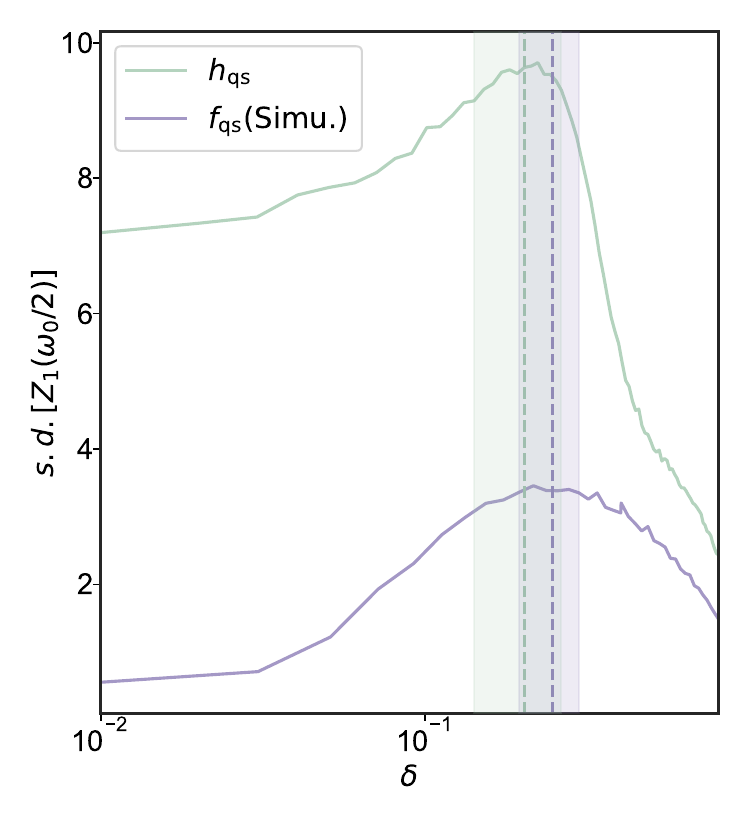}
    \caption{\small{
    \textbf{Experimental results for the critical region identification via noisy numerical simulations.} The green curve labeled `$h_{\mathsf{qs}}$' and the purple curve labeled `$f_{\mathsf{mb}}$(Simu.)' represent the 
    estimated subharmonic variance $\mathrm{s.d.}[Z_1(\omega_0/2)]$ as a function of the perturbation strength $\delta$, obtained from the surrogate predictions and noisy numerical simulations, respectively. The green and purple dashed vertical lines indicate the identified transition points, which refer to $0.202$ for surrogates and $0.246$ for the noisy numerical simulation. Moreover, the identified critical regions for these two methods are highlighted by the green and purple shadings, respectively.}}
    \label{sfig:trans}
\end{figure*}

Recall from FIG~4\text{c} in the main text that the critical region is confirmed through experiments involving a small set of parameters $\{(\delta^{(l)},\bm{J}^{(l,s)})\}_{l,s}^{n_v,S}$ with $n_v=6$ and $S=50$ on quantum processors. In this subsection, we expand this analysis to the entire identified critical region by conducting noisy numerical simulations, thereby further validating the predictive capability of the proposed surrogates. In particular, the driven perturbation $\delta^{(l)}$ is sampled from the $40$ equally spaced grid points within the range of $[0.01,0.8]$. For each $\delta^{(l)}$, the number of sampled coupling parameter $\bm{J}^{(l,s)}$ is set as $S=5000$. We separately estimate the subharmonic variance, denoted by $\mathrm{s.d.}[Z_{1}(\delta^{(l)}, \omega_0/2)]$, over the varied parameters $\{\bm{J}^{(l,s)}\}$ by using the predictive surrogate $h_{\mathrm{mb}}$ and classically simulating the noisy quantum circuits. Here, we implement the noisy simulation by applying depolarization and thermal relaxation noise channels to each base gate operation, matching the characterization of experimental devices as summarized in FIG.~\ref{sfig:device}.

The predicted subharmonic peak height as a function of $\delta$, obtained from the surrogates $h_{\mathsf{qs}}$ and the noisy classical simulation, is illustrated in FIG.~\ref{sfig:trans}. 
The results show that the critical region identified by noisy classical simulation exhibits remarkable consistency with that identified by the predictive surrogates. In particular, the predicted point with the largest variance is $0.202$ for the predictive surrogate and is $0.246$ for noisy classical simulations, with a minor discrepancy of $0.044$. Moreover, employing $95\%$ peak-height width criterion, the critical region is defined as the region of $\delta$ where $\mathrm{s.d.}[Z_{1}(\delta,\omega_0/2)]$ exceeds $95\%$ of its maximum value. This yields overlapping critical regions $\delta \in [0.152, 0.253]$ identified by predictive surrogates (as highlighted by the green shading) and $\delta \in [0.195, 0.297]$ identified by the noisy classical simulations (as highlighted by the purple shading). This minor discrepancy indicates that the proposed predictive surrogates provide conservative critical region estimates, serving as reliable lower-bound indicators for experimental phase transition identification.

\subsubsection{Measurement overhead reduction}\label{apd:FSPT_measurement_reduction}
This subsubsection quantitatively compares the measurement costs between our surrogate-based approach and conventional quantum methods. We first note that assessment methodologies are not inherently unique due to varying resource accounting frameworks; different definitions of measurement cost arise from choices in attributing computational overhead to classical versus quantum components. In this analysis, we adopt two specific comparisons: the experimental quantum measurement count in the surrogate workflow, and the equivalent quantum overhead if all computations performed by predictive surrogates were instead executed directly on the quantum processor.

For the surrogate-enabled workflow, the total cost combines: 1) $n = 250$ full quantum evolutions to train the $1 \times 79$ predictive surrogates $h_{\mathsf{qs}}$, and classical predictions identifying $\delta^* \in [0.101, 0.282]$ ($\Delta\delta_{\text{pred}} = 0.181$) across $100 \times 5,000$ virtual instances. 2)Experimental validation of the predicted transition region requires quantum measurements only within the reduced parameter range $\Delta\delta_{\text{pred}}$, corresponding to $18.1\%$ of the full range width ($\Delta\delta_{\text{full}} = 1.0$). Compared to conventional approaches requiring exhaustive sampling over $\delta \in [0, 1]$, our method achieves multiplicative savings by simultaneously reducing both spatial sampling range and required training evaluations.

To highlight the quantum resource savings inherent to our hybrid approach, we calculate the overhead if all surrogate predictions ($100 \times 5,000$ virtual instances) were executed on the quantum processor instead of classically. This hypothetical scenario would require $500,000$ full quantum evolutions. This overhead exceeds our actual experimental training cost of 250 quantum evolutions by a factor of $2,000$ (i.e., three orders of magnitude).

The most significant measurement reduction occurs during experimental validation. By concentrating quantum sampling within the surrogate-predicted critical region ($18.1\%$ of $\Delta\delta_{\text{full}}$), we achieve an $81.9\%$ reduction in parameter search width while maintaining precision. This is numerically verified in SM~\ref{apd:numerical_simu_FSPT}, where the predicted transition midpoint $\delta_{\text{surrogate}} = 0.202$ deviates by only $\Delta\delta=0.044$ ($4.4\%$ relative to $\Delta\delta_{\text{full}}$) from the noise-simulated $\delta^* = 0.246$. Our approach thus achieves sub-$5\%$ midpoint accuracy while reducing the experimental search space to $18.1\%$ of the original range, demonstrating efficient phase characterization through surrogate-guided parameter space pruning.

\subsection{Performance of the predictive surrogates with varied noise levels} \label{apd:vali_noise}
\begin{figure*}[t]
	\centering
	\includegraphics[width=0.85\textwidth]{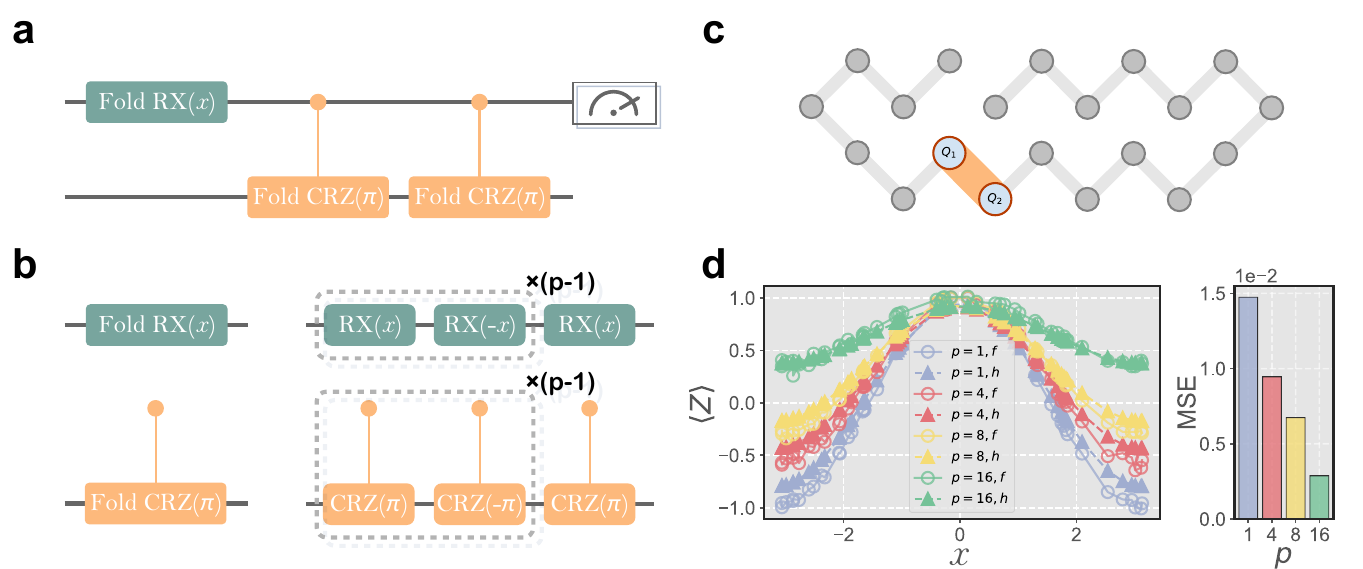}
	\caption{\small\textbf{Performance of $h_{\mathsf{cs}}$ with amplified noise levels.} 
	\textbf{a.} The implementation of the $2$-qubit benchmark circuit $U(x)$, where the $Z$-measurement is applied to the first qubit. 
	\textbf{b.} Gate folding implementation.  Each gate $G$ in $U(x)$ is replaced by $G \prod_{i=1}^p(G^\dagger G)$, preserving unitary equivalence while amplifying noise through $p$-fold repetition. 
	\textbf{c.} The two physical qubits employed in the superconducting quantum processor. 
	\textbf{d.} The left panel shows the  comparison of quantum processor outputs $f(x)$ (solid color lines) and predictive surrogate predictions $h_{cs}(x)$ (dashed color lines) for folding factors $p \in \{1,4,8,16\}$. 
	The right panel illustrates the mean squared error (MSE) between predictions and mean values obtained from the employed quantum processor.}
	\label{sfig:veri}
\end{figure*}

This subsection presents an experimental characterization of the predictive surrogate’s performance under varying noise levels. Recall that Theorem 1 establishes that, when system noise is modeled by a Pauli channel $\mathcal{N}_{P}$, the performance of the predictive surrogate $h_{\mathsf{cs}}$ improves as the noise level $P$ increases. Our primary objective here is to experimentally investigate whether, in practical scenarios, $h_{\mathsf{cs}}$ more accurately emulates quantum processor behavior as the noise levels increase.

We proceed with this exploration by comparing the output of the employed quantum processor $f(\tilderho(\bx), O)$ and the prediction of  $h_{\mathsf{cs}}(\bx)$ under distinct noise levels. In particular, the experimental setup is as follows. The employed quantum circuit takes the form as
\[U(\bx) =  \text{CRZ}_{1,2}(\pi) \cdot \text{CRZ}_{1,2}(\pi) \cdot \RX_1(\bx),\]
where $d=1$ and $\text{CRZ}_{1,2}$ refers to the controlled-$\RZ$ gate with the controlled qubit being $Q1$. The initial state is fixed to be $\rho_0=\ket{00}\bra{00}$ and the observable is set to $O=Z\otimes \mathbb{I}_2$. To vary the noise levels, we employ gate folding, a strategy inspired by quantum error mitigation~\cite{kandala2017hardware}. As shown in FIG.~\ref{sfig:veri}\text{b}, to amplify noise by $p$ folds,  each gate $G$ is replaced with the gate sequence $G \to G\prod_{i=1}^p(G^{\dagger}G)$, where each additional $G^{\dagger}G$ pair increases the effective noise experienced by the circuit. Accordingly, the explicit form of the quantum circuit implementing a $p$-fold noise amplification is given by
\[U(\bx;p) =  \Big (\text{CRZ}_{1,2}(\pi) \prod_{i=1}^p\big(\text{CRZ}_{1,2}(\pi)\text{CRZ}_{1,2}(\pi)\big) \Big) \cdot \Big (\text{CRZ}_{1,2}(\pi) \prod_{i=1}^p\big(\text{CRZ}_{1,2}(\pi)\text{CRZ}_{1,2}(\pi)\big) \Big) \cdot  \RX_1(\bx) \Big ( \prod_{i=1}^p\big(\RX_1(\bx)\RX_1(\bx)\big) \Big).\]
An intuition of the circuit implementation is shown in FIG.~\ref{sfig:veri}\text{a}. In ideal noiseless conditions, this circuit would produce the exact cosine response with $f(\rho(\bx), O) = \langle Z\rangle = \cos(\bx)$. 
   
In our experiments, we vary the folding factor with $p \in \{1,4,8,16\}$. 
For each $p$, we configure the predictive surrogate $h_{\mathsf{cs}}$ with the truncation value $\Lambda=1$ and adopt the similar hyperparameter settings in SM.~\ref{apd:hyper}. In particular, we uniformly sample both training and test points across the parameter space $\bx \in [-\pi,\pi]$, collecting $n=50$ training examples with $T=1000$ snapshots each, and $50$ test points with $T=20,000$ measurement shots each to ensure the statistical significance.

The achieved experimental results are demonstrated in FIG.~\ref{sfig:veri}\text{d}. An observation is that for all different settings of $p$, the optimized predictive surrogates accurately predict the noisy output of quantum processors. Crucially, with the increased $p$, the MSE between $f(\tilderho(\bx),O)$ and $h_{\mathsf{cs}}(\bx)$ decrease, which is from $0.0147$ at $p=1$ to $0.0028$ at $p=16$. This inverse relationship indicates that the proposed predictive surrogate can accurately emulate the mean-value behavior of quantum processors in practice, particularly as the noise level increases.



\end{document}